\DeclareMathOperator{\disc}{disc}
\DeclareMathOperator{\herdisc}{herdisc}
\DeclareMathOperator{\col}{col}
\newcommand{\uni}{\mathcal{U}}
\newcommand{\range}{\mathcal{R}}
\begin{document}

\title{Optimal Private Halfspace Counting via Discrepancy}

\author{S. Muthukrishnan\thanks{Rutgers University,
    muthu@cs.rutgers.edu} \and Aleksandar Nikolov\thanks{Rutgers
    University anikolov@cs.rutgers.edu}}

\maketitle
\begin{abstract}
A \emph{range counting} problem is specified by a set $P$ of size
$|P| = n$ of points in $\mathbb{R}^d$,  an integer {\em weight} $x_p$
associated to each point $p \in P$,
and a \emph{range space} $\range \subseteq 2^{P}$. Given 
a query range $R \in \range$, the output is  $R(\vec{x}) = \sum_{p \in R}{x_p}$. 
The \emph{average squared error} 
of an algorithm $\alg$ is
$  \frac{1}{|\range|}\sum_{R \in \range}{\left(\alg(R, \vec{x}) - R(\vec{x})\right)^2}$. 
Range counting for different range spaces is a central problem in Computational Geometry. 

We study $(\eps, \delta)$-differentially private algorithms for range
counting.  Our main results are for the range space given by
hyperplanes, that is, the halfspace counting problem. We present an
$(\eps, \delta)$-differentially private algorithm for halfspace
counting in $d$ dimensions which is $O(n^{1-1/d})$ approximate for
average squared error. This contrasts with the $\Omega(n)$ lower bound
established by the classical result of Dinur and
Nissim~\cite{Dinur2003} on approximation
for arbitrary subset counting queries.  We also show a matching lower bound of $\Omega(n^{1-1/d})$ approximation for any $(\eps, \delta)$-differentially  private algorithm for halfspace counting. 

Both bounds are obtained using discrepancy theory. For the lower
bound, we use a modified discrepancy measure and bound approximation
of $(\eps, \delta)$-differentially private algorithms for range
counting queries in terms of this discrepancy. We also relate the
modified discrepancy measure to classical combinatorial discrepancy,
which allows us to exploit known discrepancy lower bounds. This
approach also yields a lower bound of $\Omega((\log n)^{d-1})$ for
$(\eps, \delta)$-differentially private {\em orthogonal} range
counting in $d$ dimensions, the first known superconstant lower bound
for this problem. For the upper bound, we use an approach inspired by
partial coloring methods for proving discrepancy upper bounds, and
obtain $(\eps, \delta)$-differentially private algorithms for range
counting with polynomially bounded shatter function range spaces.
\end{abstract}

\section{Introduction}

A \emph{range counting} problem is specified by a set $P$ of size $|P|
= n$, and a \emph{range space} $\range \subseteq 2^{P}$. Given a query
range $R \in \range$, the output is $|\{ p \in P \cap R\}|$.  More
generally, each point $p \in P$ has an integer {\em weight} $x_p$ and
the range returns $R(\vec{x}) = \sum_{p \in R}{x_p}$.  This problem is
fundamental in Computational Geometry and a workhorse in applications,
for various examples of range spaces from axis-parallel boxes
(orthogonal range counting), to regions bounded by hyperplanes
(halfspace counting) and beyond (e.g., simplices).  Orthogonal range
counting is commonly used in databases and data analysis.  Halfspace
counting is not only interesting in itself, but general algebraic
range counting can be ``lifted'' to a higher dimension and encoded as
halfspace counting~\cite{YaoY85}.

We study privacy of range counting. In private range counting the set
$P$ of points as well as the range space $\range$ are considered
public information, while the point weights $x_p$ are considered
private (and may denote, e.g.~number of users at a geographic
location). As the exact solution can reveal the private weights, we
need to turn to approximate solutions.  We define the \emph{average
  squared error} of an algorithm $\alg$ for range counting as
$  \frac{1}{|\range|}\sum_{R \in \range}{\left(\alg(R, \vec{x}) - R(\vec{x})\right)^2}$. 
For privacy, we adopt the well-established notion of differential
privacy.  A mechanism $\mech= \{M_n\}$ is \emph{$(\eps,
  \delta)$-differentially private} if for every $n$, every $\vec{x},
\vec{x}'$ with $\|\vec{x} - \vec{x}\|_1 \leq 1$, and every measurable $S
\subseteq \mathbb{R}^d$, the map $M_n$ satisfies
   \begin{equation}
     \Pr[M_n(\vec{x}) \in S] \leq e^\eps \Pr[M_n(\vec{x}') \in S] + \delta.
   \end{equation}

Surprisingly, very little is known about private range counting.
Applying methods of differential privacy from first principles
(Laplace noise and the basic composition theorem of differential
privacy) will add large --- variance $\Omega(n^2)$ in the case of
halfspace counting in the plane --- noise to each
output.  More generally, let $\vec{A}$ be an incidence matrix for a
range space $\range$ (i.e.~a matrix whose rows are the indicator
vectors of all ranges $R \in \range$) and let $\vec{x}$ be the weights.  The problem of
computing $\vec{A}\vec{x}$ is the range counting problem. The average
squared error of an approximate algorithm $\alg$ is
$\frac{1}{|\range|}\|\alg(\vec{x}) - \vec{A}\vec{x}\|_2^2$.  In
general, we can consider this problem for any $\vec{A} \in \{0,
1\}^{m\times n}$, not necessarily ones that correspond to natural ranges from
some constant dimensional geometric space. This is the {\em predicate counting} problem,
well-studied in differential privacy.  Then it is known that no
mechanism that has average squared error $o(n)$ can be
$(\epsilon,\delta)$-differentially private~\cite{Dinur2003,
  Dwork2007}. However, the lower bounds are obtained using random
$\vec{A}$'s that will not correspond to specific range spaces of interest. No
super-constant lower bounds are known against $(\eps, \delta)$-differential privacy
for natural problems like halfspace or orthogonal range
counting in constant dimensional space.
\footnote{Constant lower bounds follow from the work of
  Roth~\cite{Roth2010a} as well as from reductions from lower bounds
  for conjunction queries.}

\junk{ For
the stronger notion of $(\eps, 0)$-differential privacy, the best
known lower bound on \emph{worst-case squared error} in a restricted
setting~\cite{Roth2010a} is $\Omega(d^2)$, where $d$ is the
VC-dimension (or for non-integer $\vec{A}$, the fat-shattering
dimension) of the range space. There exist lower bound techniques
tailored to $(\eps, 0)$-differential privacy that separate this notion
from $(\eps, \delta)$-differential
privacy~\cite{Hardt:2010:GDP:1806689.1806786, De2011}.}

Our results are  for $(\eps, \delta)$-differentially private range counting, and 
use the combinatorial structure of $\vec{A}$'s for  range spaces. 
 Our main application is halfspace counting, but our approach is
 general and yields other results too.

 \medskip \noindent
$\bullet$ {\em (Halfspace counting upper bound)} 
The (primal) \emph{shatter function} of $\range$ is defined as
$\pi_\range(s) = \max_{X \in {P \choose s}} |\range|_X|$ (i.e.~the
number of \emph{distinct} sets in the restriction $\range|_X$).  The shatter function
of $\range$ defined by halfspaces in $d$-dimensions is bounded as $\pi_\range(s) =
O(s^d)$. 

We show that there is an $(\eps, \delta)$-differentially private range
counting mechanism that achieves $O(n^{1 - 1/d})$ average squared
error for range spaces with shatter function bounded by $O(s^d)$, and
therefore for $d$-dimensional halfspace range counting.

Our upper bound shows that previous lower bounds~\cite{Dinur2003,
  Dwork2007} for general $\vec{A}$'s indeed do {\em not} apply to
halfspace range counting.  Our algorithm runs in time polynomial in
$n$ and $m$.  Previous work on this problem is incomparable.  Work by
Blum, Ligett and Roth~\cite{Blum2008} gave a non-constructive squared
error upper bound of $O(d^2n^{4/3})$ for range spaces with
VC-dimension $d$ and a matching constructive bound for halfspace range
counting for $(\eps, 0)$-differential privacy with a slightly
different objective.  Since the shatter function of a range space with
VC-dimension $d$ is bounded by $O(s^d)$, our result also implies a
constructive approximation upper bound of $O(n^{1 - 1/d})$ for
VC-dimension $d$ range spaces.  \junk{ although using a slightly
  different and not directly comparable notion of
  utility. Furthermore, their upper bound satisfies the stronger
  notion of $(\eps, 0)$-differential privacy and applies to the large
  universe setting where $\|\vec{x}\|_1 \leq n$ and $|P| = N$, $N \gg
  n$ ($n$ and $N$ being public parameters).  }

Our approach relies on prior work~\cite{matousek1995tight} to
decompose the range space into a logarithmic number of range spaces,
some of them consisting only of small ranges, and some containing a
small number of distinct ranges. We exploit this trade-off between
maximum range size and number of distinct ranges by combining
randomized response and Laplacian noise based differentially private
mechanisms, but this balancing still leaves us with large noise in
some cases. Nevertheless, we can bound the average privacy loss over
the points $p \in P$. Our main idea is to use this approach to
preserve privacy for {\em most} points $p \in P$; the shatter function
bound does not increase for restrictions of $P$ and $\range$ and we
can recurse on the remaining points of $P$. This argument is inspired
by partial coloring methods used in discrepancy theory\junk{ and may be
applicable to other problems in differential privacy}. \hfill
$\blacksquare$

\medskip \noindent $\bullet$ {\em (Range counting lower bound)} For
halfspace counting in $d$ dimensions, we show that any mechanism that
has average squared error within $o(n^{1-1/d})$ is not
$(\eps,\delta)$-differentially private for any constant $\eps$ and
$\delta$.  \junk{Hence, for small $d$ ($2,3,..)$, the lower bound is
  polynomially smaller than the general lower bound.} We prove this
lower bound using a notion of {\em discrepancy} where, in contrast to
the standard notion where $\{+1,-1\}$ colorings are considered, we allow
$\{0,+1,-1\}$ colorings but subject to some budget constraints on
$\{+1, -1\}$. The budget constraints allows us to relate this notion
of discrepancy to the classical one. Once the approach via the correct
notion of discrepancy is developed, the mechanics are simple. Lower
bounds will follow from combinatorial analysis of the discrepancy of
range spaces. For orthogonal range counting, our approach immediately
gives a lower bound of $(\log n)^{d - O(1)}$ on the average squared
error of any $(\eps, \delta)$ differentially private mechanism.  The
best upper bound in this setting is the work of Chan, Shi, and
Song~\cite{chan2010private} who give an algorithm with average squared
error $O((\log n)^{2d})$. No previous super-constant lower bounds are
known for this problem even for large constant $d$. We note that
proving a tight lower bound on the combinatorial discrepancy of
axis-aligned boxes in $d$ dimensions is a major open problem in
discrepancy theory, and any improvement to the current discrepancy
lower bound will yield a corresponding improvement in lower bounds for
privacy. 
\hfill $\blacksquare$

   
In Section~\ref{sect:prior} we review related prior work. In
Section~\ref{sect:defn}, we define concepts we need, including
differential privacy and suitable notions of discrepancy. In
Section~\ref{sect:lb}, we present our lower bounds, and in
Section~\ref{sect:ub}, the upper bounds.  We describe extensions and
alternative algorithmic solutions in Section~\ref{sect:extensions}. 

\section{Prior Work}
\label{sect:prior}

There is a rich and growing literature on solving counting problems
while satisfying strong privacy guarantees. We will survey the prior
work that is most relevant to our results.

In a seminal paper, Dinur and Nissim~\cite{Dinur2003} initiated the
study of the limits of output perturbation in answering arbitrary
counting queries privately. They showed that if an algorithm $\alg$
satisfies $\|\alg(\vec{x}) - \vec{A}\vec{x}\|_\infty^2 = o(n)$ for a
random 0-1 matrix $\vec{A}$, then an adversary can reconstruct
$\vec{x}$ almost exactly, implying that the algorithm is not $(\eps,
\delta)$-differentially private for any constant $\eps,
\delta$.\footnote{Our methods based on discrepancy allow us to
  re-prove the lower bound of Dinur and Nissim, as well as the version
  of Dwork and Yekhanin~\cite{Dwork2008} that uses an explicit
  $\vec{A}$.} There is relatively little prior work on negative
results for $(\epsilon, \delta)$-differential privacy for natural
restrictions of $\vec{A}$.  An exception is the work on lower bounding
the noise necessary to privately answer conjunction
queries~\cite{Kasiviswanathan2010, De2011}. Conjunction queries on a
database with $d$ attributes can be reduced to answering orthogonal
range counting or halfspace range counting queries in $d$
dimensions. When $d$ is constant, the lower bounds on conjunction
queries imply a lower bound of $C^d$ (for an absolute constant $C>1$)
on the average squared error neccessary to answer $d$-dimensional
halfspace or orthogonal queries privately (here and in the remainder
of this section we suppress dependence on $\eps$, $\delta$, and the
probability of failure).  In other related work, Roth~\cite{Roth2010a}
showed that linear queries with fat shattering dimension $D$ require
squared noise $\Omega(D^2)$ to preserve privacy. The fat shattering
dimension reduces to the VC-dimension for counting queries, and has
value $d+1$ for the range space of halfspaces in $d$ dimensions.  No
super-constant lower bounds were previously known for $(\eps,
\delta)$-differential privacy for the halfspace range counting or
orthogonal range counting problems in constant dimensional space.

\junk{ Their
results were subsequently strengthened in several directions: lower
bounds against algorithms that incur unbounded error on a constant
fraction of the queries~\cite{Dwork2007} (which imply lower bounds for
average squared error as well), and lower bounds for explicit
$\vec{A}$~\cite{Dwork2008}. We note that the lower bound for explicit
$\vec{A}$ of~\cite{Dwork2008} can be proven using our methods. There
is relatively little prior work on negative results for$(\epsilon,
\delta)$-differential privacy for natural restrictions of
$\vec{A}$. }

\junk{ \cite{Dwork2007} extended the result to show that even
algorithms that add arbitrary noise to a constant fraction of the
queries and small noise to the rest are blatantly non-private. Dwork
and Yekhanin~\cite{Dwork2008} improved the running time of the
adversary and showed that there exists an explicit set of queries,
coming from the Fourier transform on cyclic groups, that suffices for
the reconstruction attack. In recent work~\cite{De2011}, De showed
that when the database size is larger than the universe size, even
larger noise is required to prevent blatant non-privacy. We note that
the lower bound of Dinur and Nissim as well as the the result of Dwork
and Yekhanin that Fourier queries suffice for reconstruction can be
reproved using our methods and standard results in discrepancy
theory. While we do not emphasize efficiency in the presentation of
our reconstruction arguments (as an inefficient attack suffices to
contradict $(\eps, \delta)$-differential privacy), most of our
reconstruction attacks can be made efficient using the linear
programming approach of Dinur and Nissim and an appropriate rounding
procedure.}

The study of private range counting for restricted range spaces was
initiated with the work of Blum, Ligett, and Roth~\cite{Blum2008},
who, using an argument based on epsilon nets, showed that queries of
VC dimension $d$ can be answered with worst-case squared noise
$O(d^2n^{4/3})$. Their algorithm is not computationally efficient, but
they gave efficient algorithms with comparable guarantees for the
interval range counting and halfspace range counting
problems. Although their error bound is inferior to ours (when the
size of the database is comparable to the universe size),
the models are not directly comparable. While we consider a
finite universe, they consider a continuous space, but give relaxed
utility guarantes, namely that each query answer is accurate for a
halfspace close to the query halfspace. Additionally, their
algorithms satisfy the stronger notion of $(\eps, 0)$-differential
privacy and accomodate the regime where $\|\vec{x}\|_1$ is public and
bounded by $n$ and $P$ is much larger.  

For interval queries, the work of Blum, Ligett, and Roth was
subsequently improved by Xiao, Wang, and Gehrke~\cite{Xiao2010}
(in the regime where database size and universe size are
comparable), who gave a polylogarithmic noise upper bound via the
wavelet transform. A related algorithm that achieves an average
squared error upper bound of $O((\log n)^{3d})$ for $d$-dimensional
orthogonal range counting was given by Chan, Shi, and
Song~\cite{chan2010private}. We note that if we relax the privacy
guarantee of Chan, Shi, and Song to $(\eps, \delta)$-differential
privacy, their algorithm can be analyzed to provide average squared
error $O(\log^{2d}n)$.

Much subsequent work has focused on answering $m$ arbitrary queries
efficiently with squared error linear in $n$ and polylogarithmic in
$m$~\cite{Dwork:2009:CDP:1536414.1536467, Roth2010, Hardt2010,
  Hardt2010a, Gupta2011a}. A related line of work investigates the
problem of answering conjunction queries with optimal error~\cite{Gupta2011,
  Barak:2007:PAC:1265530.1265569}. 

\vspace{+3pt}
\noindent\textbf{Prior work for $(\eps, 0)$-differential privacy}. Stronger lower bounds can be shown when $\delta = 0$, and
there are known separations between the cases $\delta = 0$ and $\delta
> 0$, even when $\delta$ is superpolynomially
small~\cite{De2011}. Hardt and
Tulwar~\cite{Hardt:2010:GDP:1806689.1806786} gave a lower bound for
linear queries based on geometric properties of the query matrix
$\vec{A}$. De~\cite{De2011} simplified and extended their lower bound
results.\junk{ He also extended it to show a separation between
  counting queries and arbitrary low-sensitiviy queries. } Blum,
Ligett, and Roth~\cite{Blum2008} showed that no $(\eps,
0)$-differentially private mechanism can answer interval queries
with any nontrivial noise when the universe is continuous.
\junk{Also, as our lower bounds go to infinity as the universe size
  $n$ increases, they imply the lower bound of~\cite{Blum2008} for
  halfspace counting in dimension $d> 1$ for the weaker setting
  $\delta>0$.}

\junk{ The problem can take several variants. In the
non-interactive setting, an algorithm can produce a synopsis of the
database, as well as an actual synthetic database, the latter task
known as data sanitization. The work of Blum, Ligett and Roth actually
proved the existence of a data sanitization mechanism with the
guarantees given above. The computational complexity of data
sanitization when the universe or query class is very large was
investigated in~; they proved
that under standard assumptions data sanitization is hard and provided
a more efficient sanitization algorithm with noise guarantees
incomparable with these of Blum, Ligett, and Roth. Roth and
Roughgarden~\cite{} extended the results of Blum, Ligett, and
Roth to the interactive setting where the queries are not known in
advance. An algorithm based on the multiplicative weights method was
proposed by Hardt and Rothblum~\cite{Hardt2010a}; their algorithm
provided tighter noise guarantees (on the order of $\sqrt{n}$ rather
than $n^{2/3}$ and better running time. Subsequent work has simplified
and generalized the results of Hardt and Rothblum~\cite{Hardt2010a,
  Gupta2011a, Gupta2011a}. \cite{Gupta2011a} gave a new analysis of the
median mechanism of Roth and Roughgarden based on the techniques of
Hardt and Routhblum; the analysis showed that the median mechanism can
achieve noise comparable to that of the multiplicative weights
mechanism. }

\vspace{+3pt}
\noindent\textbf{Discrepancy theory}. For background in discrepancy theory we refer the reader to the books
of Chazelle~\cite{chazelle2000discrepancy} and
Matou\u{s}ek~\cite{matousek2010geometric}. Chazelle provides an overview
of the applications of discrepancy theory to computer science, while
Matou\u{s}ek gives a survey of discrepancy theory results for geometric
range spaces. 

\vspace{+3pt}
\noindent\textbf{Geometric range counting}. Geometric range counting
and the closely related problems of range sums and range searching have a rich history in computational geometry. We
refer the reader to the survey of Agarwal and
Erickson~\cite{agarwal1999geometric} for background.

\section{Preliminaries}
\label{sect:defn}

We typeset vectors and matrices as $ \vec{x}$, $\vec{A}$ and their
elements as $x_j$, $A_{ij}$. We denote the $i$-th row of $\vec{A}$ as
$\vec{A}_{i*}$ and the $j$-th column as $\vec{A}_{*j}$. Given a matrix
$\vec{A}$, the function $\col(\vec{A})$ equals the number of columns
of $\vec{A}$. For a matrix
$\vec{A}$ with $n$ columns, and a set $S \subseteq [n]$ we use
$\vec{A}|_S$ to denote the submatrix of $\vec{A}$ consisting of the
columns corresponding to elements of $S$ (with duplicated rows
removed). Similarly, for a range space $\range$ with incidence matrix
$\vec{A}$, the range space $\range|_S$ is the one corresponding to the
incidence matrix $\vec{A}|_S$. We denote the $i$-th standard basis
vector $(0, \ldots, 0, 1, 0, \ldots, 0)^T$ (where $1$ is in the $i$-th
coordinate) as $\vec{e_i}$. For a set $P$ we denote the collection of
subsets of $P$ of size $s$ as ${P \choose s}$. 

\subsection{Range Counting}

We will use the definitions  for range counting, average
squared error, orthogonal and hyperspace range counting, as well as
the linear algebraic notation introduced in the Introduction. We also consider
worst-case squared error, which for an algorithm $\alg$ and a range
space with incidence matrix $\vec{A}$ is $\|\alg(\vec{x}) -
\vec{A}\vec{x}\|_\infty^2 \geq \frac{1}{m} \|\alg(\vec{x}) -
\vec{A}\vec{x}\|_2^2$. We give all our lower bounds in average squared
error and state our upper bounds in terms of both average and
worst-case squared error.

\junk{A \emph{range counting} problem is specified by a set $P$ of size
$|P| = n$, and a \emph{range space} $\range \subseteq 2^{P}$. An exact
range counting algorithm, on input $\vec{x} \in \mathbb{Z}^P$ and a range $R
\in \mathcal{R}$, outputs
\begin{equation}
  R(\vec{x}) = \sum_{p \in R}{x_p}.
\end{equation}

We define the \emph{average squared error} of an algorithm $\alg$ on
input $\vec{x}$ \junk{for a range counting problem} as
\begin{equation}
  \frac{1}{|\range|}\sum_{R \in \range}{\left(\alg(R, \vec{x}) - R(\vec{x})\right)^2}
\end{equation}
We define the worst case squares error of $\alg$ as
\begin{equation}
  \max_{R \in \range}\left\{(\alg(R, \vec{x}) -  R(\vec{x}))^2\right\}
\end{equation}

The concepts of range counting, average squared error, and maximum
error can be concisely formulated using matrix notation.  Let
$\vec{A}$ be the incidence matrix for a range $\range$. Then on input
$\vec{x}$, an exact algorithm outputs $\vec{A}\vec{x}$. The average
squared error of an approximate algorithm $\alg$ is $\frac{1}{|\range|}\|\alg(\vec{x}) -
\vec{A}\vec{x}\|_2^2$, and the worst-case error is $\|\alg(\vec{x}) -
\vec{A}\vec{x}\|_\infty$. 

We will consider the following range counting problems:
\begin{itemize}
\item in the \emph{hyperplane range counting} problem  in dimension
  $d$, $P$ is a set of points in $\mathbb{R}^d$, and $\range$ is a
  range space of subsets of $P$ that can be written as $H
  \cap P$ for some halfspace $H$ in $\mathbb{R}^d$.
\item  in the \emph{orthogonal range counting} problem in dimension
  $d$, $P$ is a set of points in $\mathbb{R}^d$, and $\range$ is a
  range space of  subsets of $P$ that can be written as $B
  \cap P$ for some axis-parallel box $B$ in $\mathbb{R}^d$.
\end{itemize}
}

The \emph{VC-dimension} of a range space $\range$ is defined as the size of
the  largest set $X \subseteq P$ such that $\range|_X =
2^X$. The (primal) \emph{shatter function} of $\range$ is defined as
$\pi_\range(s) = \max_{X \in {P \choose s}} |\range|_X|$ (i.e.~the
number of \emph{distinct} sets in $\range|_X$). 

\begin{fact}[\cite{matousek2010geometric}]
  If the VC-dimension of $\range$ is $d$, then $\pi_\range(s) =
  O(s^d)$. Conversely, if $\pi_\range(s) = s^{O(1)}$ then the
  VC-dimension of $\range$ is constant.
\end{fact}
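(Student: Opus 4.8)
The first statement is the Sauer--Shelah lemma, and the plan is to prove the sharper quantitative bound $\pi_\range(s) \le \sum_{i=0}^{d}\binom{s}{i}$, from which $\pi_\range(s) = O(s^d)$ is immediate since $d$ is a fixed constant. I would argue by induction on $s + d$. Fix a set $X$ with $|X| = s$ achieving the maximum in the definition of $\pi_\range(s)$, and let $F = \range|_X$, a family of subsets of $X$ whose VC-dimension is at most $d$. Choose any $p \in X$ and put $X' = X \setminus \{p\}$. Define $F' = \{A \setminus \{p\} : A \in F\}$ (the restriction of $F$ to $X'$) and $F'' = \{A \subseteq X' : A \in F \text{ and } A \cup \{p\} \in F\}$. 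Running over all $B \subseteq X'$ and distinguishing whether zero, exactly one, or both of $B$ and $B \cup \{p\}$ lie in $F$, one gets the key counting identity $|F| = |F'| + |F''|$.

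Next I would bound the two pieces separately and recombine. The family $F'$ lives on the $s-1$ points of $X'$ and still has VC-dimension at most $d$, so by the inductive hypothesis $|F'| \le \sum_{i=0}^{d}\binom{s-1}{i}$. The family $F''$ has VC-dimension at most $d-1$: if $F''$ shattered some $Y \subseteq X'$, then by the definition of $F''$ the family $F$ would shatter $Y \cup \{p\}$, a set of size $|Y|+1$, contradicting that $F$ has VC-dimension at most $d$. Hence $|F''| \le \sum_{i=0}^{d-1}\binom{s-1}{i}$, again by induction. Adding the two estimates and applying Pascal's identity $\binom{s-1}{i} + \binom{s-1}{i-1} = \binom{s}{i}$ telescopes the sum to $\sum_{i=0}^{d}\binom{s}{i}$, which closes the induction. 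The base cases are trivial: if $d = 0$ no point is shattered, so $F$ has a single element; if $s = 0$ then $F \subseteq \{\emptyset\}$.

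For the converse I would use contraposition. Suppose $\pi_\range(s) \le s^{c}$ for all $s \ge s_0$, for some constants $c$ and $s_0$, and let $D$ be the VC-dimension of $\range$. If $D < s_0$ there is nothing to prove, so assume $D \ge s_0$. By definition of VC-dimension there is a shattered set $X \subseteq P$ with $|X| = D$ (note $D \le n = |P|$, so $X \in \binom{P}{D}$), and therefore $\pi_\range(D) \ge |\range|_X| = 2^{D}$. Combining this with the polynomial upper bound yields $2^{D} \le D^{c}$, i.e.\ $D \le c\log_2 D$, which forces $D$ to be bounded by a constant depending only on $c$. Hence the VC-dimension is constant.

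I do not expect a genuine obstacle here, as this is a classical fact; the only points needing care in the induction are the counting identity $|F| = |F'| + |F''|$ and the VC-dimension drop for $F''$, both of which are routine. As an alternative to the induction one can use the down-shift (compression) argument: replace $F$ by a downward-closed family of the same cardinality and no larger VC-dimension, for which the bound $|F| \le \sum_{i \le d}\binom{s}{i}$ is immediate; I would mention this as a remark.
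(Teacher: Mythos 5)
The paper states this as a Fact with a citation to Matou\u{s}ek's book and gives no proof of its own, so there is no in-paper argument to compare against. Your proof is correct and is the standard one: the forward direction is the Sauer--Shelah lemma via the counting identity $|F| = |F'| + |F''|$ (partition $F$ by membership of $p$, observe $F'$ is the trace and $F''$ the ``both present'' family, apply inclusion--exclusion), the VC-dimension drop for $F''$ (shattering $Y$ in $F''$ forces shattering of $Y \cup \{p\}$ in $F$), the inductive hypothesis on $(s-1,d)$ and $(s-1,d-1)$, and Pascal's rule to telescope; the converse is the standard contraposition, using that a shattered set of size $D$ forces $\pi_\range(D) \geq 2^D$, which cannot stay below $D^c$ once $D$ exceeds a constant depending only on $c$. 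The compression/down-shift remark you append is also a valid alternative. No gaps.
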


\begin{fact}[\cite{matousek2010geometric}]
  The VC-dimension of the range space $\range$ induced on $P$ by all
  halfspaces in $\mathbb{R}^d$ is $d+1$. The shatter function of
  $\range$ is bounded as $\pi_\range = O(s^d)$. 
\end{fact}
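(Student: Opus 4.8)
The plan is to prove the two assertions separately, both by classical arguments. For the VC-dimension I establish the matching bounds. For the lower bound $\mathrm{VCdim}(\range)\ge d+1$, I would take $P$ to be the $d+1$ vertices of a non-degenerate simplex in $\mathbb{R}^d$, say the origin together with $\vec{e_1},\dots,\vec{e_d}$. For any $S\subseteq P$, the sets $\mathrm{conv}(S)$ and $\mathrm{conv}(P\setminus S)$ are faces of this simplex spanned by disjoint vertex sets, hence are disjoint compact convex bodies; the separating-hyperplane theorem then yields a halfspace $H$ with $H\cap P=S$, so $P$ is shattered and $\mathrm{VCdim}(\range)\ge d+1$ (the boundary cases $S=\emptyset$ and $S=P$ being handled by a faraway halfspace).

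For the upper bound $\mathrm{VCdim}(\range)\le d+1$, I would invoke Radon's theorem: every set $X$ of $d+2$ points in $\mathbb{R}^d$ admits a partition $X=S_1\sqcup S_2$ with $\mathrm{conv}(S_1)\cap\mathrm{conv}(S_2)\ne\emptyset$. If some halfspace $H$ satisfied $H\cap X=S_1$, then $\mathrm{conv}(S_1)\subseteq H$ while $\mathrm{conv}(S_2)$ lies in the complementary halfspace, contradicting that the two hulls meet; hence $S_1\notin\range|_X$, no $(d+2)$-point set is shattered, and $\mathrm{VCdim}(\range)\le d+1$. This argument is insensitive to whether halfspaces are taken open or closed. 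Note that Sauer--Shelah applied to VC-dimension $d+1$ only gives $\pi_\range(s)=O(s^{d+1})$, so the sharper shatter bound $O(s^d)$ needs a separate argument.

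For $\pi_\range(s)=O(s^d)$, fix $X=\{p_1,\dots,p_s\}\subseteq P$ and lift each point to $\hat p_i=(p_i,-1)\in\mathbb{R}^{d+1}$; a halfspace $\{x:\langle a,x\rangle\le b\}$ then cuts out of $X$ exactly the set determined by the sign vector of $(\langle\hat a,\hat p_1\rangle,\dots,\langle\hat a,\hat p_s\rangle)$ with $\hat a=(a,b)\in\mathbb{R}^{d+1}$. A short perturbation argument (push the bounding hyperplane slightly away from $X$) shows every set in $\range|_X$ is realized by some $\hat a$ for which all these inner products are nonzero, i.e.\ by a full-dimensional chamber of the \emph{central} hyperplane arrangement $\{\,\hat x:\langle\hat x,\hat p_i\rangle=0\,\}_{i=1}^{s}$ in $\mathbb{R}^{d+1}$. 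Intersecting with the unit sphere turns this into an arrangement of $s$ great $(d-1)$-subspheres on $S^d$, which has $O(s^d)$ cells, so $|\range|_X|=O(s^d)$ and $\pi_\range(s)=O(s^d)$. The crux I expect is precisely this last count: a generic arrangement of $s$ hyperplanes in $\mathbb{R}^{d+1}$ has $\Theta(s^{d+1})$ cells, so one must genuinely use the centrality of the arrangement (equivalently, the reduction to $S^d$) to shave the extra factor of $s$; the accompanying perturbation step that lets one ignore halfspaces passing exactly through points of $X$ also needs to be stated with a little care.
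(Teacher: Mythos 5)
The paper states this as a Fact with a bare citation to Matou\v{s}ek's book and offers no proof of its own, so there is no internal argument to compare against; your proof is correct and is essentially the standard classical one that the reference relies on. Concretely: simplex vertices plus the separating-hyperplane theorem give $\mathrm{VCdim}\ge d+1$; Radon's theorem gives $\mathrm{VCdim}\le d+1$; and the lift $p\mapsto(p,-1)$ reducing halfspace labelings to sign patterns of a \emph{central} arrangement of $s$ hyperplanes in $\mathbb{R}^{d+1}$ (equivalently, cells of $s$ great subspheres on $S^{d}$) gives the sharper $\pi_\range(s)=O(s^{d})$, with the perturbation step letting you assume all inner products are nonzero. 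You are right to flag that Sauer--Shelah alone only yields $O(s^{d+1})$ and that the centrality of the lifted arrangement is where the missing factor of $s$ is recovered; one further small point worth recording is that the cell count $2\sum_{k=0}^{d}\binom{s-1}{k}=O(s^{d})$ is an upper bound for arbitrary (not necessarily generic) central arrangements, since perturbing to general position can only split cells, so no genericity assumption on $X$ is needed.
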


\junk{In general, a range space that can be defined by a parametrized
constant-degree polynomial inequality with $d$ parameters has shatter
function bounded as $O(s^d)$. }

\subsection{Differential Privacy}

  For any two sets $\uni$ (\emph{the universe}) and $Y$, a
  \emph{mechanism} $\mech$ over $\uni$ with range $Y$ is a family
  of maps $\{M_n\}$, $M_n: \uni^n \rightarrow \rho(Y)$, where
  $\rho(Y)$ is the set of random variables that take values in
  $Y$.
For the rest of this paper, we will focus on mechanisms over
$\mathbb{Z}$ or over $\{0, 1\}$, with range $\mathbb{R}^m$.

\begin{definition}
  A mechanism $\mech = \{M_n\}$ over (a subset of) $\mathbb{Z}$ with
  range $Y$ is \emph{$(\eps, \delta)$-differentially
    private} if for every $n$, every $\vec{x}, \vec{x}'$ with
  $\|\vec{x} - \vec{x}'\|_1 \leq 1$, and every measurable $S \subseteq
  Y$, the map $M_n$ satisfies
   \begin{equation*}
     \Pr[M_n(\vec{x}) \in S] \leq e^\eps \Pr[M_n(\vec{x}') \in S] + \delta.
   \end{equation*}
\end{definition}

For lower bounds we use the following claim, which implies that being able
to decode most of the input from the output contradicts differential privacy.
\junk{
\begin{definition}
  A mechanism $\mech=\{M_n\}$ over the universe $\mathbb{Z}$ is \emph{$(\alpha,
  \beta)$-non-private} if for some $n$ there exists a (not necessarily efficient)
  algorithm $\alg$ such that
  \begin{equation}
    \forall \vec{x} \in \mathbb{Z}^n: \Pr[\|\alg(M_n(\vec{x})) -
    \vec{x}\|_1 > \alpha n] < \beta.
  \end{equation}
\end{definition}
}

\begin{lemma}[\cite{De2011}]
  \label{lm:non-priv-dp}
  Let $\mech = \{M_n\}$ be a mechanism such that for some $n$ there
  exists a (not necessarily efficient) algorithm $\alg$ such that
  \begin{equation*}
    \forall \vec{x} \in \mathbb{Z}^n: \Pr[\|\alg(M_n(\vec{x})) -
    \vec{x}\|_1 >  \alpha n ] < \beta.
  \end{equation*}
  Then there exist $\eps = \eps(\alpha, \beta)$ and $\delta =
  \delta(\alpha, \beta)$ such that the mechanism $\mech$ is not $(\eps,
  \delta)$-differentially private.
\end{lemma}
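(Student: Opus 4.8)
The plan is to prove the contrapositive: assuming $\mech=\{M_n\}$ is $(\eps,\delta)$-differentially private, we derive a contradiction provided $\eps=\eps(\alpha,\beta)$ and $\delta=\delta(\alpha,\beta)$ are small enough. Let $n$ be the value from the hypothesis and let $\alg$ be the promised reconstruction algorithm. Since $(\eps,\delta)$-differential privacy is closed under (randomized) post-processing, the composed mechanism $N_n := \alg\circ M_n\colon\mathbb{Z}^n\to\mathbb{Z}^n$ is also $(\eps,\delta)$-differentially private, and the hypothesis reads $\Pr[\,\|N_n(\vec x)-\vec x\|_1>\alpha n\,]<\beta$ for every $\vec x\in\mathbb{Z}^n$.

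The first step is to exhibit two inputs whose reconstruction regions are forced to be disjoint. Set $k=\lfloor 2\alpha n\rfloor+1$ and pick $\vec x,\vec x'\in\mathbb{Z}^n$ with $\|\vec x-\vec x'\|_1=k$ (e.g.\ $\vec x=\vec 0$ and $\vec x'=k\vec{e_1}$, or, when $\alpha<1/2$, two points of $\{0,1\}^n$ at Hamming distance $k$). Let $S=\{\vec z\in\mathbb{Z}^n:\|\vec z-\vec x\|_1\le\alpha n\}$. Every $\vec z\in S$ satisfies $\|\vec z-\vec x'\|_1\ge k-\alpha n>\alpha n$, so the event $N_n(\vec x')\in S$ entails that $\alg$ fails on input $\vec x'$; hence $\Pr[N_n(\vec x)\in S]\ge 1-\beta$ while $\Pr[N_n(\vec x')\in S]\le\beta$. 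The second step is group privacy: iterating the $(\eps,\delta)$ inequality along a path of $k$ single-coordinate steps from $\vec x'$ to $\vec x$ gives
\[
  \Pr[N_n(\vec x)\in S]\ \le\ e^{k\eps}\,\Pr[N_n(\vec x')\in S]\ +\ \delta\,\frac{e^{k\eps}-1}{e^\eps-1}.
\]
Combining the three bounds yields $1-\beta\le e^{k\eps}\beta+\delta\,\frac{e^{k\eps}-1}{e^\eps-1}$. It then suffices to choose $\eps$ small enough that $e^{k\eps}\beta<\tfrac12(1-\beta)$ — possible whenever $\beta$ is a constant bounded below $1/2$, the only interesting regime — and, with $e^{k\eps}$ thereby bounded, $\delta$ small enough that $\delta\,\frac{e^{k\eps}-1}{e^\eps-1}<\tfrac12(1-\beta)$; the right-hand side is then strictly below $1-\beta$, a contradiction. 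Thus $\mech$ is not $(\eps,\delta)$-differentially private for these $\eps,\delta=\eps(\alpha,\beta),\delta(\alpha,\beta)$.

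If one additionally wants $\eps$ to be an absolute constant (not shrinking with $n$), I would replace the two-point construction by a packing $C\subseteq\{0,1\}^n$ of $2^{\Omega(n)}$ codewords with pairwise Hamming distance $>2\alpha n$ (greedy/Gilbert–Varshamov), run the same group-privacy estimate from one fixed reference codeword to each $c\in C$, and sum over $C$ using that the sets $S_c=\{\vec z:\|\vec z-c\|_1\le\alpha n\}$ are pairwise disjoint (so $\sum_c\Pr[N_n(c_0)\in S_c]\le 1$); this trades a constant-size improvement in $\eps$ for a correspondingly smaller admissible $\delta$.

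The main obstacle is the tension between the distance $k=\Theta(\alpha n)$ needed to separate the reconstruction regions and the factor $\frac{e^{k\eps}-1}{e^\eps-1}$ by which group privacy inflates the additive $\delta$ term: one must check that there is genuinely slack to take both $\eps$ and $\delta$ strictly positive, and track how the thresholds depend on $\alpha$ and $\beta$ (and, in the two-point version, on $\alpha n$). The remaining ingredients — post-processing, the disjointness argument, and the telescoping form of group privacy — are routine.
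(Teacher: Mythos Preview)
The paper does not prove this lemma --- it is quoted with attribution to \cite{De2011} and used as a black box --- so there is no in-paper argument to compare against. Evaluating your proposal on its own, there is a genuine gap: neither route you sketch yields thresholds $\eps,\delta$ that depend only on $\alpha,\beta$ and not on $n$, which is what the notation $\eps(\alpha,\beta),\delta(\alpha,\beta)$ promises and what the applications in Section~\ref{sect:lb} need (the discrepancy bounds there hold for infinitely many $n$, and one wants a single pair of constants to rule out privacy).

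In the two-point construction you must take $e^{k\eps}\beta<\tfrac12(1-\beta)$ with $k=\lfloor 2\alpha n\rfloor+1$, which forces $\eps=O(1/(\alpha n))$; you already flag this. The packing variant does repair $\eps$ --- with $|C|=2^{c(\alpha)n}$ one may take any constant $\eps<c(\alpha)\ln 2$ --- but it does not repair $\delta$. Summing the group-privacy estimate over $C$ gives
\[
|C|(1-\beta)\ \le\ e^{n\eps}+|C|\,\delta\,\frac{e^{n\eps}-1}{e^{\eps}-1},
\]
and since the geometric sum on the right is $\Theta(e^{n\eps})$, for any fixed $\delta>0$ the second term eventually dominates $|C|(1-\beta)$ and no contradiction results; you would need $\delta=e^{-\Omega(n)}$. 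The ``main obstacle'' you identify in your last paragraph is real and is not overcome by passing to a packing.

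One way to get $n$-free constants is to avoid group privacy entirely and argue coordinate-wise against a random input. Draw $\vec{x}\sim\mathrm{Unif}(\{0,1\}^n)$ and clip $N_n(\vec{x})=\alg(M_n(\vec{x}))$ to $\{0,1\}^n$; the hypothesis then gives $\E\|N_n(\vec{x})-\vec{x}\|_1\le(\alpha+\beta)n$, so some coordinate $i$ has $\Pr[N_n(\vec{x})_i=x_i]\ge 1-\alpha-\beta$. But conditioning on $\vec{x}_{-i}$, the bit $x_i$ is uniform and a \emph{single} application of $(\eps,\delta)$-privacy bounds $\Pr[N_n(\vec{x})_i=x_i]\le \tfrac{e^\eps}{1+e^\eps}+\tfrac{\delta}{1+e^\eps}$. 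For $\alpha+\beta<\tfrac12$ these two inequalities are incompatible once $\eps$ and $\delta$ are small enough as functions of $\alpha+\beta$ alone.
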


\junk{
\begin{definition}
  Let $\mech^1 = \{M^1_n\}, \ldots, \mech^s = \{M^s_n\}$ be mechanisms
  over $\mathbb{Z}$. Their \emph{composition} is the mechanism that on
  input $\mathbb{Z}^n$ outputs $(M^1_n(\vec{x}), \ldots,
  M^s_n(\vec{x}))$. We call the composition a \emph{degree-$t$
    composition of $(\eps, \delta)$-differentially private
    mechanisms}, if $\mech^k$ is $(\eps, \delta)$-differentially
  private for every $k \in [s]$ conditional on the output of $\mech^1,
  \ldots, \mech^{k-1}$, and, moreover, for all $n$ and all $i \in
  [n]$, for any $\vec{x}, \vec{x}' \in \mathbb{Z}^n$ such that
  $\supp(\vec{x} - \vec{x}') = \{i\}$, we have
  \begin{equation}
    M^k_n(\vec{x}) \equiv M^k_n(\vec{x}')
  \end{equation}
  for at least $s - t$ values of $k$. 
\end{definition}

\begin{lemma}
  \label{lm:composition}
  Let $\mech$ be a degree-$t$ composition of $(\eps, 0)$
  differentially private mechanisms. Then $\mech$ satisfies
  $(\sqrt{2t\ln(1/\delta)}\eps + t\eps(e^\eps - 1),
  \delta)$-differential privacy for any $\delta > 0$.
\end{lemma}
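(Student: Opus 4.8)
The plan is to adapt the advanced composition theorem (Dwork, Rothblum, and Vadhan), using the observation that for any fixed pair of neighbouring inputs only $t$ of the $s$ component mechanisms contribute nonzero privacy loss, so the effective number of compositions is $t$ rather than $s$. Fix $n$, an index $i \in [n]$, and a pair $\vec{x}, \vec{x}' \in \mathbb{Z}^n$ differing only in the $i$-th coordinate. By the definition of a degree-$t$ composition, the set $K = \{k \in [s] : M^k_n(\vec{x}) \not\equiv M^k_n(\vec{x}')\}$ has $|K| \le t$. Given an output $\vec{y} = (y^1, \ldots, y^s)$, let $p_k(\cdot \mid y^1, \ldots, y^{k-1})$ and $q_k(\cdot \mid y^1, \ldots, y^{k-1})$ be the conditional densities of the $k$-th coordinate of the output under inputs $\vec{x}$ and $\vec{x}'$, and define the privacy loss
\begin{equation*}
  L(\vec{y}) = \sum_{k=1}^{s} L_k(\vec{y}), \qquad
  L_k(\vec{y}) = \ln \frac{p_k(y^k \mid y^1, \ldots, y^{k-1})}{q_k(y^k \mid y^1, \ldots, y^{k-1})} .
\end{equation*}
By the chain rule $L(\vec{y})$ is exactly the log-likelihood ratio of $\mech$ at $\vec{y}$ between inputs $\vec{x}$ and $\vec{x}'$; and for $k \notin K$ we have $p_k \equiv q_k$, hence $L_k \equiv 0$, so only the at most $t$ terms with $k \in K$ survive.

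The next step is a martingale concentration bound on $L$ under $\vec{y} \sim \mech(\vec{x})$. Let $\mathcal{F}_k$ be the $\sigma$-algebra generated by $y^1, \ldots, y^k$. Conditional $(\eps,0)$-differential privacy of $\mech^k$ forces $e^{-\eps} \le p_k/q_k \le e^\eps$ pointwise, so $|L_k| \le \eps$; and the standard bound on the KL divergence of an $(\eps,0)$-private mechanism (which follows since $(r-1)\ln r \le \eps(e^\eps - 1)$ for all $r \in [e^{-\eps}, e^\eps]$) gives $\mu_k := \mathbb{E}[L_k \mid \mathcal{F}_{k-1}] \in [0, \eps(e^\eps-1)]$, with $\mu_k = 0$ for $k \notin K$. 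Thus $Z_k := L_k - \mu_k$ is a martingale difference sequence for $\{\mathcal{F}_k\}$ that vanishes for $k \notin K$ and lies in an interval of width $2\eps$ for $k \in K$, while $\sum_k \mu_k \le t\eps(e^\eps-1)$ holds surely. Hoeffding's lemma applied conditionally gives $\mathbb{E}[e^{\lambda Z_k} \mid \mathcal{F}_{k-1}] \le e^{\lambda^2 \eps^2/2}$ for $k \in K$, so by the tower rule $\mathbb{E}[e^{\lambda \sum_k Z_k}] \le e^{\lambda^2 t \eps^2/2}$, and Markov's inequality optimized over $\lambda$ yields $\Pr[\sum_k Z_k \ge z] \le e^{-z^2/(2 t\eps^2)}$. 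Choosing $z = \eps\sqrt{2t\ln(1/\delta)}$ makes this at most $\delta$, so with probability at least $1-\delta$ over $\vec{y} \sim \mech(\vec{x})$ we have $L(\vec{y}) \le t\eps(e^\eps-1) + \eps\sqrt{2t\ln(1/\delta)} =: \eps'$.

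Finally I would convert this into an $(\eps', \delta)$-privacy statement by the usual argument: with $B = \{\vec{y} : L(\vec{y}) > \eps'\}$ we have shown $\Pr[\mech(\vec{x}) \in B] \le \delta$, so for every measurable $S$,
\begin{equation*}
  \Pr[\mech(\vec{x}) \in S] \le \Pr[\mech(\vec{x}) \in S \setminus B] + \delta \le e^{\eps'}\Pr[\mech(\vec{x}') \in S] + \delta,
\end{equation*}
where the second inequality uses $\Pr[\mech(\vec{x}) = \vec{y}] \le e^{\eps'}\Pr[\mech(\vec{x}') = \vec{y}]$ for $\vec{y} \notin B$ together with monotonicity. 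Since $n$, $i$ and the neighbours were arbitrary, this is the claim, with $\delta' = \delta$ because the component $\delta$'s are $0$.

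The main obstacle is less any single inequality than getting the adaptive-composition bookkeeping right: one has to make the conditional densities and the filtration $\{\mathcal{F}_k\}$ precise in the presence of the continuous range $\mathbb{R}^m$, verify that $\delta = 0$ conditional privacy really yields the pointwise ratio bound $|L_k| \le \eps$ (rather than a bound only up to a null set that could matter after conditioning), and use the hypothesis $M^k_n(\vec{x}) \equiv M^k_n(\vec{x}')$ --- which holds for every fixing of the prefix --- to zero out $L_k$ for $k \notin K$. With those technicalities handled, the concentration step and the final conversion are routine.
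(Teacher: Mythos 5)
Your proof is correct and is precisely the ``straightforward extension of the composition theorem of Dwork, Rothblum, and Vadhan'' that the paper invokes but does not write out: the paper supplies no proof of this lemma, only the citation to DRV, so there is nothing to diverge from. Your argument reproduces the DRV privacy-loss-martingale proof and inserts the one extra observation the definition of a degree-$t$ composition is designed to support, namely that for each fixed neighbouring pair $(\vec{x},\vec{x}')$ the conditional log-likelihood ratios $L_k$ vanish identically outside a set $K$ of size at most $t$, so Hoeffding/Azuma yields a moment-generating-function bound $e^{\lambda^2 t\eps^2/2}$ with $t$ rather than $s$ in the exponent while the drift term $\sum_k \mu_k$ is likewise at most $t\eps(e^\eps-1)$. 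The ingredients you use --- the a.e.\ ratio bound $|L_k|\le\eps$ from conditional $(\eps,0)$-privacy, the KL bound $\mu_k\le\eps(e^\eps-1)$ via $(r-1)\ln r\le\eps(e^\eps-1)$ on $[e^{-\eps},e^{\eps}]$, the tower-rule aggregation, the Chernoff optimization giving $z=\eps\sqrt{2t\ln(1/\delta)}$, and the conversion of a high-probability bound on the log-likelihood ratio into $(\eps',\delta)$-differential privacy --- all check out, and the technicalities you flag (a.e.\ rather than pointwise ratio bounds, the need for $M^k_n(\vec{x})\equiv M^k_n(\vec{x}')$ to hold conditionally on every prefix) are exactly the ones that must hold for the definition to function, so they are matters of bookkeeping rather than gaps.
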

Notice that for sufficiently small $\eps$ and $t = \Omega(\eps^{-2})$,
a degree-$t$ composition satisfies $(O(\sqrt{t\log(1/\delta)}\eps,
\delta)$-differential privacy.
 }

A basic mechanism to achieve differential privacy with $\delta = 0$ is
the \emph{Laplace noise mechanism}, first proposed in~\cite{DMNS}. Let
us here and for the rest of the paper denote by $\Lap(s)$ the Laplace
distribution centered at 0 with scale parameter $s$. 
\begin{lemma}[\cite{DMNS}]
  \label{lm:laplace}
  Let $f$ be any real-valued function which for any $\vec{x}, \vec{x}'
  \in \mathbb{Z}^n$ such that $\|\vec{x} - \vec{x}'\|_1 \leq 1$
  satisfies $|f(\vec{x}) - f(\vec{x}')| \leq 1$. Then the mechanism
  that on input $\vec{x}$ outputs $f(\vec{x}) + \Lap(1/\eps)$
  satisfies $(\eps, 0)$-differential privacy.
\end{lemma}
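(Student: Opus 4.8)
The plan is to argue directly from the explicit probability density of the mechanism's output, exploiting the fact that $\delta = 0$ permits a purely pointwise comparison of densities with no ``bad event'' to discard. Write $M_n(\vec{x}) = f(\vec{x}) + Z$ where $Z \sim \Lap(1/\eps)$ has density $t \mapsto \tfrac{\eps}{2} e^{-\eps|t|}$; then $M_n(\vec{x})$ is a real random variable with density $p_{\vec{x}}(y) = \tfrac{\eps}{2} e^{-\eps|y - f(\vec{x})|}$, and similarly $p_{\vec{x}'}$ for a neighboring input $\vec{x}'$.

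First I would fix $\vec{x}, \vec{x}' \in \mathbb{Z}^n$ with $\|\vec{x} - \vec{x}'\|_1 \le 1$, so that the hypothesis on $f$ gives $|f(\vec{x}) - f(\vec{x}')| \le 1$. The key step is the pointwise density-ratio bound: for every $y \in \mathbb{R}$,
\[
  \frac{p_{\vec{x}}(y)}{p_{\vec{x}'}(y)} = e^{\eps\left(|y - f(\vec{x}')| - |y - f(\vec{x})|\right)} \le e^{\eps|f(\vec{x}) - f(\vec{x}')|} \le e^{\eps},
\]
where the first inequality is the reverse triangle inequality $|y - b| - |y - a| \le |a - b|$. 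Integrating this bound over an arbitrary measurable set $S \subseteq \mathbb{R}$ then yields
\[
  \Pr[M_n(\vec{x}) \in S] = \int_S p_{\vec{x}}(y)\, dy \le e^{\eps} \int_S p_{\vec{x}'}(y)\, dy = e^{\eps}\Pr[M_n(\vec{x}') \in S],
\]
which is exactly the $(\eps, 0)$-differential privacy condition (the additive $\delta$ term is not needed).

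I do not expect a genuine obstacle here. The only points requiring care are getting the direction of the triangle inequality right (we need the difference of the two distances, not their sum) and observing that for integer-valued inputs $\|\vec{x} - \vec{x}'\|_1 \le 1$ forces $\vec{x} = \vec{x}'$ or a single-coordinate difference of $\pm 1$, so the sensitivity hypothesis $|f(\vec{x}) - f(\vec{x}')| \le 1$ applies verbatim. The argument also extends coordinatewise: when several such queries are combined into an output in $\mathbb{R}^m$ with independent $\Lap(1/\eps)$ noise per coordinate, the density ratios multiply and one obtains $e^{\eps\|f(\vec{x}) - f(\vec{x}')\|_1}$ in the exponent; but for the stated one-dimensional $f$ the single computation above suffices.
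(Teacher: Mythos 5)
Your proof is correct, and since the paper simply cites \cite{DMNS} for this lemma without reproducing an argument, there is no in-paper proof to compare against; your pointwise density-ratio argument via the reverse triangle inequality is exactly the standard proof given in the cited source.
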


  The \emph{composition} of mechanisms $\mech^1 = \{M^1_n\}$, $\ldots$,
  $\mech^s = \{M^s_n\}$ is the mechanism that on input $\mathbb{Z}^n$
  outputs $(M^1_n(\vec{x})$, $\ldots$, $M^s_n(\vec{x}))$.
We need the following composition lemma first proved in~\cite{DMNS}. 
\begin{lemma}[\cite{DMNS}]
  \label{lm:simple_comp}
  Let the mechanisms $\mech^1, \ldots, \mech^s$ satisfy, respectively,
  $(\eps_1, \delta_1), \ldots, (\eps_s, \delta_s)$ differential
  privacy. The composition $\mech$ of the mechanisms satisfies
  $(\sum_i{\eps_i}, \sum_i{\delta_i})$-differential privacy.
\end{lemma}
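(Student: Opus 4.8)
The plan is to prove the lemma by induction on $s$, reducing everything to the case $s=2$. For the inductive step, note that the composition of $\mech^1, \ldots, \mech^s$ is, up to reassociating the coordinates of the output tuple (a measurable bijection, hence harmless by post-processing), the composition of the two mechanisms $\mech'$ (the composition of $\mech^1, \ldots, \mech^{s-1}$) and $\mech^s$; by the inductive hypothesis $\mech'$ is $(\sum_{i<s}\eps_i, \sum_{i<s}\delta_i)$-differentially private, so it suffices to handle two mechanisms. Throughout I would use that, because each $M^k_n$ draws independent randomness, on any fixed input $\vec{x}$ the outputs $M^1_n(\vec{x}), \ldots, M^s_n(\vec{x})$ are mutually independent, so the output distribution of the composition is a product measure.

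Fix $n$ and neighbouring inputs $\vec{x}, \vec{x}'$ with $\|\vec{x} - \vec{x}'\|_1 \le 1$, and let $P_1, Q_1$ (resp.\ $P_2, Q_2$) be the distributions of $M^1_n(\vec{x}), M^1_n(\vec{x}')$ (resp.\ $M^2_n(\vec{x}), M^2_n(\vec{x}')$). The key tool is a reformulation of approximate differential privacy, which I would state and prove as a small claim: for probability measures $P, Q$ on a common space $Y$, one has $P(S) \le e^{\eps} Q(S) + \delta$ for every measurable $S$ if and only if there is a sub-probability measure $\widetilde{P}$ with $\widetilde{P}(S) \le P(S)$ for all $S$, with $P(Y) - \widetilde{P}(Y) \le \delta$, and with $\widetilde{P}(S) \le e^{\eps} Q(S)$ for all $S$. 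For the ``only if'' direction one takes $\widetilde{P}$ to be the part of $P$ lying below $e^\eps Q$: writing densities $p, q$ with respect to the dominating measure $P+Q$, set $d\widetilde{P} = \min(p, e^\eps q)\, d(P+Q)$, so that $P(Y) - \widetilde{P}(Y) = \int (p - e^\eps q)_+\, d(P+Q) = \sup_S (P(S) - e^\eps Q(S)) \le \delta$; the ``if'' direction is immediate from $P(S) = \widetilde{P}(S) + (P - \widetilde{P})(S) \le e^\eps Q(S) + \delta$.

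Given the claim, apply it to $(P_1, Q_1)$ and to $(P_2, Q_2)$ to get sub-probability measures $\widetilde{P}_1 \le P_1$ and $\widetilde{P}_2 \le P_2$ with mass deficits at most $\delta_1$ and $\delta_2$ and with $\widetilde{P}_k \le e^{\eps_k} Q_k$. Now consider $\widetilde{P}_1 \times \widetilde{P}_2$ on $Y_1 \times Y_2$: by Tonelli it satisfies $\widetilde{P}_1 \times \widetilde{P}_2 \le P_1 \times P_2$ (as measures), it satisfies $\widetilde{P}_1 \times \widetilde{P}_2 \le e^{\eps_1} Q_1 \times e^{\eps_2} Q_2 = e^{\eps_1 + \eps_2}(Q_1 \times Q_2)$, and its total mass is $\widetilde{P}_1(Y_1)\widetilde{P}_2(Y_2) \ge (1 - \delta_1)(1 - \delta_2) \ge 1 - \delta_1 - \delta_2$, so its mass deficit inside $P_1 \times P_2$ is at most $\delta_1 + \delta_2$. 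Feeding $\widetilde{P} := \widetilde{P}_1 \times \widetilde{P}_2$ into the ``if'' direction of the claim shows $(P_1 \times P_2)(S) \le e^{\eps_1 + \eps_2}(Q_1\times Q_2)(S) + \delta_1 + \delta_2$ for every measurable $S \subseteq Y_1 \times Y_2$, which is exactly the bound needed for the composition of two mechanisms; the induction then finishes the proof.

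I expect the only real subtlety to be the measure-theoretic bookkeeping in the reformulation claim — making the ``part of $P$ below $e^\eps Q$'' well defined without assuming densities exist (handled by passing to $P+Q$) and checking that the products of the resulting sub-probability measures behave as stated. The naive alternative of integrating the guarantee of $\mech^1$ against the slices $S_{y_1} = \{y_2 : (y_1,y_2)\in S\}$ and then that of $\mech^2$ also works but is lossy, giving only $(\eps_1 + \eps_2,\ e^{\eps_1}\delta_2 + \delta_1)$; the sub-probability-product argument above is precisely what delivers the clean additive $\delta_1 + \delta_2$.
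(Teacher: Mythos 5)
The paper cites this composition theorem to \cite{DMNS} and does not include a proof, so there is no in-paper argument to compare against. Your proof is correct and self-contained. The reduction to $s=2$ by induction is standard; the key step is the characterization of the relation ``$P(S)\le e^\eps Q(S)+\delta$ for all $S$'' in terms of a sub-probability measure $\widetilde P$ dominated by both $P$ and $e^\eps Q$ with mass deficit at most $\delta$, constructed as $d\widetilde P = \min(p,e^\eps q)\,d(P+Q)$. You verify both directions correctly, and the identity $P(Y)-\widetilde P(Y)=\sup_S\bigl(P(S)-e^\eps Q(S)\bigr)$ is right (the sup is attained at $S=\{p>e^\eps q\}$). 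The product step is also sound: $\widetilde P_1\times\widetilde P_2\le P_1\times P_2$ and $\widetilde P_1\times\widetilde P_2\le e^{\eps_1+\eps_2}\,Q_1\times Q_2$ follow from monotonicity of product measures under pointwise domination of the factors, and $(1-\delta_1)(1-\delta_2)\ge 1-\delta_1-\delta_2$ gives the mass-deficit bound. Implicitly you rely on the composition having independent coin tosses in each coordinate, which matches the paper's non-adaptive definition of composition. Your closing remark about the slice-by-slice argument being lossy (yielding $e^{\eps_1}\delta_2+\delta_1$ or $e^{\eps_2}\delta_1+\delta_2$ depending on the Fubini order) is also accurate and correctly motivates the coupling-style argument. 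This sub-probability decomposition is indeed the standard clean route to the tight additive $\delta$-term; nothing is missing.
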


We also need a stronger result, which is a straightforward
extension of the composition theorem of Dwork, Rothblum, and
Vadhan~\cite{Dwork2010}. 
To state the result we define a notion of
privacy loss. Following~\cite{Dwork2010}, let us first define the
\emph{maximum divergence} of two random variables $a$ and $b$ as 
\begin{equation*}
  D_\infty(a \| b) = \max_{S} \ln \frac{\Pr[a \in S]}{\Pr[b \in S]},
\end{equation*}
where $S$ ranges over measurable subsets of the support of $b$. Note
that a mechanism $\mech = \{M_n\}$ is $(\eps, 0)$-differentially
private if and only if for every $n$ and any $\vec{x}, \vec{x}':
\|\vec{x} - \vec{x}'\|_1 \leq 1$, we have $D_\infty(M_n(\vec{x}) \|
M_n(\vec{x}')) \leq \eps$ and $D_\infty(M_n(\vec{x}') \| M_n(\vec{x}))
\leq \eps$. 

\begin{definition}
  Let $\mech$ be a composition of $\mech^1, \ldots, \mech^s$. The
  \emph{privacy loss of $i \in [n]$ for the $j$-th output}  is
  \begin{equation*}
    l_\mech(i, j) = \max_{\vec{x}, \vec{x}' = \vec{x} \pm
      \vec{e_i}}{D_\infty(M^j_n(\vec{x}) \| M_n^j(\vec{x}'))},
  \end{equation*}
  The \emph{($\ell_2$) privacy loss of $i \in [n]$} is $L_\mech(i) =
  \sqrt{\sum_{j \in [s]}{l_\mech(i, j)^2}}$.
\end{definition}

\begin{lemma}
  \label{lm:composition}
  Let $\mech$ be a composition of $\mech^1, \ldots, \mech^s$ and let
  $\eps > \max_{i \in [n]}{L_\mech(i)}$. Then, for any $\delta > 0$,
  $\mech$ satisfies $(\sqrt{2\ln(1/\delta)}\eps, \delta)$-differential
  privacy.
\end{lemma}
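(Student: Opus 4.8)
The goal is to show that a composition $\mech$ of mechanisms $\mech^1,\ldots,\mech^s$ with $\ell_2$ privacy loss bounded by $\eps$ for every coordinate $i$ satisfies $(\sqrt{2\ln(1/\delta)}\eps,\delta)$-differential privacy. The plan is to follow the proof of the Dwork--Rothblum--Vadhan composition theorem \cite{Dwork2010}, but to be careful about the fact that here the per-output privacy losses $l_\mech(i,j)$ are not all equal — they are only controlled in an $\ell_2$ sense. Fix $n$, fix a coordinate $i \in [n]$, and fix neighbours $\vec{x},\vec{x}' = \vec{x} \pm \vec{e_i}$; by symmetry of the definition of $l_\mech(i,j)$ over the $\pm$ choice, it suffices to bound $D_\infty^\delta(M_n(\vec{x}) \| M_n(\vec{x}'))$, the $\delta$-approximate max-divergence. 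Write the output as the tuple $(a_1,\ldots,a_s)$ where $a_j \sim M_n^j(\cdot)$, and let $\mathcal{L}_j$ be the privacy-loss random variable for the $j$-th coordinate, i.e. $\mathcal{L}_j = \ln \frac{\Pr[M_n^j(\vec{x}) = a_j]}{\Pr[M_n^j(\vec{x}') = a_j]}$ with $a_j$ drawn from $M_n^j(\vec{x})$ (in the continuous case this is a density ratio). The total privacy loss of the composition is $\mathcal{L} = \sum_{j=1}^s \mathcal{L}_j$.

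The two ingredients are: (i) a pointwise bound $|\mathcal{L}_j| \le l_\mech(i,j)$, which holds because $D_\infty(M_n^j(\vec{x}) \| M_n^j(\vec{x}')) \le l_\mech(i,j)$ and likewise with $\vec{x},\vec{x}'$ swapped; and (ii) a bound on the expectation $\mathbb{E}[\mathcal{L}_j]$. For (ii), the standard fact (used in \cite{Dwork2010}) is that for any two distributions $p,q$ with $\max(D_\infty(p\|q), D_\infty(q\|p)) \le \eps_j$, the KL-divergence satisfies $\mathbb{E}_{a\sim p}[\ln \frac{p(a)}{q(a)}] = \mathrm{KL}(p\|q) \le \eps_j(e^{\eps_j}-1) \le 2\eps_j^2$ for $\eps_j \le 1$ — hence $\mathbb{E}[\mathcal{L}_j] \le 2\, l_\mech(i,j)^2$ and so $\mathbb{E}[\mathcal{L}] \le 2 \sum_j l_\mech(i,j)^2 = 2 L_\mech(i)^2 \le 2\eps^2$. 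Now apply Azuma/Hoeffding to the independent (over $j$, by independence of the fresh randomness of each $\mech^j$) bounded variables $\mathcal{L}_j - \mathbb{E}[\mathcal{L}_j]$, which lie in an interval of length $2\,l_\mech(i,j)$: the sum $\mathcal{L} - \mathbb{E}[\mathcal{L}]$ has a sub-Gaussian tail with parameter $\sum_j (2 l_\mech(i,j))^2 / 4 = L_\mech(i)^2 \le \eps^2$. This gives $\Pr[\mathcal{L} > 2\eps^2 + t] \le \exp(-t^2/(2\eps^2))$; choosing $t$ so that the right-hand side equals $\delta$, i.e. $t = \eps\sqrt{2\ln(1/\delta)}$, shows $\Pr[\mathcal{L} > \eps\sqrt{2\ln(1/\delta)} + 2\eps^2] \le \delta$. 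A standard lemma then converts a high-probability bound on the privacy loss into approximate differential privacy: if $\Pr[\mathcal{L} > \eps'] \le \delta$ then $\mech$ is $(\eps',\delta)$-differentially private. Absorbing the lower-order $2\eps^2$ term (valid for $\eps$ below an absolute constant, which is the regime of interest; alternatively one states the bound as $(\eps\sqrt{2\ln(1/\delta)} + 2\eps^2,\delta)$ and notes the clean form) yields the claim.

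The conditioning subtlety deserves a remark: the definition of composition here allows $\mech^j$ to depend on the outputs $a_1,\ldots,a_{j-1}$ (adaptive composition). In that case the $\mathcal{L}_j$ are no longer independent, but conditioned on $a_1,\ldots,a_{j-1}$ the variable $\mathcal{L}_j$ still satisfies the pointwise bound $|\mathcal{L}_j| \le l_\mech(i,j)$ and the conditional-expectation bound $\mathbb{E}[\mathcal{L}_j \mid a_1,\ldots,a_{j-1}] \le 2\,l_\mech(i,j)^2$, so $\{\mathcal{L}_j - \mathbb{E}[\mathcal{L}_j \mid a_{<j}]\}$ is a martingale difference sequence with bounded increments and the Azuma--Hoeffding inequality applies verbatim. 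The main obstacle — really the only place care is needed — is getting the constants right in the martingale step so that the final parameter is exactly $\sqrt{2\ln(1/\delta)}\,\eps$ (up to the lower-order term), since a loose application of Azuma would give $\sqrt{8\ln(1/\delta)}\,\eps$ or similar; matching \cite{Dwork2010} requires using the sharper sub-Gaussian constant for zero-mean variables supported on an interval of a given length and combining it with the KL bound rather than a crude union over coordinates.
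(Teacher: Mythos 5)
Your proof is correct and takes exactly the route the paper intends: the paper does not prove Lemma~\ref{lm:composition}, presenting it as ``a straightforward extension of the composition theorem of Dwork, Rothblum, and Vadhan,'' and your argument---pointwise bound $|\mathcal{L}_j| \le l_\mech(i,j)$, KL bound $\mathbb{E}[\mathcal{L}_j \mid a_{<j}] \le 2\,l_\mech(i,j)^2$, then Azuma--Hoeffding with correctly tracked sub-Gaussian parameter $L_\mech(i)^2 \le \eps^2$---is precisely that extension, including the martingale handling of adaptivity. You also correctly surface the residual $2\eps^2$ term: the DRV proof genuinely produces $(\sqrt{2\ln(1/\delta)}\,\eps + 2\eps^2,\delta)$-differential privacy rather than the cleaner parameter stated in the lemma (indeed a commented-out earlier version of the lemma in the paper's source carries the analogous term $t\eps(e^\eps-1)$), so the elision is a pre-existing looseness in the paper's statement, not a flaw in your argument; for the paper's downstream use, where the constant in front of $\eps\sqrt{\ln(1/\delta)}$ is already generous, this makes no difference.
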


Note that for the range counting problem, the privacy loss is defined
for a point $p$.

\subsection{Discrepancy}

Here we define a modified notion of discrepancy. In
Section~\ref{sect:lb}, we show that this modified notion of
discrepancy is useful in carrying out Dinur-Nissm type attacks on
privacy.

\begin{definition}
  For any $\vec{A} \in \mathbb{R}^{m \times n}$, we define
  \begin{align*}
    \disc_{p, \alpha}(\vec{A}) &= \min_{\substack{\vec{x} \in \{0, \pm 1\}^n \\
        \|\vec{x}\|_1 \geq  \alpha \col(A) }} {\|\vec{A}\vec{x}\|_p}\\
    \herdisc_{p, \alpha}(\vec{A}) &=  
    \max_{S \subseteq [n]}{\disc_{p, \alpha}(\vec{A}|_S)}.
  \end{align*}
\end{definition}

The standard notions of discrepancy and hereditary discrepancy
correspond to the special cases $\disc = \disc_{\infty, 1}$ and
$\herdisc = \herdisc_{\infty, 1}$. The cases $\disc_{2, 1}$ and
$\herdisc_{2, 1}$ have also been extensively studied, especially as
means of proving lower bounds on $\disc$ and $\herdisc$. On the other
hand the case $\disc_{p, 0}$ is trivially the identically 0 function.
Next, we exhibit a connection between $\herdisc_{p, 1}$ and $\herdisc_{p,
  \alpha}$ for $\alpha \in (0, 1)$ and any $p$.  \junk{\begin{claim}
  \begin{equation}
   \herdisc_{p, 1} \leq \frac{\log n}{\log 1/(1-\alpha)}\herdisc_{p, \alpha}.     
  \end{equation}
\end{claim}
However, when discrepancy can be expressed as a polynomial function of
the number of elements, we can get tighter relationships between
$\herdisc_{p, 1}$ and $\herdisc_{p, \alpha}$.}
\begin{lemma}
  \label{cl:std2alpha-disc}
  Let $f(s) = \max_{S\subseteq [n]: |S| \leq s}{\disc_{p,
      \alpha}(\vec{A}|_S)}$. Then $\disc_{p, 1}(\vec{A}) \leq \sum_{i
    = 0}^\infty{f((1 - \alpha)^in)}$, and, therefore,
  $\herdisc_{p, 1}(\vec{A})$ $\leq {\sum_{i = 0}^\infty{f((1 -
      \alpha)^in)}}$
\end{lemma}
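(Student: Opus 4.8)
The plan is to prove the bound $\disc_{p,1}(\vec{A}) \leq \sum_{i=0}^\infty f((1-\alpha)^i n)$ by a greedy iterative argument that repeatedly extracts a partial $\{0,\pm 1\}$-coloring with large support from the set of still-uncolored columns, exactly mimicking the partial-coloring paradigm of discrepancy theory. First I would set $S_0 = [n]$, so that $|S_0| = n$. Given $S_i$ with $|S_i| \leq (1-\alpha)^i n$, the definition of $\disc_{p,\alpha}$ guarantees a vector $\vec{y}^{(i)} \in \{0,\pm 1\}^{S_i}$ with $\|\vec{y}^{(i)}\|_1 \geq \alpha |S_i|$ and $\|\vec{A}|_{S_i}\, \vec{y}^{(i)}\|_p \leq \disc_{p,\alpha}(\vec{A}|_{S_i}) \leq f(|S_i|) \leq f((1-\alpha)^i n)$, where the last inequality uses that $f$ is nondecreasing (so $f$ evaluated at the true size $|S_i|$ is at most $f$ at the upper bound $(1-\alpha)^i n$). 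Extend $\vec{y}^{(i)}$ to a vector in $\{0,\pm 1\}^n$ by putting zeros outside $S_i$; this does not change $\|\vec{A}\vec{y}^{(i)}\|_p$ since the extra columns are multiplied by $0$. Then set $S_{i+1}$ to be the set of indices in $S_i$ on which $\vec{y}^{(i)}$ is zero, i.e. the still-uncolored columns; since $\vec{y}^{(i)}$ is nonzero on at least $\alpha |S_i|$ coordinates of $S_i$, we get $|S_{i+1}| \leq (1-\alpha)|S_i| \leq (1-\alpha)^{i+1} n$, maintaining the invariant.

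Next I would combine the partial colorings into a single full coloring. Because the supports $\operatorname{supp}(\vec{y}^{(i)})$ are pairwise disjoint by construction (each $\vec{y}^{(i)}$ is supported inside $S_i \setminus S_{i+1}$, and the $S_i$ are nested and decreasing), the sum $\vec{x} = \sum_{i=0}^\infty \vec{y}^{(i)}$ is a well-defined vector in $\{0,\pm 1\}^n$: on each coordinate at most one term is nonzero. Moreover $\operatorname{supp}(\vec{x}) = \bigcup_i \operatorname{supp}(\vec{y}^{(i)}) = [n] \setminus \bigcap_i S_i$; since $|S_i| \leq (1-\alpha)^i n \to 0$, eventually $S_i = \emptyset$, so actually every coordinate of $[n]$ gets colored and $\|\vec{x}\|_1 = n \geq \alpha \cdot \col(\vec{A}) = \alpha n$ (in fact $\vec{x} \in \{\pm 1\}^n$, a fortiori a feasible point for $\disc_{p,1}$). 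Hence $\vec{x}$ is feasible for the minimization defining $\disc_{p,1}(\vec{A})$, and by the triangle inequality for the $\ell_p$ norm,
\[
\disc_{p,1}(\vec{A}) \leq \|\vec{A}\vec{x}\|_p = \left\| \sum_{i=0}^\infty \vec{A}\vec{y}^{(i)} \right\|_p \leq \sum_{i=0}^\infty \|\vec{A}\vec{y}^{(i)}\|_p \leq \sum_{i=0}^\infty f((1-\alpha)^i n),
\]
which is the claimed bound. The hereditary statement follows immediately: for any $S \subseteq [n]$, the same argument applied to $\vec{A}|_S$ gives $\disc_{p,1}(\vec{A}|_S) \leq \sum_i f'((1-\alpha)^i |S|)$ where $f'$ is the analogous quantity for $\vec{A}|_S$; since restrictions of restrictions are restrictions, $f'(s) \leq f(s)$, and since $|S| \leq n$ and $f$ is nondecreasing, this is at most $\sum_i f((1-\alpha)^i n)$; taking the max over $S$ gives $\herdisc_{p,1}(\vec{A}) \leq \sum_i f((1-\alpha)^i n)$.

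The only genuinely delicate points — and the ones I would state carefully rather than the routine norm manipulations — are (a) that $f$ is monotone nondecreasing, which is clear since the max in its definition is over a larger family of subsets as $s$ grows, so that evaluating at $(1-\alpha)^i n$ rather than at the actual (possibly smaller) $|S_i|$ is a valid upper bound; (b) that the process terminates, i.e. the geometric decay $|S_i| \leq (1-\alpha)^i n$ with $\alpha > 0$ forces $S_i = \emptyset$ for $i$ large enough, so the infinite sum is really a finite sum and $\vec{x}$ is genuinely a full $\pm 1$ coloring (this is where the budget constraint $\|\vec{y}^{(i)}\|_1 \geq \alpha|S_i|$ does the essential work); and (c) the disjointness of the supports, which makes the sum of partial colorings land in $\{0,\pm1\}^n$ rather than a larger box — this is automatic from recursing only on the uncolored coordinates. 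Everything else is the triangle inequality and bookkeeping.
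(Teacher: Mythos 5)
Your proof is correct and follows essentially the same approach as the paper's: the paper phrases it as a one-step recursion (pull out one partial coloring, let $S$ be the uncolored coordinates, recurse on $\vec{A}|_S$, and stitch), whereas you unroll the recursion into an explicit iterative sequence $S_0 \supseteq S_1 \supseteq \cdots$ with disjointly supported partial colorings $\vec{y}^{(0)}, \vec{y}^{(1)}, \ldots$ and then apply the triangle inequality once at the end. These are the same argument; your version spells out a few points the paper leaves implicit (monotonicity of $f$, finiteness of the sum because $|S_i| \to 0$, and disjointness of supports), which is sound bookkeeping rather than a new idea.
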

\begin{proof}
  We will
  find an assignment $\vec{x} \in \{\pm 1\}^n$ such that 
  $\|\vec{A} \vec{x}\|_p \leq \sum_{i   = 0}^\infty{f((1 -
    \alpha)^in)}$, which is sufficient to prove the lemma.
  Let $\vec{x}' \in \{0, \pm 1\}^n$ be such that $\|\vec{A}
  \vec{x}\|_p \leq f(n)$ and $\|\vec{x}\|_1 \geq \alpha n$. Let $S
  = \{i: x_i = 0\}$. Since $\|\vec{x}\|_1 \geq \alpha n$, $|S| \leq
  (1-\alpha)n$. We recurse to find an assignment $\vec{x}'' \in \{\pm
  1\}^S$ such that $\|(\vec{A}|_S) \vec{x}''\|_p \leq \sum_{i = 0}^\infty{f((1
    - \alpha)^i|S|)} \leq \sum_{i = 1}^\infty{f((1 -
    \alpha)^i n)}$. Set $x_i = x'_i$ when $i \not \in S$ and $x_i =
  x''_i$ when $i \in S$.
\end{proof}

Lemma~\ref{cl:std2alpha-disc} and the observation
$\herdisc_{p, \alpha} = \max_{s = 1}^n f(s)$ imply that for any
$\vec{A}$, 
\begin{equation*}
\herdisc_{p, 1}(\vec{A}) \leq \frac{\log n}{\log
  1/(1-\alpha)} \herdisc_{p, \alpha}(\vec{A}). 
\end{equation*}
However using Lemma~\ref{cl:std2alpha-disc} directly and the
observation that a restriction of a halfspace range space (or a range
space of axis-aligned boxes) is a range space of the same kind, we get
stronger lowerbounds for $\herdisc_{p, \alpha}$. Below we list several
interesting results that can be derived in this way from known results
in combinatorial discrepancy theory~\cite{chazelle2000discrepancy,
  matousek2010geometric}. Below we provide more specific references to
the discrepancy lower bound used to derive each result. We provide a
full proof of the first result; the remaining proofs follow analogous
reasoning.

\begin{lemma}[\cite{chazelle1995elementary}]
  \label{lm:hyperplanes-disc}
   For infinitely many $n$ there exists a set of $n$ points $P$ and
  $m$ halfspaces $H_1, \ldots, H_m$ in $\mathbb{R}^d$ ($d = O(1)$)
  such that the following holds. Let $\vec{A}$ denote the incidence
  matrix of the collection of sets $\{H_j \cap P, j \in [m]\}$. Then
  for any $\alpha = \Omega(1)$,
   $\herdisc_{2, \alpha}(\vec{A}) = \Omega(m^{1/2}n^{1/2 - 1/2d}).$
\end{lemma}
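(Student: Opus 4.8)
The plan is to reduce the statement to the classical $L_2$ combinatorial discrepancy lower bound for halfspaces and then ``transfer'' that bound to the partial-coloring quantity $\herdisc_{2,\alpha}$ by running the recursion of Lemma~\ref{cl:std2alpha-disc} backwards, exploiting the fact that a restriction of a halfspace range space is again a halfspace range space (on fewer points, with a polynomial shatter function bound that is inherited).

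First I would fix the base construction. For infinitely many $n$, the point configurations behind the tight $\Omega(n^{1/2-1/2d})$ lower bound for halfspace discrepancy (see~\cite{chazelle1995elementary} for an elementary treatment of Alexander's integral-geometry bound) produce a set $P$ of $n$ points in $\mathbb{R}^d$ on which every $\pm1$ coloring is unbalanced in an $L_2$ sense over halfspaces. From such a construction I would extract a finite family of $m$ halfspaces $H_1,\dots,H_m$ whose incidence matrix $\vec{A}$ satisfies $\frac1m\sum_{j=1}^m\big(\sum_{p\in P\cap H_j}x_p\big)^2 = \Omega(n^{1-1/d})$ for every $\vec{x}\in\{\pm1\}^P$; if the cited bound is stated as a continuous average over halfspaces this is just a discretization, since for fixed $P$ the signed count $\sum_{p\in P\cap H}x_p$ is a piecewise-constant function of $H$ with $O(n^d)$ pieces (and repeated rows in $\vec{A}$ are harmless, indeed helpful, because duplicate rows are deleted when passing to $\vec{A}|_S$). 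Rewriting, this says $\disc_{2,1}(\vec{A}) = \min_{\vec{x}\in\{\pm1\}^P}\|\vec{A}\vec{x}\|_2 = \Omega(\sqrt{m}\,n^{1/2-1/2d})$.

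Next I would bound the discrepancy of restrictions. Put $f(s) = \max_{S\subseteq[n],\,|S|\le s}\disc_{2,\alpha}(\vec{A}|_S)$. For any such $S$, the matrix $\vec{A}|_S$ is the incidence matrix of at most $\pi_\range(|S|) = O(|S|^d)$ distinct halfspace ranges on at most $s$ points; in particular $\vec{A}|_S$ has at most $m$ rows, so combining $\disc_{2,\alpha}\le\disc_{2,1}\le\sqrt{m}\cdot\disc_{\infty,1}$ with Matou\v{s}ek's tight (logarithm-free) upper bound $\disc_{\infty,1}=O(s^{1/2-1/2d})$ for halfspaces on $s$ points~\cite{matousek1995tight} gives $f(s) = O(\sqrt{m}\,s^{1/2-1/2d})$. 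Now I apply Lemma~\ref{cl:std2alpha-disc}, which yields $\disc_{2,1}(\vec{A})\le\sum_{i\ge0}f((1-\alpha)^i n)$. Since $\alpha=\Omega(1)$, the bound on $f$ makes the summands decay geometrically in $i$ with ratio $(1-\alpha)^{1/2-1/2d}<1$, so there is a constant $i_0=i_0(\alpha,d)$ (independent of $n$ and $m$) for which the tail $\sum_{i\ge i_0}f((1-\alpha)^i n)\le\tfrac12\disc_{2,1}(\vec{A})$; the remaining $i_0$ terms therefore sum to at least $\tfrac12\disc_{2,1}(\vec{A})$, and each of them is at most $f(n)=\herdisc_{2,\alpha}(\vec{A})$. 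Hence $\herdisc_{2,\alpha}(\vec{A})\ge\tfrac1{2i_0}\disc_{2,1}(\vec{A})=\Omega(\sqrt{m}\,n^{1/2-1/2d})$, with the hidden constant depending only on $d$ and the lower bound on $\alpha$.

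I expect the main obstacle to be the first step: putting the classical continuous $L_2$-discrepancy lower bound into the discrete, unnormalized form $\|\vec{A}\vec{x}\|_2=\Omega(\sqrt{m}\,n^{1/2-1/2d})$ for a genuine finite family of halfspaces while preserving the ``infinitely many $n$'' quantifier. A secondary but real point is that the restriction bound truly needs the logarithm-free halfspace discrepancy upper bound: any spurious $\log s$ factor would force $i_0$ to grow with $n$ and would reintroduce exactly the $\log n$ loss that going through Lemma~\ref{cl:std2alpha-disc} instead of the crude relation $\herdisc_{p,1}\le\frac{\log n}{\log 1/(1-\alpha)}\herdisc_{p,\alpha}$ is meant to avoid.
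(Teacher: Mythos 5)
Your proof is correct, but it takes a genuinely different route from the paper's. The paper argues by contradiction: it assumes $\herdisc_{2,\alpha}(\vec{A}) = o(m^{1/2}n^{1/2-1/2d})$ for all but finitely many halfspace incidence matrices, and observes that under this hypothesis the function $f(s) = \max_{|S|\le s}\disc_{2,\alpha}(\vec{A}|_S)$ is automatically $o(m^{1/2}s^{1/2-1/2d})$ on restrictions (which are again halfspace systems); plugging this into Lemma~\ref{cl:std2alpha-disc} gives $\herdisc_{2,1}(\vec{B}) = o(m^{1/2}n^{1/2-1/2d})$, contradicting the classical lower bound from~\cite{chazelle1995elementary}. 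In other words, the contradiction hypothesis itself serves as the needed bound on $f$ — no independent upper bound on halfspace discrepancy is required. Your argument is direct rather than contrapositive: you bound $f(s)$ unconditionally via Matou\v{s}ek's tight $\disc_{\infty,1} = O(s^{1/2-1/2d})$ upper bound, use that to show the tail of the sum in Lemma~\ref{cl:std2alpha-disc} is at most half of $\disc_{2,1}(\vec{A})$ for a constant cutoff $i_0(\alpha,d)$, and then extract $\herdisc_{2,\alpha}(\vec{A}) \geq \frac{1}{2i_0}\disc_{2,1}(\vec{A})$ from the head. Both proofs lean on Lemma~\ref{cl:std2alpha-disc} and the restriction-closure of halfspace range spaces, and both ultimately come down to the classical $\Omega(m^{1/2}n^{1/2-1/2d})$ lower bound; the paper's version is more economical since it needs only that lower bound, whereas yours additionally invokes the matching (and nontrivial) log-free discrepancy upper bound — a dependency you correctly flag as essential, since any $\log s$ factor in the bound on $f$ would force $i_0$ to grow with $n$ and reintroduce a logarithmic loss. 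Your direct argument does have the virtue of making the dependence of the hidden constant on $\alpha$ and $d$ explicit, which the contrapositive leaves implicit in the $o(\cdot)$/$\Omega(\cdot)$ bookkeeping.
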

\begin{proof}
  Assume for contradiction that all but finately many $m \times n$
  incidence matrices $\vec{A}$ of halfspaces in $\mathbb{R}^d$ have
  hereditary $\alpha$-discrepancy $\herdisc_{2, \alpha}(\vec{A}) =
  o(m^{1/2}n^{1/2 - 1/2d})$.  By the results in
  \cite{chazelle1995elementary}, there exist infinitely many sets of
  $n$ points $P$ and $m = {n \choose d}$ halfspaces $H_1, \ldots, H_m$
  such that the incidence matrix $\vec{B}$ of $\{H_j \cap P, j \in
  [m]\}$ has hereditary discrepancy $\herdisc_{2, 1} =
  \Omega(m^{1/2}n^{1/2 - 1/2d})$. Let us fix any such set of points
  and halfspaces and the corresponding incidence matrix $\vec{B}$. Any
  restriction $\vec{B}|_S$ for $S \subseteq P$ is also the incidence
  matrix of sets induced by points and halfspaces, and by assumption,
  $\herdisc(\vec{B}|_S) = o(m^{1/2}|S|^{1/2 - 1/2d})$. Plugging this
  bound in Lemma~\ref{cl:std2alpha-disc} we get $\herdisc_{2,
    1}(\vec{B}) = o(m^{1/2}n^{1/2 - 1/2d})$, a contradiction.
\end{proof}

\begin{lemma}[\cite{roth1954irregularities,Beck:fk}]
  \label{lm:boxes-disc} 
  For infinitely many $n$ there exists a set of $n$ points $P$ and $m$ axis-parallel boxes $B_1,
  \ldots, B_m$ in $\mathbb{R}^d$ ($d = O(1)$) such that the following holds. Let
  $\vec{A}$ denote the incidence matrix of the collection of sets $\{B_j
  \cap P, j \in [m]\}$. Then for any $\alpha = \Omega(1)$,
    $  \herdisc_{2, \alpha}(\vec{A}) = \Omega(m^{1/2}(\log n)^{d/2 - 3/2}).$
\end{lemma}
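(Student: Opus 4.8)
The proof follows the template of the proof of Lemma~\ref{lm:hyperplanes-disc}, with the polynomial hereditary-discrepancy bound for halfspaces replaced by the polylogarithmic $L_2$-discrepancy bound for axis-parallel boxes. The plan is: (1) invoke the classical irregularities-of-distribution lower bound for boxes, in its combinatorial hereditary form; (2) assume for contradiction that $\herdisc_{2,\alpha}$ of every (large enough) box incidence matrix is $o(m^{1/2}(\log n)^{(d-3)/2})$; and (3) use the fact that restrictions of box range spaces are again box range spaces to feed the resulting bound on $f(s) = \max_{|S|\le s}\disc_{2,\alpha}(\vec{A}|_S)$ into Lemma~\ref{cl:std2alpha-disc}, obtaining a bound on $\herdisc_{2,1}$ that contradicts step~(1).

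For step~(1), the classical results of Roth and Beck~\cite{roth1954irregularities,Beck:fk} (see also~\cite{chazelle2000discrepancy,matousek2010geometric}), transferred from continuous measures to $\{\pm 1\}$-colorings in the standard way, say that for infinitely many $n$ there is a set $P$ of $n$ points and a family of $m$ axis-parallel boxes in $\mathbb{R}^d$ whose incidence matrix $\vec{B}$ satisfies $\herdisc_{2,1}(\vec{B}) = \Omega(m^{1/2}(\log n)^{(d-1)/2})$ --- indeed every $\vec{x}\in\{\pm1\}^n$ has $\|\vec{B}\vec{x}\|_2^2 = \Omega(m(\log n)^{d-1})$. Fix such a $\vec{B}$. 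Now suppose toward a contradiction that all but finitely many $m\times n$ incidence matrices $\vec{A}$ of axis-parallel boxes have $\herdisc_{2,\alpha}(\vec{A}) = o(m^{1/2}(\log n)^{(d-3)/2})$ (note $d/2-3/2 = (d-3)/2$). Since every restriction $\vec{B}|_S$, $S\subseteq P$, is again the incidence matrix of a box range space on $|S|$ points, the hypothesis applies to it, and hence $f(s) = o(m^{1/2}(\log s)^{(d-3)/2})$ for $s$ above a constant (for $s = O(1)$ we simply use the trivial bound $f(s)\le m^{1/2}s$). Writing $c = \log\frac{1}{1-\alpha}>0$ and plugging into Lemma~\ref{cl:std2alpha-disc},
\[
  \herdisc_{2,1}(\vec{B}) \;\le\; \sum_{i\ge 0} f\big((1-\alpha)^i n\big)
  \;=\; o\!\Big( m^{1/2}\!\!\sum_{0\le i < (\log n)/c}(\log n - ci)^{(d-3)/2} \Big) + O(m^{1/2}).
\]

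For $d\ge 2$ we have $(d-3)/2 > -1$, so $\sum_{0\le i < (\log n)/c}(\log n - ci)^{(d-3)/2}$ is comparable to $\frac{1}{c}\int_0^{\log n}t^{(d-3)/2}\,dt = \Theta((\log n)^{(d-1)/2})$ (the finitely many terms with $\log n - ci \in (0,1)$ contribute only $O(1)$). Hence $\herdisc_{2,1}(\vec{B}) = o(m^{1/2}(\log n)^{(d-1)/2})$, contradicting step~(1); this proves $\herdisc_{2,\alpha}(\vec{A}) = \Omega(m^{1/2}(\log n)^{(d-3)/2})$ for infinitely many $n$. The combinatorial mechanics are routine once Lemma~\ref{cl:std2alpha-disc} is available; the two points that require care are (i) quoting the box-discrepancy lower bound of Roth and Beck in exactly the hereditary, combinatorial, root-mean-square form used above (the textbook statements are for the continuous/anchored setting, so one invokes the standard transference to two-colorings together with the observation that box systems are closed under restriction), and (ii) the elementary estimate of $\sum_i f((1-\alpha)^i n)$: because the box discrepancy is only polylogarithmic in $n$, this sum behaves like an integral rather than a geometric series and costs a full factor of $\log n$ relative to $f(n)$ --- which is precisely why the exponent here is $(d-3)/2$ rather than the $(d-1)/2$ of $\herdisc_{2,1}$, in contrast with Lemma~\ref{lm:hyperplanes-disc}, where the genuinely geometric series incurs no logarithmic loss.
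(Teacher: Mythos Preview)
Your proposal is correct and follows essentially the same approach as the paper, which explicitly states that the proof of this lemma ``follow[s] analogous reasoning'' to that of Lemma~\ref{lm:hyperplanes-disc}. You correctly carry out the template (contradiction hypothesis, closure of box range spaces under restriction, then Lemma~\ref{cl:std2alpha-disc}), and you rightly note the one genuine difference from the halfspace case: because the box discrepancy bound is polylogarithmic in $|S|$, the sum $\sum_i f((1-\alpha)^i n)$ is a Riemann-type sum rather than a geometric series and therefore costs an extra factor of $\log n$, which is exactly what accounts for the exponent $(d-3)/2$ versus $(d-1)/2$.
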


\begin{lemma}[\cite{chazelle2001trace}]
  \label{lm:boxes-highdim-disc}
  For infinitely many $n$ there exists a set of $n$ points $P$ and
  $m$ axis-parallel boxes $B_1, \ldots, B_m$ in $\mathbb{R}^d$ ($d =
  \Theta(\log n)$) such that the following holds. Let $\vec{A}$ denote
  the incidence matrix of the collection of sets $\{B_i \cap P, j \in
  [m]\}$. Then for any $\alpha = \Omega(1)$,
   $   \herdisc_{\infty, \alpha}(\vec{A}) = n^{\Omega(1)}.$
\end{lemma}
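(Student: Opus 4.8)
The plan is to follow the proof of Lemma~\ref{lm:hyperplanes-disc} essentially verbatim, replacing the combinatorial discrepancy lower bound of~\cite{chazelle1995elementary} for halfspaces with the trace-bound lower bound of~\cite{chazelle2001trace} for axis-parallel boxes in dimension $d = \Theta(\log n)$, and invoking Lemma~\ref{cl:std2alpha-disc} with $p = \infty$. Let $\beta > 0$ be the absolute constant for which~\cite{chazelle2001trace} produces, for infinitely many $n$, a set $P$ of $n$ points and axis-parallel boxes $B_1, \dots, B_m$ in $\mathbb{R}^d$, $d = \Theta(\log n)$, whose incidence matrix $\vec{B}$ satisfies $\disc_{\infty, 1}(\vec{B}) = \Omega(n^\beta)$. (If~\cite{chazelle2001trace} only yields this for the hereditary discrepancy, replace $\vec{B}$ by the submatrix attaining the hereditary value; since discrepancy never exceeds the number of columns, that submatrix still has $n^{\Omega(1)}$ columns, which is all the argument below needs.) I would then assume for contradiction that the conclusion fails already for this $\beta$: that for all but finitely many $n$, every incidence matrix $\vec{A}$ of $n$ points and axis-parallel boxes in dimension $\Theta(\log n)$ has $\herdisc_{\infty, \alpha}(\vec{A}) = o(n^\beta)$.

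Given this, I would apply Lemma~\ref{cl:std2alpha-disc} (with $p = \infty$) to $\vec{B}$, obtaining $\disc_{\infty, 1}(\vec{B}) \leq \sum_{i=0}^\infty f\bigl((1-\alpha)^i n\bigr)$ with $f(s) = \max_{S : |S| \leq s} \disc_{\infty, \alpha}(\vec{B}|_S)$, and bound the sum using the fact (already used in Lemma~\ref{lm:hyperplanes-disc}) that each $\vec{B}|_S$ is itself the incidence matrix of points and axis-parallel boxes. The subtle point, and the place where this proof genuinely differs from the halfspace case, is that $\vec{B}|_S$ has only $|S|$ points yet still lives in dimension $d = \Theta(\log n)$, and $\Theta(\log n) = \Theta(\log |S|)$ only when $|S| \geq n^c$ for some constant $c > 0$; so the contradiction hypothesis legitimately applies to $\vec{B}|_S$ only in that range. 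I would therefore split the sum at the index where $(1-\alpha)^i n$ drops below $n^c$: for terms with argument $s \geq n^c$ the hypothesis gives $f(s) = o(s^\beta)$, with $f(s)/s^\beta \to 0$ uniformly over $s \geq n^c$, so summing the resulting geometric series (convergent since $(1-\alpha)^\beta < 1$) bounds this part by $o(n^\beta)$; for terms with argument $s < n^c$ the trivial bound $f(s) \leq s$ suffices, and these decrease geometrically and sum to $O(n^c)$. Choosing $c < \beta$ gives $\disc_{\infty, 1}(\vec{B}) \leq o(n^\beta) + O(n^c) = o(n^\beta)$, contradicting $\disc_{\infty, 1}(\vec{B}) = \Omega(n^\beta)$.

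I expect the main obstacle to be purely bookkeeping around the growing dimension: ensuring the ``all but finitely many'' quantifier in the contradiction hypothesis is applied only to restrictions $\vec{B}|_S$ whose size is polynomially related to $n$ (so that $d = \Theta(\log n) = \Theta(\log|S|)$), and verifying that the low-order tail of the sum in Lemma~\ref{cl:std2alpha-disc} is negligible relative to $n^\beta$. Apart from that, the argument is the same geometric-series estimate already carried out in Lemma~\ref{cl:std2alpha-disc} and Lemma~\ref{lm:hyperplanes-disc}, and no new discrepancy-theoretic input beyond~\cite{chazelle2001trace} is needed.
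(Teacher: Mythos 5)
Your proposal follows essentially the route the paper intends: the paper asserts that the box lemmas ``follow analogous reasoning'' to the halfspace proof (Lemma~\ref{lm:hyperplanes-disc}), and you do exactly this, starting from the Chazelle--Lvov discrepancy lower bound~\cite{chazelle2001trace}, running the contrapositive of Lemma~\ref{cl:std2alpha-disc} with $p=\infty$, and summing the geometric series. You also correctly flag and handle the one genuinely new issue: a restriction $\vec{B}|_S$ still lives in ambient dimension $\Theta(\log n)$, which is $\Theta(\log|S|)$ only when $|S|$ is polynomially related to $n$, so the contradiction hypothesis applies only to restrictions of size at least $n^c$. Truncating the sum at that threshold and using $f(s) \leq s$ on the tail is the right move, as is the hedge about whether~\cite{chazelle2001trace} controls $\disc$ or $\herdisc$.

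One minor imprecision worth repairing: for $s \geq n^c$, the quantity $f(s) = \max_{|S|\leq s}\disc_{\infty,\alpha}(\vec{B}|_S)$ still ranges over subsets of size below $n^c$, so the claim ``$f(s) = o(s^\beta)$ uniformly over $s \geq n^c$'' is not literally true near the threshold: at $s = n^c$ you only have $f(s) \leq n^c$, while $s^\beta = n^{c\beta}$ with $\beta < 1$, so the ratio blows up. The fix is immediate: bound $f(s) \leq \max\bigl(n^c,\ \epsilon_n s^\beta\bigr)$ where $\epsilon_n \to 0$ is supplied by the contradiction hypothesis applied to restrictions of size at least $n^c$, and observe that the $n^c$ term contributes $O(n^c\log n) = o(n^\beta)$ over the $O(\log n)$ nonzero summands, while the $\epsilon_n s^\beta$ term sums to $O(\epsilon_n n^\beta) = o(n^\beta)$. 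The conclusion is unchanged.
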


\begin{lemma}[\cite{spencer1985six}]
  \label{lm:gen-disc}
  For any $n$ and $m > n$ there exists a matrix $\vec{A} \in \{0,
  1\}^{m \times n}$ such that $\herdisc_{\infty, \alpha}(A) =
  \Omega(\sqrt{n \log 2m/n})$.
\end{lemma}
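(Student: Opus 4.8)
The plan is to show the bound is achieved by a random $\{0,1\}$ matrix, reusing the classical argument behind the discrepancy lower bound for random set systems (the lower-bound direction of Spencer's estimate~\cite{spencer1985six}), with the union bound run over partial colorings.

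First I would record the reformulation
\[
  \disc_{\infty,\alpha}(\vec A) \;=\; \min_{T \subseteq [n],\ |T| \ge \alpha n} \disc_{\infty,1}(\vec A|_T),
\]
which is immediate from the definition: a vector $\vec x \in \{0,\pm1\}^n$ with $\|\vec x\|_1 \ge \alpha n$ is exactly a full $\pm 1$ coloring of a set $T = \mathrm{supp}(\vec x)$ with $|T| = \|\vec x\|_1 \ge \alpha n$, and $\vec A\vec x = (\vec A|_T)(\vec x|_T)$ (deleting duplicated rows does not change $\|\cdot\|_\infty$). Since $\herdisc_{\infty,\alpha}(\vec A)\ge \disc_{\infty,\alpha}(\vec A)$ (take $S = [n]$), it therefore suffices to produce $\vec A$ for which \emph{every} restriction to at least $\alpha n$ columns has ordinary discrepancy $\Omega(\sqrt{n\log (2m/n)})$.

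Take $\vec A$ with independent $\mathrm{Bernoulli}(1/2)$ entries and fix $\vec x \in \{0,\pm1\}^n$ with $k := \|\vec x\|_1 \ge \alpha n$. The coordinates $(\vec A\vec x)_1,\dots,(\vec A\vec x)_m$ are i.i.d., each equal in distribution to $\frac12\big((|P|-|N|)+S_k\big)$, where $P,N$ partition $\mathrm{supp}(\vec x)$ and $S_k$ is a sum of $k$ independent Rademacher signs; so standard anticoncentration (the Erd\H{o}s--Littlewood--Offord bound $\Pr[|S_k| \le t] = O((t+1)/\sqrt k)$, together with a Gaussian-type lower tail bound $\Pr[|S_k| > t] = \Omega(e^{-O(t^2/k)})$ for $t = O(k)$) gives constants $c_1, c_2, c_3 > 0$ with $\Pr[|(\vec A\vec x)_i|\le\lambda]\le \min\{1 - c_1 e^{-c_2\lambda^2/k},\ c_3(\lambda+1)/\sqrt k\}$. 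Using whichever of the two bounds is smaller, independence of the rows, and $k\ge\alpha n$, one checks that for $\lambda = \gamma\sqrt{n\log(2m/n)}$ with $\gamma = \gamma(\alpha)$ a small enough constant, $\Pr[\|\vec A\vec x\|_\infty\le\lambda]$ is small enough that a union bound over the $\le 3^n$ admissible $\vec x$ gives $\Pr[\disc_{\infty,\alpha}(\vec A)\le\lambda] < 1$; hence some $\vec A$ attains $\herdisc_{\infty,\alpha}(\vec A)\ge\disc_{\infty,\alpha}(\vec A) > \lambda$.

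The one delicate point is the bookkeeping for $m = \Theta(n)$, where $\log(2m/n) = \Theta(1)$ and the $3^n$ union bound is only barely beaten: there one must invoke the \emph{sharp} concentration-function bound (choosing $\gamma$ so that $\Pr[|(\vec A\vec x)_i| \le \lambda] \le \frac14$), exactly as in the textbook proof of the $\Omega(\sqrt n)$ lower bound; the Gaussian tail bound alone does not suffice against $3^n$ with a bounded exponent. Alternatively, one can handle $n < m \le Cn$ by a direct Hadamard-type construction, whose near-orthogonal columns force $\|\vec A|_T\vec y\|_2^2 = \Omega(n|T|)$ and hence $\|\vec A|_T\vec y\|_\infty = \Omega(\sqrt{|T|}) = \Omega(\sqrt n)$ for all $\vec y\in\{\pm1\}^T$, and use the probabilistic construction only for $m \ge Cn$, where there is ample slack; the tail lower bound there needs $\lambda = O(k)$, which holds since $k \ge \alpha n$ and $\lambda = O(\sqrt{n\log(2m/n)})$ as long as $m$ is at most exponential in $n$ --- for larger $m$ the claim is vacuous, as $\disc_{\infty,\alpha}(\vec A) \le n$.
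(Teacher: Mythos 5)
Your proof is correct in outline but takes a genuinely different route from the one the paper intends. The paper derives this lemma by the same scheme as its proof of Lemma~\ref{lm:hyperplanes-disc}: invoke Spencer's lower bound on the ordinary discrepancy $\disc_{\infty,1}$ of $\{0,1\}$ matrices as a black box, and combine it with Lemma~\ref{cl:std2alpha-disc}, using the fact that every column restriction of a $\{0,1\}$ matrix is again a $\{0,1\}$ matrix of the same kind --- so if $\herdisc_{\infty,\alpha}$ were $o(\sqrt{n\log(2m/n)})$ for all such matrices and all $n$, Lemma~\ref{cl:std2alpha-disc} would force $\disc_{\infty,1}$ to be $o(\sqrt{n\log(2m/n)})$ as well, contradicting Spencer. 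You instead reopen the black box and run the probabilistic construction directly against the $\alpha$-discrepancy objective: you lower bound $\disc_{\infty,\alpha}(\vec A)$ for a random $\vec A$ (which suffices since $\herdisc_{\infty,\alpha}\ge\disc_{\infty,\alpha}$) via per-coordinate anticoncentration plus a union bound over the at most $3^n$ admissible $\vec x$. This is sound, and you correctly identify the one delicate point: the Gaussian tail lower bound alone is not sharp enough to beat $3^n$ when $m=\Theta(n)$, so the Erd\H{o}s--Littlewood--Offord bound (or a separate explicit construction) is needed there, while for $m\gg n$ the Gaussian tail gives the extra $\sqrt{\log(2m/n)}$ factor. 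The trade-off: the paper's route is a two-line corollary given its Lemma~\ref{cl:std2alpha-disc} machinery, while yours is self-contained, avoids the somewhat awkward ``assume all restrictions have small $\alpha$-discrepancy'' contradiction framing, and makes the dependence on $\alpha$ (and the implicit restriction that $m$ be at most exponential in $n$) explicit.
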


\section{Lower Bounds for Privacy from Discrepancy}
\label{sect:lb}

Our main result in this section is a noise lower bound on $(\eps,
\delta)$-differentially private mechanisms that approximate range
counting queries for a host of natural geometric range spaces. Our
main conceptual contribution is in identifying $\herdisc_{p, \alpha}$
as the key quantity in showing lower bounds against $(\eps,
\delta)$-differential privacy via a Dinur-Nissim type attack, and
connecting this quantity to the standard notion of combinatorial
discrepancy.

\begin{theorem}
  \label{thm:main-lb}
  For any $\alpha, \beta$, there exist $\eps(\alpha, \beta)$ and
  $\delta(\alpha, \beta)$ such that no mechanism $\mech = \{M_n\}$
over the universe $\{0, 1\}$ with range $\mathbb{R}^m$ that \junk{for some
$n$} for some $p$ satisfies
\begin{equation*}
  \forall \vec{x} \in \{0, 1\}^n: \Pr[\|M_n(\vec{x}) -
  \vec{A}\vec{x}\|_p < \disc_{p, \alpha}(\vec{A})/2] \geq 1- \beta,
\end{equation*}
is $(\eps, \delta)$-differentially private.
\end{theorem}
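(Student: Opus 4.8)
The plan is to carry out a Dinur--Nissim style reconstruction attack: from a single (noisy) answer vector produced by $\mech$ we decode the input up to $\ell_1$ error less than $\alpha n$ with probability at least $1-\beta$, and then invoke Lemma~\ref{lm:non-priv-dp} to conclude that $\mech$ cannot be $(\eps,\delta)$-differentially private for the resulting $\eps = \eps(\alpha,\beta)$ and $\delta = \delta(\alpha,\beta)$. Fix the value of $p$ for which the accuracy hypothesis holds. (If $\disc_{p,\alpha}(\vec{A}) = 0$ the hypothesis is vacuous for any $\beta < 1$, since $\Pr[\,\cdot < 0] = 0$, so we may assume $\disc_{p,\alpha}(\vec{A}) > 0$.)

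The decoder $\alg$ is the natural one: on input $\vec{y} \in \mathbb{R}^m$, output any $\vec{z} \in \{0,1\}^n$ with $\|\vec{A}\vec{z} - \vec{y}\|_p < \disc_{p,\alpha}(\vec{A})/2$ if such a $\vec{z}$ exists, and an arbitrary point of $\{0,1\}^n$ otherwise (efficiency is irrelevant here). By the accuracy hypothesis, with probability at least $1-\beta$ over $\vec{y} = M_n(\vec{x})$ the true input $\vec{x}$ is itself such a point, so the first branch fires and the returned $\vec{z}$ also satisfies $\|\vec{A}\vec{z} - \vec{y}\|_p < \disc_{p,\alpha}(\vec{A})/2$.

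Condition on this event. By the triangle inequality, $\|\vec{A}(\vec{x} - \vec{z})\|_p \le \|\vec{A}\vec{x} - \vec{y}\|_p + \|\vec{y} - \vec{A}\vec{z}\|_p < \disc_{p,\alpha}(\vec{A})$. Now $\vec{x} - \vec{z} \in \{0,\pm 1\}^n$, and by definition $\disc_{p,\alpha}(\vec{A})$ is the minimum of $\|\vec{A}\vec{w}\|_p$ over all $\vec{w} \in \{0,\pm 1\}^n$ with $\|\vec{w}\|_1 \ge \alpha\col(\vec{A}) = \alpha n$; hence every such $\vec{w}$ has $\|\vec{A}\vec{w}\|_p \ge \disc_{p,\alpha}(\vec{A})$. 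Taking the contrapositive with $\vec{w} = \vec{x} - \vec{z}$ yields $\|\vec{x} - \vec{z}\|_1 < \alpha n$. Thus $\alg\circ M_n$ reconstructs $\vec{x}$ to within $\ell_1$-distance $\alpha n$ whenever the accuracy event holds, i.e.~with probability at least $1-\beta$. Feeding this into Lemma~\ref{lm:non-priv-dp} with parameters $\alpha,\beta$ produces $\eps(\alpha,\beta)$ and $\delta(\alpha,\beta)$ — depending only on $\alpha,\beta$, not on $\mech$ or $\vec{A}$ — witnessing that $\mech$ is not $(\eps,\delta)$-differentially private.

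I do not expect a serious obstacle here; the work is conceptual rather than technical, the point being precisely the identification of $\disc_{p,\alpha}$ as the quantity that controls when the attack goes through. The two places that need care are: (i) making the decoder well defined and keeping the true input a feasible candidate — this is exactly why the accuracy threshold is set at $\disc_{p,\alpha}(\vec{A})/2$ rather than $\disc_{p,\alpha}(\vec{A})$, so that the two halves combine through the triangle inequality; and (ii) the universe mismatch — Lemma~\ref{lm:non-priv-dp} is phrased for mechanisms over $\mathbb{Z}$, whereas $\mech$ here acts on $\{0,1\}^n$, so we should either cite the $\{0,1\}$-version of that lemma or note that a reconstruction attack confined to $\{0,1\}^n$ inputs already contradicts $(\eps,\delta)$-differential privacy by the same hybrid/packing argument. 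Neither requires new ideas.
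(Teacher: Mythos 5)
Your proposal is correct and matches the paper's argument essentially step for step: the decoder that returns any $\vec{z}\in\{0,1\}^n$ whose image under $\vec{A}$ lies within $\disc_{p,\alpha}(\vec{A})/2$ of the observed output, the triangle inequality combining the two half-radii, the appeal to the definition of $\disc_{p,\alpha}$ to bound $\|\vec{x}-\vec{z}\|_1$, and the final invocation of Lemma~\ref{lm:non-priv-dp}. The only differences are cosmetic (contrapositive versus contradiction), and your remark (ii) about the $\{0,1\}$ versus $\mathbb{Z}$ universe in Lemma~\ref{lm:non-priv-dp} is a reasonable observation that the paper glosses over but which, as you say, is resolved by the same hybrid argument.
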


We extend the lower bound to $\herdisc_{p, \alpha}$. This allows us to
use the connection between $\herdisc_{p, \alpha}$ and standard
discrepancy.
\begin{cor}
  \label{cor:mainlb}
  For any $\alpha, \beta$, there exist $\eps(\alpha, \beta)$ and
  $\delta(\alpha, \beta)$ such that no mechanism $\mech = \{M_n\}$
  over the universe $\{0, 1\}$ with range $\mathbb{R}^m$ that \junk{for some
  $n$} for some $p$  satisfies
\begin{equation*}
  \forall \vec{x} \in \{0, 1\}^n: \Pr[\|M_n(\vec{x}) -
  \vec{A}\vec{x}\|_p < \herdisc_{p, \alpha}(\vec{A})/2] \geq 1- \beta,
\end{equation*}
is $(\eps, \delta)$-differentially private.
\end{cor}
\begin{proof}
  We claim that given $M_n$ and any set $S \subseteq [n]$, we can
  construct $M'_n$ that takes as input $\vec{x}|_S$, is
  $(\eps, \delta)$-differentially private (with respect to
  $\vec{x}|_S$), and satisfies
  \begin{equation*}
    \forall \vec{x}|_S: \Pr[\|M'_n(\vec{x}|_S) -  (\vec{A}|_S)(\vec{x}|_S)\|_p
    < \herdisc_{p, \alpha}(\vec{A}|_S)/2] \geq 1- \beta. 
  \end{equation*}
  Then we can take $S$ such that $\disc_{p, \alpha}(\vec{A}|_S) =
  \herdisc_{p, \alpha}(\vec{A})$, and the corollary follows from
  Theorem~\ref{thm:main-lb}. 

  We define $M_n'$ as follows: $M_n'(\vec{x}|_S)$ extends
  $\vec{x}|_S$ to $\vec{x}$ by setting $x_i = 0$ for all $i \not \in
  S$ and outputs $M_n(\vec{x})$. It's easy to verify that $M_n'$
  satisfies the claimed properties.
\end{proof}

Theorem \ref{thm:main-lb} follows from Lemma~\ref{lm:non-priv-dp} and
the following lemma.
\begin{lemma}
  There exists a deterministic (not necessarily efficient) algorithm
  $\alg$ that on input a matrix $\vec{A} \in \mathbb{R}^{m \times n}$
  and a vector $\tilde{\vec{y}} \in \mathbb{R}^m$ satisfying
  $\|\tilde{\vec{y}} - \vec{A}\vec{x}\|_p < \disc_{p,
    \alpha}(\vec{A})/2$ for some $\vec{x} \in \{0, 1\}^n$, outputs a
  vector $\vec{x}' \in \{0, 1\}^n$ such that $\|\vec{x}' - \vec{x}\|_1
  \leq \alpha n$.
\end{lemma}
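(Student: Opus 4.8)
The plan is to exhibit a decoding algorithm in the natural way: given $\tilde{\vec{y}}$, search over all candidate inputs $\vec{x}' \in \{0,1\}^n$ and output any one that is consistent with $\tilde{\vec{y}}$ in the sense that $\|\tilde{\vec{y}} - \vec{A}\vec{x}'\|_p < \disc_{p,\alpha}(\vec{A})/2$. Such a candidate exists because the true $\vec{x}$ satisfies this by hypothesis, so the algorithm is well-defined (it can, for concreteness, return the lexicographically first consistent vector, making it deterministic). The only thing left is to argue that \emph{every} consistent $\vec{x}'$ is close to $\vec{x}$ in $\ell_1$.

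The core step is a triangle-inequality argument. Suppose $\vec{x}' \in \{0,1\}^n$ is consistent, i.e. $\|\tilde{\vec{y}} - \vec{A}\vec{x}'\|_p < \disc_{p,\alpha}(\vec{A})/2$. Combined with the hypothesis $\|\tilde{\vec{y}} - \vec{A}\vec{x}\|_p < \disc_{p,\alpha}(\vec{A})/2$, the triangle inequality gives $\|\vec{A}(\vec{x} - \vec{x}')\|_p < \disc_{p,\alpha}(\vec{A})$. Now I would set $\vec{z} = \vec{x} - \vec{x}'$ and observe that since $\vec{x}, \vec{x}' \in \{0,1\}^n$, we have $\vec{z} \in \{0,\pm 1\}^n$. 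By the definition of $\disc_{p,\alpha}(\vec{A})$ as the minimum of $\|\vec{A}\vec{z}\|_p$ over all $\vec{z} \in \{0,\pm 1\}^n$ with $\|\vec{z}\|_1 \geq \alpha\, \col(\vec{A}) = \alpha n$, any $\vec{z}$ strictly violating this lower bound must have $\|\vec{z}\|_1 < \alpha n$. Hence $\|\vec{x} - \vec{x}'\|_1 < \alpha n \leq \alpha n$, which is exactly the claimed bound. (One mild care point: if there is no $\vec{z} \in \{0,\pm1\}^n$ with $\|\vec{z}\|_1 \geq \alpha n$ at all, then $\disc_{p,\alpha}$ is a min over an empty set; but in the regime of interest $\alpha < 1$ and $n \geq 1$ this does not arise, and one can adopt the convention $\disc_{p,\alpha} = +\infty$ in that degenerate case, making the statement vacuous there.)

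I do not expect a genuine obstacle here; the lemma is essentially a restatement of the definition of $\disc_{p,\alpha}$ packaged as a decoding guarantee, and the whole content is the observation that the difference of two $0/1$ vectors lands in $\{0,\pm 1\}^n$ and is therefore a legal ``coloring'' for the $\alpha$-discrepancy minimization. The only thing to be slightly careful about is keeping the inequalities strict where needed: the hypothesis uses a strict inequality $< \disc_{p,\alpha}(\vec{A})/2$, so the triangle inequality yields a strict $< \disc_{p,\alpha}(\vec{A})$, which in turn forces $\|\vec{z}\|_1 < \alpha n$, comfortably within the required $\leq \alpha n$. Finally, I would note that this lemma plugs directly into Lemma~\ref{lm:non-priv-dp} with the decoding error parameter $\alpha$ and failure probability $\beta$ to yield Theorem~\ref{thm:main-lb}.
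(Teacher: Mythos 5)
Your proof is correct and follows essentially the same approach as the paper: define $\alg$ to return any consistent $\vec{x}'$ with $\|\tilde{\vec{y}} - \vec{A}\vec{x}'\|_p < \disc_{p,\alpha}(\vec{A})/2$, note $\vec{x}-\vec{x}' \in \{0,\pm 1\}^n$, and invoke the triangle inequality together with the definition of $\disc_{p,\alpha}$ to conclude $\|\vec{x}-\vec{x}'\|_1 \leq \alpha n$. The paper phrases the last step as a contradiction while you argue contrapositively, but this is only a cosmetic difference.
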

\begin{proof}
  Given $\tilde{\vec{y}}$, $\alg$ outputs an arbitrary $\vec{x}' \in
  \{0, 1\}^n$ such that $\|\vec{A}\vec{x}' - \tilde{\vec{y}}\|_p <
  \disc_{p, \alpha}(\vec{A})/2$. Such a $\vec{x}'$ exists, since
  $\|\vec{A}\vec{x} - \tilde{\vec{y}}\|_p < \disc_{p,
    \alpha}(\vec{A})/2$ by assumption.  We claim that $\|\vec{x} -
  \vec{x}'\| \leq \alpha n$. For contradiction, assume $\|\vec{x} -
  \vec{x}'\| > \alpha n$. Notice that $\vec{x} - \vec{x}' \in \{0, \pm
  1\}$. Then, by the definition of $\disc_{p, \alpha}$, $\|\vec{A}(\vec{x} - \vec{x}')\|_p \geq \disc_{p, \alpha}(\vec{A})$.
  By the triangle inequality, the assumption of the lemma, and the
  definition of $\alg$,
    $\|\vec{A}(\vec{x} - \vec{x}')\|_p \leq \|\vec{A}\vec{x} -
    \tilde{\vec{y}}\|_p + \|\vec{A}\vec{x'} - \tilde{\vec{y}}\|_p <
  \disc_{p, \alpha}(\vec{A}),$
  and we've reached a contradiction. 
\end{proof}

Corollary~\ref{cor:mainlb}, instantiated with $p=2$, and
Lemmas~\ref{lm:hyperplanes-disc}--\ref{lm:boxes-highdim-disc} imply an
array of noise lower bounds for approximating geometric range counting
while satisfying $(\eps, \delta)$-differential privacy.

\begin{theorem}
  Any mechanism $\mech$ that, for any $P$ in $\mathbb{R}^d$ with $|P|
  = n$ and $d = O(1)$, with constant probability approximates the
  halfspace range counting problem within average squared error
  $o(n^{1 - 1/d})$ is not $(\eps, \delta)$-differentially private for
  any constant $\eps$ and $\delta$.
\end{theorem}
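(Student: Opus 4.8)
The plan is to instantiate Corollary~\ref{cor:mainlb} with $p = 2$ on the hard halfspace instances supplied by Lemma~\ref{lm:hyperplanes-disc}. The first step is to translate the average squared error guarantee into an $\ell_2$ guarantee. If $\mech$ approximates halfspace counting with average squared error $o(n^{1-1/d})$ with constant probability, then, writing $g(n) \to 0$ for the hidden rate, with constant probability the $m \times n$ incidence matrix $\vec{A}$ of any halfspace instance satisfies $\tfrac{1}{m}\|M_n(\vec{x}) - \vec{A}\vec{x}\|_2^2 \le g(n)\, n^{1-1/d}$, i.e. $\|M_n(\vec{x}) - \vec{A}\vec{x}\|_2 \le \sqrt{g(n)}\cdot \sqrt{m}\, n^{1/2 - 1/(2d)}$.

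Next, fix a constant $\alpha$ (say $\alpha = 1/2$) and let $\beta \in (0,1)$ be chosen so that ``constant probability'' means probability at least $1-\beta$. By Lemma~\ref{lm:hyperplanes-disc}, for infinitely many $n$ there is a set $P \subset \mathbb{R}^d$ with $|P| = n$ and $m$ halfspaces whose incidence matrix $\vec{A}$ has $\herdisc_{2,\alpha}(\vec{A}) \ge c\,m^{1/2} n^{1/2 - 1/(2d)}$ for an absolute constant $c > 0$. Since the mechanism is required to be accurate on \emph{every} point set, it is in particular accurate on this one, and since $\sqrt{g(n)} < c$ for all sufficiently large $n$, we may pick one such $n$ for which
\[
  \forall \vec{x} \in \{0,1\}^n:\ \Pr\bigl[\|M_n(\vec{x}) - \vec{A}\vec{x}\|_2 < \herdisc_{2,\alpha}(\vec{A})/2\bigr] \ge 1 - \beta,
\]
where we restrict attention to $\{0,1\}$-valued weights; these are a special case of integer weights, so the accuracy guarantee is inherited, and the restriction of $\mech$ to $\{0,1\}$-valued inputs inherits $(\eps,\delta)$-differential privacy from $\mech$. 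Corollary~\ref{cor:mainlb} then provides constants $\eps(\alpha,\beta)$ and $\delta(\alpha,\beta)$ for which this restricted mechanism is not $(\eps,\delta)$-differentially private, and hence $\mech$ is not $(\eps,\delta)$-differentially private for these constants.

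I do not expect a genuine obstacle: the substance has been packaged into Corollary~\ref{cor:mainlb} and Lemma~\ref{lm:hyperplanes-disc}, and what remains is bookkeeping. The points that need care are (i) keeping straight the $\sqrt{m}$ factor between average squared error and the $\ell_2$ norm, (ii) observing that Lemma~\ref{lm:hyperplanes-disc} guarantees the discrepancy lower bound only for infinitely many $n$, which is enough because Corollary~\ref{cor:mainlb} (through Lemma~\ref{lm:non-priv-dp}) needs only a single bad value of $n$, and (iii) noting that passing from a mechanism over $\mathbb{Z}$ to its restriction to $\{0,1\}$-valued inputs preserves both differential privacy and accuracy, so that the corollary, stated for the universe $\{0,1\}$, indeed applies.
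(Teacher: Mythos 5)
Your proposal is correct and matches the paper's approach exactly: the paper simply states that the theorem follows from Corollary~\ref{cor:mainlb} with $p=2$ together with Lemma~\ref{lm:hyperplanes-disc}, and your write-up supplies the bookkeeping (converting average squared error to the $\ell_2$ norm, picking a sufficiently large $n$ from the infinite family, and restricting to $\{0,1\}$ inputs) that the paper leaves implicit. One cosmetic slip: you need $\sqrt{g(n)} < c/2$ rather than $\sqrt{g(n)} < c$ to beat $\herdisc_{2,\alpha}(\vec{A})/2$, but since $g(n)\to 0$ this is immaterial.
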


\begin{theorem}
  Any mechanism $\mech$ that, for any $P$ in $\mathbb{R}^d$ with $|P|
  = n$ and $d = O(1)$, with constant probability approximates the
  orthogonal range counting problem within average squared error
  $o((\log n)^{d - 1})$ is not $(\eps, \delta)$-differentially private
  for any constant $\eps$ and $\delta$.
\end{theorem}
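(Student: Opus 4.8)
This will be obtained from Corollary~\ref{cor:mainlb} instantiated with $p = 2$, together with the $L^2$ combinatorial discrepancy lower bound for axis-parallel boxes in Lemma~\ref{lm:boxes-disc}, in exact parallel with how the halfspace lower bound is derived from Lemma~\ref{lm:hyperplanes-disc}. First I would fix constants $\alpha \in (0,1)$ and $\beta \in (0,1)$ and let $\eps = \eps(\alpha,\beta)$, $\delta = \delta(\alpha,\beta)$ be the constants supplied by Corollary~\ref{cor:mainlb}. Assume, toward a contradiction, that some mechanism $\mech = \{M_n\}$ is $(\eps,\delta)$-differentially private and that for every point set $P$ in $\mathbb{R}^d$ with $|P| = n$ it answers the orthogonal range counting problem on $P$ with average squared error $o((\log n)^{d-1})$ with probability at least $1-\beta$.

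Next I would invoke Lemma~\ref{lm:boxes-disc}: for infinitely many $n$ there is a set $P$ of $n$ points and a family of $m$ axis-parallel boxes in $\mathbb{R}^d$ whose incidence matrix $\vec{A}$ satisfies $\herdisc_{2,\alpha}(\vec{A}) = \Omega(m^{1/2}(\log n)^{d/2 - 3/2})$, where moreover the family is taken to be a constant fraction of all distinct orthogonal ranges on $P$, so that the average-squared-error guarantee of $\mech$ over the full orthogonal range space of $P$ still forces $\frac1m\|M_n(\vec{x}) - \vec{A}\vec{x}\|_2^2 = o((\log n)^{d-1})$ on these particular $m$ ranges. Since a mechanism that is $(\eps,\delta)$-differentially private over $\mathbb{Z}$ is a fortiori $(\eps,\delta)$-differentially private over $\{0,1\}$, and since its utility guarantee persists when the input is restricted to $\vec{x}\in\{0,1\}^n$, unwinding the definition of average squared error gives $\|M_n(\vec{x}) - \vec{A}\vec{x}\|_2 = o(m^{1/2}(\log n)^{(d-1)/2})$ with probability at least $1-\beta$ for every $\vec{x}\in\{0,1\}^n$; comparing this with the discrepancy bound of Lemma~\ref{lm:boxes-disc} then yields $\|M_n(\vec{x}) - \vec{A}\vec{x}\|_2 < \herdisc_{2,\alpha}(\vec{A})/2$ with probability at least $1-\beta$ for all large enough $n$ in the infinite family.

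This is exactly the hypothesis of Corollary~\ref{cor:mainlb} with $p = 2$ and the chosen $\alpha,\beta$, whose conclusion is that $\mech$ is not $(\eps(\alpha,\beta),\delta(\alpha,\beta))$-differentially private — contradicting the assumption. Choosing the constants appropriately gives the stated noise lower bound. Everything here is a transcription of the halfspace argument with Lemma~\ref{lm:boxes-disc} in place of Lemma~\ref{lm:hyperplanes-disc}.

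The step I expect to require the most care is the reduction of the average-squared-error guarantee over the \emph{entire} orthogonal range space of $P$ to an $\ell_2$-error bound on the specific sub-family of boxes underlying Lemma~\ref{lm:boxes-disc}: because the full orthogonal range space of $n$ points has $n^{\Theta(d)}$ distinct ranges, one must use a form of Roth's lower bound in which a constant fraction of the boxes — not merely the single worst one — carries squared discrepancy $\Omega((\log n)^{d-1})$, so that averaging over the discrepancy family is dominated, up to constants, by averaging over all orthogonal ranges. Alongside this, one must keep the polylogarithmic exponents consistent through the square root and the normalization by $m$ when passing between average squared error, the $\ell_2$ norm, and $\herdisc_{2,\alpha}$; this bookkeeping, not any new idea, is the only genuine obstacle.
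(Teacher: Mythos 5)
Your plan reproduces the paper's proof route exactly (Corollary~\ref{cor:mainlb} with $p=2$ plus Lemma~\ref{lm:boxes-disc}), but there is a real gap in the final comparison step. You write that the hereditary discrepancy bound from Lemma~\ref{lm:boxes-disc} is $\herdisc_{2,\alpha}(\vec{A}) = \Omega\bigl(m^{1/2}(\log n)^{d/2-3/2}\bigr)$ and that the utility hypothesis gives $\|M_n(\vec{x}) - \vec{A}\vec{x}\|_2 = o\bigl(m^{1/2}(\log n)^{(d-1)/2}\bigr)$, and you then assert that ``comparing'' the two yields $\|M_n(\vec{x}) - \vec{A}\vec{x}\|_2 < \herdisc_{2,\alpha}(\vec{A})/2$. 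That comparison goes the wrong way: since $(\log n)^{(d-1)/2} / (\log n)^{(d-3)/2} = \log n \to \infty$, an $o\bigl(m^{1/2}(\log n)^{(d-1)/2}\bigr)$ upper bound does \emph{not} place you below an $\Omega\bigl(m^{1/2}(\log n)^{(d-3)/2}\bigr)$ threshold. With the exponents exactly as you transcribed them, the argument only rules out average squared error $o\bigl((\log n)^{d-3}\bigr)$, which is weaker than the theorem's $o\bigl((\log n)^{d-1}\bigr)$. What you should have flagged is this $(\log n)^2$ mismatch: either the theorem should read $(\log n)^{d-3}$ given Lemma~\ref{lm:boxes-disc} as stated, or (more likely, in view of the classical Roth--Beck $L^2$ discrepancy bound $\Omega\bigl((\log n)^{(d-1)/2}\bigr)$ per box) the exponent in the lemma should be $d/2 - 1/2$ rather than $d/2 - 3/2$. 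As written, your final inequality is a non sequitur; the proof is incomplete until the exponent is reconciled.

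On your secondary concern about the sub-family of boxes versus the full orthogonal range space: this is not actually where the difficulty lies. In the paper's framework the range space $\range$ is part of the problem instance, so the adversary may take $\range$ to be exactly the $m$ boxes supplied by Lemma~\ref{lm:boxes-disc}, at which point the average squared error is taken over precisely those $m$ ranges and no ``constant fraction'' strengthening of the discrepancy lemma is needed. The genuine obstacle is the exponent bookkeeping you flag only in passing, and it is not just bookkeeping — it changes the stated exponent by~$2$.
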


\begin{theorem}
  Any mechanism $\mech$ that, for any $P$ in $\mathbb{R}^d$ with $|P|
  = n$ and $d = \Theta(\log n)$, with constant probability
  approximates the orthogonal range counting problem within average
  squared error $n^{o(1)}$ is not $(\eps, \delta)$-differentially
  private for any constant $\eps$ and $\delta$.
\end{theorem}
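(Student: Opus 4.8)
The plan is to deduce the theorem from Corollary~\ref{cor:mainlb} and Lemma~\ref{lm:boxes-highdim-disc}, in exactly the same way that the two preceding theorems follow from Corollary~\ref{cor:mainlb} together with Lemmas~\ref{lm:hyperplanes-disc} and~\ref{lm:boxes-disc}. Fix a constant $\alpha \in (0,1)$, say $\alpha = 1/2$, and let $\beta \in (0,1)$ be a constant such that the assumed error bound holds with probability at least $1-\beta$; let $\eps = \eps(\alpha,\beta)$ and $\delta = \delta(\alpha,\beta)$ be the values supplied by Corollary~\ref{cor:mainlb}. Let $\mech = \{M_n\}$ be any mechanism that, for every $n$-point set $P \subseteq \mathbb{R}^d$ with $d = \Theta(\log n)$, approximates orthogonal range counting within average squared error $n^{o(1)}$ with probability at least $1-\beta$. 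Restricted to inputs in $\{0,1\}^n$, $\mech$ is a mechanism over $\{0,1\}$ with range $\mathbb{R}^m$, so it suffices to verify the hypothesis of Corollary~\ref{cor:mainlb} for some $p$: the corollary then certifies that $\mech$ is not $(\eps,\delta)$-differentially private.

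First I would instantiate the point set $P$, the axis-parallel boxes $B_1,\dots,B_m$, and the incidence matrix $\vec A$ from Lemma~\ref{lm:boxes-highdim-disc}, so that $\herdisc_{\infty,\alpha}(\vec A) = n^{\Omega(1)}$. Since $\|\vec v\|_2 \geq \|\vec v\|_\infty$ for every $\vec v \in \mathbb{R}^m$, we have $\disc_{2,\alpha}(\vec A|_S) \geq \disc_{\infty,\alpha}(\vec A|_S)$ for all $S$, hence $\herdisc_{2,\alpha}(\vec A) \geq \herdisc_{\infty,\alpha}(\vec A) = n^{\Omega(1)}$. Next I would translate the average-error hypothesis into an $\ell_2$ statement: average squared error $n^{o(1)}$ on the instance $(\vec A,\vec x)$ means $\|M_n(\vec x) - \vec A\vec x\|_2^2 \leq m\cdot n^{o(1)}$, i.e.\ $\|M_n(\vec x) - \vec A\vec x\|_2 \leq m^{1/2} n^{o(1)}$ with probability at least $1-\beta$. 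It then remains to check that $m^{1/2} n^{o(1)} < \frac{1}{2}\herdisc_{2,\alpha}(\vec A)$ for all large $n$; once this holds, the hypothesis of Corollary~\ref{cor:mainlb} is met with $p = 2$, and the theorem follows.

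The only nonroutine point is this last inequality, i.e.\ absorbing the factor $m^{1/2}$ lost when passing from per-query (average) error to total $\ell_2$ error. The bound $\herdisc_{2,\alpha}(\vec A) \geq n^{\Omega(1)}$ read off from $\herdisc_{\infty,\alpha}$ is not by itself enough if $m$ is large; what one really wants is the $\ell_2$ analogue $\herdisc_{2,\alpha}(\vec A) = \Omega\!\big(m^{1/2}\, n^{\Omega(1)}\big)$, the high-dimensional counterpart of the bound $\Omega(m^{1/2}(\log n)^{d/2-3/2})$ in Lemma~\ref{lm:boxes-disc}. I would obtain it by applying the trace/eigenvalue lower bound of~\cite{chazelle2001trace} to the $\ell_2$ hereditary discrepancy directly --- spectral lower bounds of this type naturally carry the $m^{1/2}$ normalization --- and then transferring from $\herdisc_{2,1}$ to $\herdisc_{2,\alpha}$ via Lemma~\ref{cl:std2alpha-disc}, using that a restriction of a range space of axis-parallel boxes is again a range space of axis-parallel boxes. (If one insists on using Lemma~\ref{lm:boxes-highdim-disc} verbatim as a black box, an alternative is to observe that its construction uses only $m = n^{O(1)}$ boxes and to run the argument with $p = \infty$ instead, taking the exponent hidden in $n^{\Omega(1)}$ large enough to beat that of $m$; but the $\ell_2$ route above parallels the other two theorems word for word.) With the $m^{1/2}$ factor in place, $m^{1/2} n^{o(1)} < \frac{1}{2}\herdisc_{2,\alpha}(\vec A) = \Omega\!\big(m^{1/2} n^{\Omega(1)}\big)$ for all sufficiently large $n$, completing the proof.
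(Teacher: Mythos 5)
Your proposal follows exactly the paper's intended route: apply Corollary~\ref{cor:mainlb} with $p=2$ to the incidence matrix supplied by Lemma~\ref{lm:boxes-highdim-disc}, after translating the average-squared-error hypothesis into the $\ell_2$ bound $\|M_n(\vec x)-\vec A\vec x\|_2 \leq m^{1/2}n^{o(1)}$. The paper gives no further detail beyond the sentence ``Corollary~\ref{cor:mainlb}, instantiated with $p=2$, and Lemmas~\ref{lm:hyperplanes-disc}--\ref{lm:boxes-highdim-disc} imply \dots,'' so your proof is, in effect, the missing writeup.

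You are also right to flag the one real subtlety: as stated, Lemma~\ref{lm:boxes-highdim-disc} gives only $\herdisc_{\infty,\alpha}(\vec A)=n^{\Omega(1)}$, and the monotone step $\herdisc_{2,\alpha}\geq\herdisc_{\infty,\alpha}$ does not by itself absorb the $m^{1/2}$ arising when one passes from average to total $\ell_2$ error --- whereas Lemmas~\ref{lm:hyperplanes-disc} and~\ref{lm:boxes-disc} are phrased with the $m^{1/2}$ factor built in, so the two preceding theorems go through cleanly. Your primary fix --- read off the $\ell_2$ hereditary-discrepancy lower bound $\Omega\bigl(m^{1/2}n^{\Omega(1)}\bigr)$ directly from the trace/eigenvalue argument of~\cite{chazelle2001trace} (which does carry the $m^{1/2}$ normalization) and then move from $\herdisc_{2,1}$ to $\herdisc_{2,\alpha}$ via Lemma~\ref{cl:std2alpha-disc} and the hereditary structure of box ranges --- is correct and is evidently what the paper's $p=2$ instantiation presupposes; Lemma~\ref{lm:boxes-highdim-disc} is really stating the weaker $\ell_\infty$ corollary of that $\ell_2$ bound. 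One small caution on your parenthetical fallback: you cannot literally ``take the exponent hidden in $n^{\Omega(1)}$ large enough to beat that of $m$,'' since that exponent is fixed by the construction rather than adjustable; the $p=\infty$ route would work only if the construction's exponent happens to exceed $(\log m)/(2\log n)$, which would need to be checked. Stick with the $\ell_2$ route.
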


We also note that that Corollary~\ref{cor:mainlb},
instantiated with $p=\infty$ and Lemma~\ref{lm:gen-disc} imply a lower
bound on the worst case squared error for privately approximating
$m$ arbitrary range counting queries where $m$ is much larger than
$n$. 

\begin{theorem}\label{thm:gen-lb}
  Any mechanism $\mech$ that, for any range space $(P, \range)$ ($|P|
  = n$, $|\range| = m$), with constant probability approximates range
  counts for $\range$ with worst case squared error $o(n\log 2m/n)$ is
  not $(\eps, \delta)$-differentially private for any constant $\eps$
  and $\delta$.
\end{theorem}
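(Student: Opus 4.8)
The plan is to run the same argument that produced the geometric lower bounds earlier in this section, now with $p=\infty$ and with the purely combinatorial discrepancy bound of Lemma~\ref{lm:gen-disc} in place of a geometric one. Fix an arbitrary constant $\alpha\in(0,1)$, say $\alpha=1/2$, and let $1-\beta$ denote the (constant) success probability in the hypothesis. Since the integer-weighted range counting problem contains the $\{0,1\}$-weighted version as a special case, and since $(\eps,\delta)$-differential privacy on a domain is inherited by any sub-domain, it suffices to exhibit a single range space for which the restriction of $\mech$ to inputs $\vec{x}\in\{0,1\}^n$ violates Corollary~\ref{cor:mainlb}.

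For the hard instance I would invoke Lemma~\ref{lm:gen-disc}: for every $n$ and every $m>n$ it gives a matrix $\vec{A}\in\{0,1\}^{m\times n}$ with $\herdisc_{\infty,\alpha}(\vec{A})=\Omega(\sqrt{n\log 2m/n})$; after deleting duplicate rows (which affects neither $\|\cdot\|_\infty$ nor $\herdisc_{\infty,\alpha}$, and only decreases the number of distinct ranges) we may treat $\vec{A}$ as the incidence matrix of a range space $(P,\range)$ with $|P|=n$ and $|\range|$ at most $m$. Applying the accuracy hypothesis to this range space and to $\{0,1\}$-valued weights, there is a function $g(n)=o(n\log 2m/n)$ such that $\Pr[\,\|M_n(\vec{x})-\vec{A}\vec{x}\|_\infty^2\le g(n)\,]\ge 1-\beta$ for all $\vec{x}\in\{0,1\}^n$.

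Finally I would compare magnitudes. Since $\herdisc_{\infty,\alpha}(\vec{A})^2=\Omega(n\log 2m/n)$ while $g(n)=o(n\log 2m/n)$, for all sufficiently large $n$ we have $g(n)<\herdisc_{\infty,\alpha}(\vec{A})^2/4$, hence $\Pr[\,\|M_n(\vec{x})-\vec{A}\vec{x}\|_\infty<\herdisc_{\infty,\alpha}(\vec{A})/2\,]\ge 1-\beta$. Fixing one such $n$, this is exactly the hypothesis of Corollary~\ref{cor:mainlb} with $p=\infty$, so the corollary produces $\eps(\alpha,\beta)$ and $\delta(\alpha,\beta)$ --- constants, because $\alpha$ and $\beta$ are --- for which $\mech$ fails to be $(\eps,\delta)$-differentially private, which is the claim.

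I do not expect any real obstacle: the content lives entirely in Corollary~\ref{cor:mainlb} and in the discrepancy lower bound of Lemma~\ref{lm:gen-disc}, and what remains is routine. The only two points meriting a second glance are that (i) it suffices for the accuracy-versus-decoding conflict to occur at one value of $n$, which is all Lemma~\ref{lm:non-priv-dp} (hence Corollary~\ref{cor:mainlb}) needs, and (ii) restricting attention to $\{0,1\}$-valued inputs preserves both the accuracy guarantee and differential privacy, which is what permits us to apply Corollary~\ref{cor:mainlb} --- stated for mechanisms over $\{0,1\}$ --- to a mechanism that solves general range counting.
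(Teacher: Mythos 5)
Your proposal is correct and matches the paper's approach exactly: the paper proves this result precisely by instantiating Corollary~\ref{cor:mainlb} with $p=\infty$ and plugging in the Spencer-type lower bound of Lemma~\ref{lm:gen-disc}, which is what you do. The paper only states this as a one-line remark, whereas you spell out the routine but necessary details (restriction to $\{0,1\}$ inputs, picking a single $n$, comparing $g(n)$ to $\herdisc_{\infty,\alpha}^2/4$); all of these checks are sound.
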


The results of  Dinur and Nissim~\cite{Dinur2003} for $m = \O(n)$ and
$m = 2^n$ are special cases of Theorem~\ref{thm:gen-lb}. To the best
of our knowledge, this is the first lower bound that explicitly accounts for the
dependence of error on $m$ for arbitrary $m > n$. 

\section{Algorithm for Bounded Shatter Function Systems}
\label{sect:ub}

In this section we present an efficient (for constant $d$) $(\eps,
\delta)$-differentially private range counting algorithm for range
spaces with bounded shatter function. We prove the algorithm gives
optimal average squared error and almost optimal worst-case squared
error bounds. The algorithm is based on a novel use of a decomposition
that was first constructed by Matou\u{s}ek~\cite{matousek1995tight} to
prove optimal discrepancy upper bounds for bounded shatter function
range spaces. Even a careful application of known methods in
differential privacy together with the decomposition does not provide
optimal error bounds directly; we, however, prove that privacy can be
satisfied for a constant fraction of $P$ while achieving optimal error
bounds; then we recurse on the remainder of $P$. Aside from the
decomposition, this method of satisfying privacy for a fraction of the
database is inspired by partial coloring methods in discrepancy
theory.

We will make an essential use of the following lemma, due originally
to Haussler. The lemma bounds the size of an epsilon net in the hamming
metric. 
\begin{lemma}[\cite{haussler1995sphere}]
  \label{lm:packing}
  Let $(P, \range)$ be a range space with shatter function $\pi_\range(s) =
  O(s^d)$. Let $\Delta$ be an integer less than $|P|$. Let $\mathcal{S}
  \subseteq \range$ be a collection of ranges such that for any two
  ranges $R_1, R_2 \in \mathcal{S}$, the symmetric difference between
  $R_1$ and $R_2$ is at least $\Delta$. Then, $|\mathcal{S}| =
  O((|P|/\Delta)^d)$. 
\end{lemma}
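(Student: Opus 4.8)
The plan is to prove the lemma by the probabilistic method, passing to a random subsample of $P$ whose size is tuned to $\Delta$. Write $n = |P|$ and $m = |\mathcal{S}|$, fix $s = \lceil c\,n/\Delta \rceil$ for a suitable absolute constant $c$, and draw $A \subseteq P$ uniformly among the $s$-element subsets. Two facts combine to give the bound: (i) for every $X \subseteq P$ with $|X| = s$ the restriction $\range|_X$ has at most $\pi_\range(s) = O(s^d)$ distinct ranges, so the trace family $\{R \cap A : R \in \mathcal{S}\}$ has $O(s^d)$ members no matter what $A$ is; and (ii) because the ranges in $\mathcal{S}$ are pairwise $\Delta$-separated, for a random $A$ of this size a constant fraction of them still have pairwise distinct traces. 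Putting (i) and (ii) together gives $m = O(s^d) = O((n/\Delta)^d)$, as desired.

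The starting estimate behind (ii) is elementary. Having the same trace on $A$ is an equivalence relation on $\mathcal{S}$, so if $t$ denotes the number of distinct traces then $m - t$ is at most the number of colliding pairs $\{R_1,R_2\}$ with $R_1 \cap A = R_2 \cap A$; and a collision of $R_1, R_2$ forces $A$ to avoid $R_1 \triangle R_2$, which happens with probability $\binom{n - |R_1 \triangle R_2|}{s}/\binom{n}{s} \le (1 - \Delta/n)^s$. Thus the expected number of colliding pairs is at most $\binom m 2 (1-\Delta/n)^s$, and it remains to conclude that $t$ is a constant fraction of $m$ in expectation.

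That last conclusion is the crux, and I expect it to be the main obstacle. A plain union bound over colliding pairs is too weak: it would require $(1-\Delta/n)^s = O(1/m)$, i.e. $s = \Theta((n/\Delta)\log m)$, and would yield only $m = O((n/\Delta)^d(\log m)^d)$, with a spurious logarithmic overhead. Removing that overhead --- getting $\mathbb{E}[t] = \Omega(m)$ already for $s = \Theta(n/\Delta)$ --- is the content of Haussler's argument, and the route I would take is his: rather than a pairwise collision count, reveal the points of $P$ one at a time in uniformly random order, write $A_i$ for the first $i$ of them, and track how the number of distinct traces of $\mathcal{S}$ on $A_i$ grows; each increment is governed by the number of edges of the unit-distance (one-inclusion) graph on the current trace family, which is linear in the size of that family for a bounded-VC-dimension (equivalently, bounded-shatter-function) system. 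The delicate part is the accounting that turns this into $\mathbb{E}[t] = \Omega(m)$ for the chosen $s$ --- in particular making the separation parameter $\Delta$, and not merely the VC-dimension, enter the increment bound. Once this is in hand, combining $\mathbb{E}[t] = \Omega(m)$ with the deterministic bound $t \le \pi_\range(s) = O(s^d)$ completes the proof.
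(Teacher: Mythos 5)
The paper does not prove this lemma; it cites it directly from Haussler's 1995 paper, so there is no internal proof to compare against. What you wrote is a sketch of Haussler's argument, and your high-level structure is correct: sample a random $s$-subset $A$ with $s = \Theta(n/\Delta)$, note that the trace family on $A$ has at most $\pi_\range(s) = O(s^d)$ members, and argue that a constant fraction of the ranges in $\mathcal{S}$ still have pairwise distinct traces on $A$. You also correctly diagnose that a union bound over colliding pairs loses a $(\log m)^d$ factor, and that the one-inclusion graph edge bound of Haussler--Littlestone--Warmuth is the tool that removes it. Those observations are exactly right.

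The gap is the one you flag yourself, and I want to emphasize that it is not merely ``delicate accounting'' --- the sketch you give of how to close it does not work as stated. Revealing points one at a time and bounding each increment $t_i - t_{i-1}$ by the number of edges of the one-inclusion graph labeled by the newly revealed point, then invoking the edge bound $|E| \le d\cdot t_i$, yields $\mathbb{E}[t_i - t_{i-1}] \le d\,\mathbb{E}[t_i]/i \le dm/i$ by symmetry of the random order. Summing from $i = s+1$ to $n$ gives $\mathbb{E}[t_n - t_s] \le dm \sum_{i > s} 1/i \approx dm\ln(n/s)$, which is $\omega(m)$ for $s = \Theta(n/\Delta)$ unless $\Delta = O(d)$; the separation parameter $\Delta$ never enters, so the log factor you wanted to kill has reappeared in a different guise. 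Haussler's actual argument is not a per-step increment telescope. It fixes a random $(s+1)$-subset $A' = A \cup \{y\}$ with $y$ random, uses the $\Delta$-separation once to pass from colliding pairs on $A$ to \emph{unit-distance} pairs on $A'$ (a pair that collides on $A$ has all of its $\ge\Delta$ separating points available for $y$, which is where $\Delta/(n-s)$ enters), and then applies the edge bound on the one-inclusion graph of the trace family on $A'$ \emph{at that single scale}. Even there, one has to be careful: the naive count $\sum_v w(v)\deg_G(v)$ (where $w(v)$ is the number of ranges mapping to trace $v$) is not controlled by the edge bound alone, and Haussler's original proof handles this via a more refined shifting/density argument. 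If you want a self-contained proof rather than a citation, I would recommend following the exposition in Matou\u{s}ek's \emph{Geometric Discrepancy} (Section 5.3) closely; the step you are calling delicate is the entire content of the lemma and cannot be left as a remark.
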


We construct collections of ranges with large pairwise distance
$\Delta$ for gemetrically growing values of $\Delta$. Using the
collections as finer and finer epsilon nets, we can represent each
range in $\range$ as the union and set difference of smaller and
smaller ranges, while Lemma~\ref{lm:packing} allows us to control the
number of such ranges needed for each value of $\Delta$. We then
approximate range counts for the ranges that make up the
decomposition; the trade-off between range size and number of distinct
ranges allows us to balance the noise incurred by randomized response
and by using composition (Lemma~\ref{lm:composition}).

We first detail the construction. Our presentation
follows~\cite{matousek2010geometric}. Let $(P, \range)$ be a range
space with shatter function $\pi_\range(s) = O(s^d)$. Let $k = \lceil
\log_2 n \rceil$. For each $i \in \{0, \ldots, k\}$, let
$\mathcal{S}_i \subseteq \range$ be a maximal collection of ranges
such that the symmetric difference between any two ranges $R_1, R_2
\in \mathcal{S}_i$ is at least $n2^{-i}$. In particular,
$\mathcal{S}_k = \range$ and $\mathcal{S}_0 = \{\emptyset\}$. For each
$R \in \mathcal{S}_i$, fix a $R' \in \mathcal{S}_{i-1}$ such that the
symmetric difference between $R$ and $R'$ is at most $n2^{-i+1}$ (such
a range exists by maximality of $\mathcal{S}_{i-1}$). Then we set
$F(R) = R\setminus R'$ and $G(R) = R' \setminus R$, so that $R' = (R
\setminus F(R)) \cup G(R)$, $F(R) \subseteq R$, and $G(R) \cap (R
\setminus F(R)) = \emptyset$. Define a new collection of ranges
$\mathcal{T}_i = \{F(R), G(R): R \in \mathcal{S}_i\}$. We can start
from $R\in \range = \mathcal{S}_k$ and apply the construction
recursively, until we have
    $\emptyset = ((\ldots((R \setminus F_k)\cup G_k)\ldots)\cup G_2)\setminus F_1,$
  where $F_i, G_i \in \mathcal{T}_i$. Bactracking to reconstruct $R$,
  we get
  \begin{equation}
    \label{eq:decomp}
    R = ((\ldots(F_1 \setminus G_2) \cup F_2\ldots) \setminus G_k)\cup
    F_k.
  \end{equation}
  All union operations are on disjoint sets and any set is subtracted
  from a set that entirely contains it. 

  Each range in $\mathcal{T}_i$ has size at most $n2^{-i+1}$ by
  construction; by Lemma~\ref{lm:packing}, $|\mathcal{S}_i| =
  O(2^{di})$, and, since each range in $\mathcal{S}_i$ corresponds to
  at most two ranges in $\mathcal{T}_i$, we also have $\mathcal{T}_i =
  O(2^{di})$ .  Let $\vec{T}^i$ be the incidence matrix of $\mathcal{T}_i$. The
  following lemma follows from the decomposition (\ref{eq:decomp}):
  \begin{lemma}
    \label{lm:decomp}
    Let $(P, \range)$ be a range space with $|P| = n$ and shatter
    function $\pi_\range(s) = O(s^d)$. Let $\vec{A}$ be the incidence matrix
    of $\range$. Then, there exist matrices $\vec{T}^i \in \{0,
    1\}^{s_i \times n}$ and $\vec{Q}^i \in \{0, \pm 1\}^{m \times s_i}$ such that
    $\vec{A} = \sum_{i = 1}^k{\vec{Q}^i \vec{T}^i}$.
  Furthermore, we have the following properties for $\vec{T}^i$ and
  $\vec{Q}^i$:
  \vspace{-5pt}
  \begin{itemize}
    \setlength{\itemsep}{0.5pt}
    \setlength{\parskip}{0pt}
    \setlength{\parsep}{0pt}
  \item each row in $\vec{T}^i$ has at most $n2^{-i+1}$ nonzero entries;
  \item $s_i \leq C 2^{di}$ for some absolute constant $C$;
  \item each row in $\vec{Q}^i$ has at most 2 nonzero entries.
  \end{itemize}
  \end{lemma}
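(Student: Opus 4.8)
The plan is to convert the purely set-theoretic decomposition (\ref{eq:decomp}) into a linear-algebraic identity over indicator vectors, and then simply read off the matrices $\vec{T}^i$ and $\vec{Q}^i$. First I would fix a range $R \in \range$ and follow the chain $R = R^{(k)}, R^{(k-1)}, \ldots, R^{(0)} = \emptyset$ obtained by iterating $R^{(i)} \mapsto (R^{(i)} \setminus F_i) \cup G_i = R^{(i-1)}$, where $F_i := F(R^{(i)})$ and $G_i := G(R^{(i)})$ both lie in $\mathcal{T}_i$. Inverting one step of the chain gives $R^{(i)} = (R^{(i-1)} \setminus G_i) \cup F_i$, and the defining properties $F_i \subseteq R^{(i)}$ and $G_i \cap (R^{(i)} \setminus F_i) = \emptyset$ translate into $G_i \subseteq R^{(i-1)}$ and $F_i \cap (R^{(i-1)} \setminus G_i) = \emptyset$. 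Since a disjoint union adds indicator vectors and removing a contained subset subtracts them, this yields $\mathbf{1}_{R^{(i)}} = \mathbf{1}_{R^{(i-1)}} - \mathbf{1}_{G_i} + \mathbf{1}_{F_i}$; telescoping over $i = 1, \ldots, k$ with $\mathbf{1}_{R^{(0)}} = \mathbf{0}$ and $G_1 = \emptyset$ gives the identity $\mathbf{1}_R = \sum_{i=1}^k (\mathbf{1}_{F_i(R)} - \mathbf{1}_{G_i(R)})$.

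Next I would assemble the matrices. Take $\vec{T}^i$ to be the $0/1$ matrix whose rows are the indicator vectors of the subsets of $P$ in $\mathcal{T}_i$, so $s_i = |\mathcal{T}_i|$. For the $j$-th range $R$ of $\range$, let row $j$ of $\vec{Q}^i$ carry a $+1$ in the column indexing $F_i(R)$, a $-1$ in the column indexing $G_i(R)$, and $0$ elsewhere (if those two subsets happen to coincide the two entries cancel, which is harmless). Then row $j$ of $\vec{Q}^i \vec{T}^i$ equals $\mathbf{1}_{F_i(R)}^T - \mathbf{1}_{G_i(R)}^T$, and summing over $i$ reproduces row $j$ of $\vec{A}$ by the identity above; hence $\vec{A} = \sum_{i=1}^k \vec{Q}^i \vec{T}^i$, with the correct dimensions and entry sets.

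It remains to check the three bulleted properties. Row sparsity of $\vec{T}^i$: every member of $\mathcal{T}_i$ is either $F(R)$ or $G(R)$ for some $R \in \mathcal{S}_i$, hence a subset of the symmetric difference $R \triangle R'$ with $R' \in \mathcal{S}_{i-1}$ chosen so that $|R \triangle R'| \le n 2^{-i+1}$ (possible by maximality of $\mathcal{S}_{i-1}$), so each row of $\vec{T}^i$ has at most $n2^{-i+1}$ ones. Bound on $s_i$: $|\mathcal{T}_i| \le 2|\mathcal{S}_i|$, and $\mathcal{S}_i$ is a family of ranges with pairwise symmetric difference at least $n2^{-i}$, so Lemma~\ref{lm:packing} gives $|\mathcal{S}_i| = O((n/(n2^{-i}))^d) = O(2^{di})$, i.e. $s_i \le C 2^{di}$ for an absolute constant $C$ depending only on $d$ and the constant hidden in $\pi_\range(s) = O(s^d)$. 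Row sparsity of $\vec{Q}^i$: by construction each row has at most one $+1$ and one $-1$.

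I do not expect a genuine obstacle here: the conceptual work is already done in the construction preceding the lemma, and what remains is careful bookkeeping. The one place to be attentive is the sign/telescoping argument, which uses precisely the stated facts that in (\ref{eq:decomp}) every union is of disjoint sets and every subtraction removes a subset; and two boundary cases — the innermost level, where $R^{(0)} = \emptyset$ forces $G_1 = \emptyset$ so the telescoped sum starts cleanly, and the top level $i = k$, where $n2^{-k} \le 1$ makes the distance condition degenerate to ``distinct'' so that $\mathcal{S}_k = \range$, consistent with $|\mathcal{S}_k| = m = O(n^d)$. To apply Lemma~\ref{lm:packing} at every level $i \ge 1$ one should take $n$ large enough, or use $\lceil n2^{-i}\rceil$ as the distance parameter, so that its hypothesis $\Delta < |P|$ is met throughout.
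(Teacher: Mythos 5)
Your proposal is correct and follows exactly the approach the paper intends: the paper itself states that Lemma~\ref{lm:decomp} ``follows from the decomposition (\ref{eq:decomp})'' without giving further detail, and what you have done is to make that implication precise by turning the nested set expression into the telescoping identity $\mathbf{1}_{R^{(i)}} = \mathbf{1}_{R^{(i-1)}} - \mathbf{1}_{G_i} + \mathbf{1}_{F_i}$ and reading off $\vec{Q}^i$, $\vec{T}^i$. Your verification of the three bulleted bounds (membership of each $\mathcal{T}_i$-element in a symmetric difference of size $\le n2^{-i+1}$, the packing bound $|\mathcal{S}_i| = O(2^{di})$ via Lemma~\ref{lm:packing}, and the at-most-two-nonzeros-per-row structure of $\vec{Q}^i$) matches the construction preceding the lemma, and your remark about using $\lceil n2^{-i}\rceil$ to satisfy the hypothesis $\Delta < |P|$ of the packing lemma is a reasonable cleanup of a detail the paper glosses over.
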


  For the degree of a point $p \in P$ in the range space
  $\mathcal{T}_i$, we use the notation $d_i(p) = |\{R\in
  \mathcal{T}_i: p \in R\}|$.

  Intuitively, we will use randomized response on those
  $\mathcal{T}_i$ consisting of only small ranges, and we will use the
  Laplace noise mechanism on those $\mathcal{T}_i$ consisting of few
  ranges. The ``breaking-even point'' for the analysis is $i_0 = (\log
  n)/d$. For $i \geq i_0$ randomized response gives the
  guarantee we need: the largest range in $\mathcal{T}_{i}$ for $i
  \geq i_0$ has size
  at most $n^{1 - 1/d}$. However, $\mathcal{T}_{i_0}$ can have as many
  as $n$ ranges, and it seems that we cannot use Laplace noise with
  variance $n^{1 - 1/d}$ and still preserve privacy for those $i$
  close to $i_0$. To circumvent this issue, we use the fact that we
  can bound both the largest range and the number of ranges in each
  $\mathcal{T}_i$ simultaneously.  The main observation is that we can
  add noise with optimal variance $O(n^{1 - 1/d})$ to the range counts for
  those $\mathcal{T}_i$ where randomized response doesn't work, and
  bound the average privacy loss $\frac{1}{n} \sum_p
  L_\mech(p)$. Then, we use averaging and Lemma~\ref{lm:composition}, and argue that we can preserve
  privacy for \emph{most} $p \in P$. The shatter function bound does not increase
  for restrictions of $P$ and $\range$ and we can recurse on the
  remaining points of $P$. \junk{This argument is insipired by partial
    coloring methods in discrepancy theory and is novel in the theory
    of differential privacy. }Our algorithm for computing range counts
  over ranges with bounded shatter function is given as
  Algorithm~\ref{alg:rangecount}. The algorithm description and the
  following discussion assume that $\range$ has shatter function
  $\pi_\range(s) = O(s^d)$ (for $d \geq 2$) and the decomposition of
  Lemma~\ref{lm:decomp} has already been computed. Note that the
  decomposition can be computed in time $O(mn\log n)$.

\begin{algorithm}[t]
  \caption{\textsc{RangeCount}($P, \vec{x}, \range, \eps, \delta$)}\label{alg:rangecount}
{\fontsize{9}{9}\selectfont
  \begin{algorithmic}
    \STATE Let $|P| = n$, $|\range| = m$;
    \STATE Set $i_0 := \frac{\log n}{d}$;
    \STATE Set $\eps_i := \frac{\eps(i - i_0 + 1)^{1.5}}{n^{1/2 - 1/2d}}$
    for $i \leq i_0$; 
    \STATE Set $\eps_i :=  \frac{\eps}{(i - i_0 + 1)^{1.5}}$ for $i > i_0$; 
    \IF {$n \leq 1$}
         \STATE Let $p \in P$ be the only point in $P$. Return $\tilde{x}_p
         := x_p + \Lap(1/\eps) $ for all $R \in \range$ s.t.~$p \in R$ and
         0  for all other $R \in \range$. 
    \ENDIF
    \STATE
    \STATE Set $X := \{p:\sum_{i \leq i_0}{d_i(p)\eps_i^2} \leq
    12C\eps^2\}$ and $\bar{X} := P \setminus X$;
    \STATE Recursively compute \textsc{RangeCount}($\bar{X}, \vec{x}|_{\bar{X}},
    \range|_{\bar{X}}, \eps, \delta$); let the results be
    $\tilde{z}^1_1, \ldots, \tilde{z}^1_m$.
    \FORALL{$i \leq i_0$}
        \STATE Compute $\tilde{\vec{y}}^i := (\vec{T}^i|_X)(\vec{x}|_X) +
        \Lap(1/\eps_i)^{s_i}$;
    \ENDFOR
    \FORALL{$i_0<i\leq k$}
        \STATE Compute $\tilde{\vec{x}}^i := \vec{x} + \Lap(1/\eps_i)^n$;
        \STATE Compute $\tilde{\vec{y}}^i := (\vec{T}^i|_X)(\tilde{\vec{x}}^i|_X)$;
    \ENDFOR
    \STATE
    \STATE Compute  $\vec{\tilde{z}}^2 := \sum_{i = 1}^k{\vec{Q}^i \vec{\tilde{y}}^i}$;
    \STATE Output $\tilde{\vec{z}} = \tilde{\vec{z}}^1 + \tilde{\vec{z}}^2$.
  \end{algorithmic}
}
\end{algorithm}

We analyze the privacy guarantees of
Algorithm~\ref{alg:rangecount}. We first prove some technical claims
about the algorithm.
\begin{lemma}
  \label{priv-lemma}
  The following hold for Algorithm~\ref{alg:rangecount}:

  \begin{enumerate}
  \item   $|X| \geq n/2$.
  \item   $\{\tilde{\vec{y}}^i\}_{i = 1}^{i_0}$ is a
    $(2\sqrt{6C}\eps\sqrt{\ln(1/\delta)},  \delta)$-differentially
    private function of $\vec{x}|_X$.
  \item   $\{\tilde{\vec{x}}^i\}_{i = i_0 + 1}^k$ is a $(2\eps,
  0)$-differentially private function of $\vec{x}$. Moreover, for each
  $S \subseteq P$, $\{\tilde{\vec{x}}^i|_S\}_{i = i_0 + 1}^k$ is a
  $(2\eps, 0)$-differentially private function of $\vec{x}|_S$.
  \end{enumerate}
\end{lemma}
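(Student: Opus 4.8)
The three parts are, in order, a first‑moment (Markov) bound, an application of the advanced composition Lemma~\ref{lm:composition}, and an application of the basic composition Lemma~\ref{lm:simple_comp}. Throughout let $C$ be the constant from Lemma~\ref{lm:decomp}, so $s_i\le C2^{di}$ and each row of $\vec{T}^i$ has at most $n2^{-i+1}$ nonzero entries, and recall $d\ge 2$, $i_0=\tfrac{\log n}{d}$, so that $2^{(d-1)i_0}=n^{1-1/d}$ and $\eps_i^2=\tfrac{\eps^2(i_0-i+1)^3}{n^{1-1/d}}$ for $i\le i_0$.

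For part (1), the plan is to bound the total $\sum_{p\in P}\sum_{i\le i_0}d_i(p)\eps_i^2$ and apply Markov's inequality. Note $\sum_{p\in P}d_i(p)$ is exactly the number of nonzero entries of $\vec{T}^i$, hence at most $s_i\cdot n2^{-i+1}\le 2C\,n\,2^{(d-1)i}$ by Lemma~\ref{lm:decomp}. Therefore
\[
\sum_{p\in P}\sum_{i\le i_0}d_i(p)\eps_i^2
\;\le\;\frac{2C\eps^2 n}{n^{1-1/d}}\sum_{i=0}^{i_0}(i_0-i+1)^3\,2^{(d-1)i}
\;=\;\frac{2C\eps^2 n}{n^{1-1/d}}\cdot 2^{(d-1)i_0}\sum_{j\ge 0}(j+1)^3 2^{-(d-1)j},
\]
and since $d\ge 2$ the last series is an absolute constant, so the whole expression is $O(C\eps^2 n)$. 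The threshold $12C\eps^2$ in the definition of $X$ is calibrated so that it exceeds twice this average (absorbing the geometric‑series constant appropriately), whence by Markov at most $n/2$ points $p$ satisfy $\sum_{i\le i_0}d_i(p)\eps_i^2>12C\eps^2$; equivalently $|X|\ge n/2$.

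For part (2), observe that for $i\le i_0$ the vector $\tilde{\vec y}^i=(\vec T^i|_X)(\vec x|_X)+\Lap(1/\eps_i)^{s_i}$ is the Laplace mechanism applied coordinatewise, with fresh independent noise, to the map $\vec x|_X\mapsto(\vec T^i|_X)(\vec x|_X)$. Changing a single coordinate $x_p$ (for $p\in X$) by $\pm 1$ shifts the $j$‑th coordinate of $\tilde{\vec y}^i$ by $\pm1$ exactly when $p$ lies in the $j$‑th range $R_j$ of $\mathcal T_i$, and leaves it unchanged otherwise; since $D_\infty(\Lap(1/\eps_i)\,\|\,\Lap(1/\eps_i)+1)=\eps_i$, the per‑output privacy loss is $l_\mech(p,(i,j))=\eps_i\cdot[p\in R_j]$. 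Hence the $\ell_2$ privacy loss of $p$ is $L_\mech(p)=\bigl(\sum_{i\le i_0}d_i(p)\eps_i^2\bigr)^{1/2}\le\sqrt{12C}\,\eps=2\sqrt{3C}\,\eps$ for every $p\in X$, by the definition of $X$. Applying Lemma~\ref{lm:composition} with parameter $2\sqrt{3C}\,\eps$ (formally, with any slightly larger value, then passing to the limit, since $(\eps',\delta)$‑differential privacy is a closed condition in $\eps'$) shows $\{\tilde{\vec y}^i\}_{i=1}^{i_0}$ is $(\sqrt{2\ln(1/\delta)}\cdot 2\sqrt{3C}\,\eps,\ \delta)=(2\sqrt{6C}\,\eps\sqrt{\ln(1/\delta)},\ \delta)$‑differentially private as a function of $\vec x|_X$.

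For part (3), for each $i>i_0$ the vector $\tilde{\vec x}^i=\vec x+\Lap(1/\eps_i)^n$ is the Laplace mechanism on the identity map, which has $\ell_1$‑sensitivity $1$; by the coordinatewise extension of Lemma~\ref{lm:laplace} it is $(\eps_i,0)$‑differentially private, and since it acts coordinatewise the same holds for $\tilde{\vec x}^i|_S$ as a function of $\vec x|_S$ for any $S\subseteq P$. Composing over $i=i_0+1,\dots,k$ with Lemma~\ref{lm:simple_comp} gives $(\sum_{i=i_0+1}^k\eps_i,\ 0)$‑differential privacy, and
\[
\sum_{i=i_0+1}^{k}\eps_i=\eps\sum_{t=2}^{k-i_0+1}t^{-3/2}\le\eps\int_{1}^{\infty}x^{-3/2}\,dx=2\eps,
\]
so by monotonicity of differential privacy in the first parameter, $\{\tilde{\vec x}^i\}_{i=i_0+1}^k$ (and, by the same argument, each $\{\tilde{\vec x}^i|_S\}_{i=i_0+1}^k$) is $(2\eps,0)$‑differentially private. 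The only genuinely delicate point is the constant bookkeeping in part (1)—matching the geometric‑series constant to the threshold $12C\eps^2$ so that parts (1) and (2) come out exactly as stated; the conceptual heart, and what makes the whole algorithm work, is the identification in part (2) of the per‑point privacy loss as $\eps_i$ rather than $d_i(p)\eps_i$, which is what lets the $\eps_i$'s aggregate in $\ell_2$ via Lemma~\ref{lm:composition} instead of in $\ell_1$.
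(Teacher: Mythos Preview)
Your proof is correct and follows essentially the same approach as the paper: a Markov/averaging argument for part~(1), the $\ell_2$ privacy-loss computation plus Lemma~\ref{lm:composition} for part~(2), and basic composition via Lemma~\ref{lm:simple_comp} for part~(3). You are in fact slightly more careful than the paper on two points---you (correctly) read $\eps_i$ as depending on $(i_0-i+1)$ rather than $(i-i_0+1)$, and you get the geometric-series exponent $2^{-(d-1)j}$ right where the paper's displayed bound has $2^{-(d+1)j}$---but these are cosmetic fixes to what is otherwise the identical argument.
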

\begin{proof}
  Claim 1.~follows by avaraging and the inequality
  \begin{equation}
    \label{eq:avgprivloss}
    \frac{1}{n}\sum_{p \in P}{\sum_{i < i_0}{d_i(p)\eps_i^2}} \leq 6C\eps^2
  \end{equation}
  Next we establish (\ref{eq:avgprivloss}).
  \begin{align*}
    \frac{1}{n}\sum_{p \in P}{\sum_{i \leq i_0}{d_i(p)\eps_i^2}} 
    &\leq     \frac{1}{n}\sum_{i \leq      i_0}{C2^{di}n2^{-i+1}\frac{\eps^2(i - i_0 + 1)^3}{n^{1 -   1/d}}}\\
    &\leq 2C\eps^2 \sum_{j =  0}^{\infty}{\frac{(j+1)^3}{2^{dj + j}}}
    \leq 6C\eps^2. 
  \end{align*}
  The first inequality follows from Lemma~\ref{lm:decomp}. The second
  inequality holds for $d \geq 2$. This finishes the
  proof of claim 1.

  The following privacy analysis uses the fact that the range space
  $(P, \range)$ is public, and, therefore, the decomposition given by
  Lemma~\ref{lm:decomp}, and the set $X$ determined by the
  decomposition are public as well, i.e.~independent of $\vec{x}$.

  Notice that each component of $\tilde{\vec{y}}^i$ is an instance of
  the Laplace noise mechanism and, therefore, by
  Lemma~\ref{lm:laplace} it is $(\eps_i, 0)$-differentially
  private. Also, $\tilde{y}^i_j$ is independent of $x_p$ whenever
  $T^i_{jp} = 0$ or $p \not \in X$. Denoting by
  $\tilde{y}^i_j(\vec{x})$ the random variable $\tilde{y}^i_j$ when
  the input is $\vec{x}$, we have that
  \begin{equation*}
    D_\infty(\tilde{y}^i_j(\vec{x}) \| \tilde{y}^i_j(\vec{x} \pm
  \vec{e_p})) \leq
  \begin{cases}
    0, &T^i_{jp} = 0 \text{ or } p \not \in X\\
    \eps_i, &\text{otherwise}
  \end{cases}
  \end{equation*}
  If $\mech$ is the mechanism that outputs $\{\vec{\tilde{y}}^i\}_{i =
    1}^{i_0}$, then, by the above discussion, $L_\mech(p) =
  \sqrt{\sum_{i \leq i_0}{d_i(p)\eps_i^2}}$. By the definition of $X$,
  we have that $L_\mech(p) \leq \sqrt{12C}\eps$ for any $p \in X$ (and
  $L_\mech(p) = 0$ for $p \not \in X$).  Claim 2.~then follows by
  Lemma~\ref{lm:composition}.

  By Lemma~\ref{lm:laplace}, each
  $\tilde{\vec{x}}^i$ is $(\eps_i, 0)$-differentially private. By \junk{the
  basic composition theorem for differential
  privacy} Lemma~\ref{lm:simple_comp}, the composition
  $\{\tilde{\vec{x}}^i\}_{i = i_0 + 1}^k$ is $(\sum_{i = i_0 +
    1}^k{\eps_i}, 0)$-differentially private. Then claim 3.~follows
  from
  \begin{equation*}
    \sum_{i = i_0 + 1}^k{\eps_i} < \eps \sum_{j = 2}^\infty{j^{-1.5}} < 2\eps.
  \end{equation*}
  This completes the proof of the lemma.
\end{proof}

\begin{theorem}[\textbf{Privacy}]
  Algorithm~\ref{alg:rangecount} preserves $((2\sqrt{6C}+2)\eps\sqrt{\ln 1/\delta},
  \delta)$-differential privacy.
\end{theorem}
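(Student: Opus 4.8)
My plan is to prove the theorem by induction on $n = |P|$, handling the recursive call on $\bar X$ via the inductive hypothesis and everything else via Lemma~\ref{priv-lemma}. Write $\eps^\ast := (2\sqrt{6C}+2)\eps\sqrt{\ln(1/\delta)}$ for the target parameter. A clean induction is available because \textsc{RangeCount} passes the \emph{same} $\eps,\delta$ to its recursive call, and because restricting a range space cannot increase its shatter function, so the constant $C$ from Lemma~\ref{lm:decomp} — and hence $\eps^\ast$ — is the same at every level. The base case $n\le 1$ is immediate: the output is a post-processing of a single $\Lap(1/\eps)$ draw, hence $(\eps,0)$-differentially private by Lemma~\ref{lm:laplace}, and $\eps\le\eps^\ast$ since $C\ge 1$ and $\ln(1/\delta)\ge 1$; the degenerate case $\bar X=\emptyset$, where the recursive call trivially returns all zeros, is $(\eps^\ast,\delta)$-private for the same reason.

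For the inductive step I would fix a point $p$ and a neighboring pair $\vec{x},\vec{x}' = \vec{x}\pm\vec{e_p}$. Since the decomposition of Lemma~\ref{lm:decomp}, and thus the partition $P = X\cup\bar X$, depends only on the public range space, $X$ and $\bar X$ are the same for $\vec{x}$ and $\vec{x}'$. The output is $\tilde{\vec z} = \tilde{\vec z}^1 + \tilde{\vec z}^2$, where $\tilde{\vec z}^1$ comes from the recursion on $\vec{x}|_{\bar X}$ and $\tilde{\vec z}^2 = \sum_i \vec{Q}^i\tilde{\vec y}^i$ is built from fresh, independent randomness. The structural observation driving the proof is that $\tilde{\vec z}^2$ depends on the input only through $\vec{x}|_X$ — the batch $i\le i_0$ uses $(\vec{T}^i|_X)(\vec{x}|_X)$, and the batch $i>i_0$ uses $(\vec{T}^i|_X)(\tilde{\vec x}^i|_X)$, where per-coordinate independence of the Laplace noise makes $\tilde{\vec x}^i|_X$ a function of $\vec{x}|_X$ alone — whereas $\tilde{\vec z}^1$ depends on the input only through $\vec{x}|_{\bar X}$; consequently $x_p$ affects only one of the two summands.

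If $p\in\bar X$, then $\tilde{\vec z}^2$ has the same distribution under $\vec{x}$ and $\vec{x}'$ (and is independent of the randomness of $\tilde{\vec z}^1$), while $\tilde{\vec z}^1$ is the output of \textsc{RangeCount} on the sub-instance $(\bar X,\range|_{\bar X},\eps,\delta)$ run on the neighbors $\vec{x}|_{\bar X},\vec{x}'|_{\bar X}$; since $|\bar X|\le n/2 < n$ by Claim~1 of Lemma~\ref{priv-lemma} and $\pi_{\range|_{\bar X}}(s)\le\pi_\range(s)$, the inductive hypothesis makes $\tilde{\vec z}^1$ $(\eps^\ast,\delta)$-differentially private with respect to $x_p$, and conditioning on the input-independent randomness of $\tilde{\vec z}^2$ (addition is post-processing) transfers this to $\tilde{\vec z}$. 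If $p\in X$, then symmetrically $\tilde{\vec z}^1$ has the same distribution under $\vec{x}$ and $\vec{x}'$, and I analyze $\tilde{\vec z}^2$: Claim~2 of Lemma~\ref{priv-lemma} gives that $\{\tilde{\vec y}^i\}_{i\le i_0}$ is $(2\sqrt{6C}\eps\sqrt{\ln(1/\delta)},\delta)$-private in $\vec{x}|_X$, and Claim~3 with $S=X$ gives that $\{\tilde{\vec x}^i|_X\}_{i>i_0}$ — hence its post-processing $\{\tilde{\vec y}^i\}_{i>i_0}$ — is $(2\eps,0)$-private in $\vec{x}|_X$; the two batches use independent randomness, so by Lemma~\ref{lm:simple_comp} the collection $\{\tilde{\vec y}^i\}_{i=1}^k$, and therefore its post-processing $\tilde{\vec z}^2$, is $(2\sqrt{6C}\eps\sqrt{\ln(1/\delta)}+2\eps,\delta)$-private in $\vec{x}|_X$. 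Using $2\eps\le 2\eps\sqrt{\ln(1/\delta)}$ (the standard regime $\delta\le 1/e$), this is $(\eps^\ast,\delta)$-private, and conditioning on the randomness of $\tilde{\vec z}^1$ transfers it to $\tilde{\vec z}$.

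Since $p$ (and the sign) was arbitrary, $\tilde{\vec z}$ is $(\eps^\ast,\delta)$-differentially private, proving the theorem. The step I expect to be the main obstacle — and the only ingredient not already packaged in Lemma~\ref{priv-lemma} — is the recursive bookkeeping: one must verify that because the non-recursive mechanisms touch the input only through $\vec{x}|_X$ and the recursion only through $\vec{x}|_{\bar X}$, composing across the $O(\log n)$ recursion levels does not amplify the privacy parameter (neither the $\sqrt{\ln(1/\delta)}$-scaled $\eps$ nor $\delta$ accumulates). All the quantitative work — the average-privacy-loss bound~(\ref{eq:avgprivloss}) and the advanced composition of Lemma~\ref{lm:composition} — has already been done inside Lemma~\ref{priv-lemma}.
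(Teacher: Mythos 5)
Your proof is correct and follows essentially the same route as the paper: induction on $n$, with $\tilde{\vec{z}}^2$ depending only on $\vec{x}|_X$ (handled by Claims 2 and 3 of Lemma~\ref{priv-lemma} plus simple composition) and $\tilde{\vec{z}}^1$ depending only on $\vec{x}|_{\bar X}$ (handled by the inductive hypothesis, using $|\bar X| < n$ from Claim~1), with disjointness of $X$ and $\bar X$ ensuring the two privacy parameters do not add. Your per-point argument ($p \in X$ versus $p \in \bar X$) is simply a more explicit rendering of the paper's terse ``since $X$ and $\bar{X}$ are disjoint'' step, and your explicit note of the $\delta \le 1/e$ regime to absorb the $+2\eps$ into $+2\eps\sqrt{\ln(1/\delta)}$ is a detail the paper leaves implicit.
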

\begin{proof}
  We proceed by induction on $n$. 

  \textbf{Base case}. When $n \leq 1$, the output of
  Algorithm~\ref{alg:rangecount} is $(\eps, 0)$-differentially private,
  since it is a function of $\tilde{\vec{x}}$, which is itself $(\eps,
  0)$-differentially private by the properties of the Laplace noise
  mechanism (Lemma~\ref{lm:laplace}).
  
  \textbf{Inductive step}. Note that $\tilde{\vec{z}}^2$ is a function
  of $\tilde{\vec{x}}|_{X}$ and $\{\tilde{\vec{y}}^i\}_{i =
    1}^{i_o}$. Also note that both $\tilde{\vec{x}}|_{X}$ and
  $\{\tilde{\vec{y}}^i\}_{i = 1}^{i_o}$ depend only on $X$ and not on
  $\bar{X}$. By simple composition (Lemma~\ref{lm:simple_comp}), and
  Lemma~\ref{priv-lemma}, $\tilde{\vec{z}}^2$
  is a $((2\sqrt{6C}+2)\eps\sqrt{\ln 1/\delta}, \delta)$-differentially private
  function of $\vec{x}|_{X}$. By Lemma~\ref{priv-lemma}, $\bar{X} <
  n/2$, so by the inductive hypothesis $\tilde{\vec{z}}^1$ is an
  $((2\sqrt{6C}+2)\eps\sqrt{\ln 1/\delta}, \delta)$-differentially private function
  of $\vec{x}|_{\bar{X}}$. Since $X$ and $\bar{X}$ are disjoint, it
  follows that $\tilde{\vec{z}} = \tilde{\vec{z}}^1 +
  \tilde{\vec{z}}^2$ is a $(6(\sqrt{C}+2)\sqrt{\ln 1/\delta},
  \delta)$-differentially private function of $\vec{x}$.     
\end{proof}

Next we analyze the approximation guarantee of the algorithm. The
bounds in following lemma can derived by a straightforward
calculation. 
\begin{lemma}
  \label{util-lemma}
  Let $\vec{y}^i = (\vec{T}^i|_X)(\vec{x}|_X)$. For each $j \in [m]$
  and each $i \leq i_0$, $\E[\vec{Q}^i_{j*}\tilde{\vec{y}}^i] =
  \vec{Q}^i_{j*}\vec{y}^i$, and
  $\Var[\vec{Q}^i_{j*}\tilde{\vec{y}}^i] = O(n^{1 - 1/d}/(\eps^2(i -
  i_0 + 1)^3))$. 

  Similarly, for each $j \in [m]$ and each $i > i_0$,
  $\E[\vec{Q}^i_{j*}\tilde{\vec{y}}^i] =  \vec{Q}^i_{j*}\vec{y}^i$,
  and  $\Var[\vec{Q}^i_{j*}\tilde{\vec{y}}^i] = O(n^{1 - 1/d}(i - i_0
  + 1)^3/(2^{i-i_0}\eps^2))$.
\end{lemma}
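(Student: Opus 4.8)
The plan is a direct second-moment computation that relies on exactly two structural facts from Lemma~\ref{lm:decomp}: each row of $\vec{Q}^i$ has at most two nonzero entries, each in $\{\pm 1\}$; and each row of $\vec{T}^i$ (hence of $\vec{T}^i|_X$) has at most $n2^{-i+1}$ nonzero entries. Fix $j\in[m]$ and $i$, and write $\vec{Q}^i_{j*}\tilde{\vec{y}}^i = \sigma_1\tilde y^i_{l_1} + \sigma_2\tilde y^i_{l_2}$ for some distinct coordinates $l_1,l_2$ and signs $\sigma_1,\sigma_2\in\{0,\pm1\}$. In both regimes the noise added to $\vec{y}^i$ is a sum of independent mean-zero $\Lap(1/\eps_i)$ variables, so $\E[\tilde{\vec{y}}^i] = \vec{y}^i$ and unbiasedness $\E[\vec{Q}^i_{j*}\tilde{\vec{y}}^i] = \vec{Q}^i_{j*}\vec{y}^i$ follows by linearity of expectation; the content of the lemma is the variance bound.

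For $i\le i_0$, by construction $\tilde y^i_l = y^i_l + \Lap(1/\eps_i)$ with the noise \emph{independent} across the coordinates $l$, so each $\Var[\tilde y^i_l] = 2/\eps_i^2$ and, by independence of $\tilde y^i_{l_1},\tilde y^i_{l_2}$, we get $\Var[\vec{Q}^i_{j*}\tilde{\vec{y}}^i]\le 4/\eps_i^2$. Substituting $\eps_i = \eps(i-i_0+1)^{1.5}/n^{1/2-1/2d}$ gives $\Var[\vec{Q}^i_{j*}\tilde{\vec{y}}^i]\le 4n^{1-1/d}/\bigl(\eps^2(i-i_0+1)^3\bigr) = O\bigl(n^{1-1/d}/(\eps^2(i-i_0+1)^3)\bigr)$.

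For $i>i_0$ we have $\tilde{\vec{y}}^i = (\vec{T}^i|_X)(\tilde{\vec{x}}^i|_X)$ with $\tilde{\vec{x}}^i = \vec{x} + \Lap(1/\eps_i)^n$, so $\tilde y^i_l = y^i_l + \sum_{p\in X:\, T^i_{lp}=1}\Lap(1/\eps_i)$ is a sum of at most $n2^{-i+1}$ independent $\Lap(1/\eps_i)$ terms, whence $\Var[\tilde y^i_l]\le 2n2^{-i+1}/\eps_i^2$. Two distinct coordinates $\tilde y^i_{l_1},\tilde y^i_{l_2}$ need not be independent here, since the corresponding rows of $\vec{T}^i$ may share support points, so I would simply use the crude bound $\Var[\sigma_1 a + \sigma_2 b]\le 2\Var[a] + 2\Var[b]$ to conclude $\Var[\vec{Q}^i_{j*}\tilde{\vec{y}}^i] = O(n2^{-i}/\eps_i^2)$. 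Finally, substituting $\eps_i = \eps/(i-i_0+1)^{1.5}$ and $2^{-i} = 2^{-i_0}2^{-(i-i_0)} = n^{-1/d}2^{-(i-i_0)}$ (using $i_0 = (\log n)/d$) yields $\Var[\vec{Q}^i_{j*}\tilde{\vec{y}}^i] = O\bigl(n^{1-1/d}(i-i_0+1)^3/(2^{i-i_0}\eps^2)\bigr)$, completing the lemma.

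The computation is essentially bookkeeping; the only point requiring a moment's care is the possible dependence between the two coordinates of $\tilde{\vec{y}}^i$ in the $i>i_0$ case, and since that dependence costs only a universal constant factor absorbed into the $O(\cdot)$, I do not expect a genuine obstacle. (One should also note that restricting $\vec{T}^i$ to columns in $X$ only decreases row support sizes, so the Lemma~\ref{lm:decomp} bounds apply verbatim to $\vec{T}^i|_X$.)
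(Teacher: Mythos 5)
Your proof is correct, and it is exactly the ``straightforward calculation'' the paper alludes to without spelling out: unbiasedness from linearity of expectation, and the variance bounds from the Laplace variance $2/\eps_i^2$, the row-support bounds on $\vec{T}^i$ and $\vec{Q}^i$ from Lemma~\ref{lm:decomp}, and substitution of the chosen $\eps_i$. Your crude $2\Var[a]+2\Var[b]$ bound for the $i>i_0$ case is sound; in fact the two rows $l_1,l_2$ entering $\vec{Q}^i_{j*}\tilde{\vec{y}}^i$ correspond to $F(R)$ and $G(R)$ for the same $R\in\mathcal{S}_i$, which are disjoint, so $\tilde y^i_{l_1}$ and $\tilde y^i_{l_2}$ are actually independent, but this refinement only saves a constant factor.
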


We're now ready to prove an approximation guarantee.
\begin{theorem}[\textbf{Utility}]
  The expected average squared error of Algorithm~\ref{alg:rangecount}
  is $O(n^{1 - 1/d}/\eps^2)$. With probability at least $1-\beta$,
  the worst-case squared error of Algorithm~\ref{alg:rangecount} is at most
  $O(n^{1 -1/d}\log(n/\beta)/\eps^2)$.
\end{theorem}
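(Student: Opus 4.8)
The plan is to reduce both statements to a single per-range estimate: for every $j \in [m]$ I will show that $\E\big[(\tilde{z}_j - (\vec{A}\vec{x})_j)^2\big] = O(n^{1-1/d}/\eps^2)$, and that $\tilde{z}_j - (\vec{A}\vec{x})_j$ is a fixed mean-zero linear combination of independent Laplace noises with bounded coefficients. The expected-average-squared-error bound is then immediate by averaging over $j$, and the worst-case bound follows from a sub-exponential tail inequality together with a union bound over the $m = n^{O(1)}$ ranges (recall $m \le \pi_\range(n) = O(n^d)$ with $d = O(1)$).

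First I would record that the estimator is unbiased. Restricting the columns of the decomposition $\vec{A} = \sum_i \vec{Q}^i\vec{T}^i$ from Lemma~\ref{lm:decomp} gives $\vec{A}|_X = \sum_i \vec{Q}^i(\vec{T}^i|_X)$, so with $\vec{y}^i = (\vec{T}^i|_X)(\vec{x}|_X)$, Lemma~\ref{util-lemma} gives $\E[\tilde{z}^2_j] = \sum_i\vec{Q}^i_{j*}\vec{y}^i = (\vec{A}|_X(\vec{x}|_X))_j$; and by induction on $n$ (base case $n \le 1$, where the output is $x_p + \Lap(1/\eps)$) the recursive estimate $\tilde{z}^1_j$ is unbiased for $(\vec{A}|_{\bar X}(\vec{x}|_{\bar X}))_j$. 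Since $X$ and $\bar X$ partition $P$, $\tilde{z}_j$ is unbiased for $(\vec{A}\vec{x})_j$; and since $\tilde{z}^1_j$ and $\tilde{z}^2_j$ are computed from disjoint, fresh sources of randomness, they are independent, so $\E[(\tilde{z}_j - (\vec{A}\vec{x})_j)^2] = \Var[\tilde{z}^1_j] + \Var[\tilde{z}^2_j]$ — which is the point that lets the recursion telescope.

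Next I would bound these two variances. The noise vectors $\tilde{\vec{y}}^i - \vec{y}^i$ are independent over $i$, so $\Var[\tilde{z}^2_j] = \sum_i \Var[\vec{Q}^i_{j*}\tilde{\vec{y}}^i]$; Lemma~\ref{util-lemma} bounds the $i$-th term by $O(n^{1-1/d}/(\eps^2(i_0-i+1)^3))$ for $i \le i_0$ and by $O(n^{1-1/d}(i-i_0+1)^3 2^{-(i-i_0)}/\eps^2)$ for $i > i_0$, and the two series are dominated by $\sum_{\ell \ge 1}\ell^{-3}$ and $\sum_{\ell \ge 1}\ell^3 2^{-\ell}$, giving $\Var[\tilde{z}^2_j] = O(n^{1-1/d}/\eps^2)$. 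For the recursive term, Lemma~\ref{priv-lemma} gives $|\bar X| < n/2$, and a restriction of $(P,\range)$ has shatter function at most $\pi_\range$, hence again $O(s^d)$ with the same implied constant, so the inductive hypothesis yields $\Var[\tilde{z}^1_j] \le c\,|\bar X|^{1-1/d}/\eps^2 \le c\,2^{-(1-1/d)} n^{1-1/d}/\eps^2$. Since $d \ge 2$ forces $2^{-(1-1/d)} \le 2^{-1/2} < 1$, a constant $c$ independent of $n$ absorbs the $O(n^{1-1/d}/\eps^2)$ coming from $\tilde{z}^2_j$ and closes the induction, so $\E[(\tilde{z}_j - (\vec{A}\vec{x})_j)^2] = O(n^{1-1/d}/\eps^2)$ for every $j$; averaging over $j \in [m]$ gives the first claim.

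Finally, for the worst-case bound I would unroll the recursion: $\tilde{z}_j - (\vec{A}\vec{x})_j$ is a sum, over the $O(\log n)$ recursion levels, of mutually independent level-contributions, and each level-contribution to coordinate $j$ is a linear combination of independent Laplace variables whose coefficients come from rows of $\vec{Q}^i$ (at most $2$ nonzeros) and $\vec{T}^i$ (at most $n_\ell 2^{-i+1}$ nonzeros), and whose scales $1/\eps_i$ are everywhere at most $n^{1/2-1/2d}/\eps$. Thus $\tilde{z}_j - (\vec{A}\vec{x})_j$ is a mean-zero weighted sum of independent Laplace variables of total variance $O(n^{1-1/d}/\eps^2)$ and maximum scale $O(n^{1/2-1/2d}/\eps)$, and a Bernstein-type inequality for sums of independent sub-exponential variables, followed by a union bound over the $m = n^{O(1)}$ ranges, yields that with probability at least $1-\beta$ the worst-case squared error is $O(n^{1-1/d}\log(n/\beta)/\eps^2)$. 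The hard part will be the bookkeeping in this last step — enumerating exactly which Laplace variables feed a fixed coordinate across all recursion levels and checking that the largest scale remains $O(\sqrt{n^{1-1/d}}/\eps)$, so that the tail bound costs at most a logarithmic factor; the only other subtlety is that the variance recursion closes with a single uniform constant only because the exponent $1-1/d$ is strictly less than $1$.
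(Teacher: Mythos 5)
Your proposal is correct and takes essentially the same approach as the paper: split $\tilde{z}_j=\tilde{z}^1_j+\tilde{z}^2_j$, use independence to add variances, invoke Lemma~\ref{util-lemma} for the per-level variance $O(n^{1-1/d}/\eps^2)$, and close the recursion on $|\bar X|\le n/2$ (the paper writes this as $E(n)=E(n/2)+O(n^{1-1/d}/\eps^2)$, which converges precisely because $1-1/d>0$, i.e.\ the same geometric decay you observe). Your per-coordinate inductive bookkeeping, explicit unbiasedness check, and the Bernstein-plus-union-bound sketch for the worst case are more detailed than the paper's terse treatment but are not a different argument.
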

\begin{proof}
  Let $\vec{z}^2 = \sum_{i = 1}^k{\vec{Q}^i\vec{y}^i}$. Note that all
  $\tilde{\vec{y}}^i$ have indepedentent noise. Then, by
  Lemma~\ref{util-lemma}, for each
  $j \in [m]$, $\E[\tilde{z}_j^2] = z_j^2$ and $\Var[\tilde{z}_j^2] =
  O(n^{1- 1/d}/\eps^2)$
  The expected total squared error of Algorithm~\ref{alg:rangecount} is,
  by linearity of expectation $\sum_j{\Var[\tilde{z}_j]}$. Since
  $\vec{\tilde{z}}^1$ is independent from $\tilde{\vec{z}}^2$, we have
  $\sum_j{\Var[\tilde{z}_j]} = \sum_j{\Var[\tilde{z}^1_j]} +
  \sum_j{\Var[\tilde{z}^2_j]}$.  
  By claim 1.~in Lemma~\ref{priv-lemma}, the first term is the result of a
  recursive call on input of size at most $n/2$. We can express the
  expected squared error as a function $E(n)$ recursively as $E(n) =
  E(n/2) + O(n^{1-1/d}/\eps^2)$ which is easily seen to resolve to
  $E(n) = O(n^{1 - 1/d}/\eps^2)$. 

  The worst-case guarantee can be derived by standard use of tail
  bounds for sums of Laplace random variables.
\end{proof}

\section{Extensions}
\label{sect:extensions}

Algorithms for halfspace range counting can be derived from several
other methods, each of which provides weaker noise guarantees and/or less
generality.

The partition trees of Chan~\cite{chan2010optimal} imply a way to
factor the incidence matrix $\vec{A}$ of a range space induced by
$d$-dimensional halfspaces into matrices $\vec{Q}$ and $\vec{D}$ such
that $\vec{A} = \vec{QD}$, each column in $\vec{D}$ has at most
$O(\log \log n)$ nonzero elements, each row in $\vec{Q}$ has at most
$O(n^{1 - 1/d})$ nonzero elements, and $\vec{Q}$ and $\vec{D}$ both
have elements bounded in absolute value by $1$. Using
Lemma~\ref{lm:composition}, we can add Laplace noise with variance
$O(\frac{1}{\eps^2}\log \log n)$ to each element of $\vec{Dx}$,
preserving $(\eps\sqrt{\ln 1/\delta}, \delta)$ privacy. We can then
bound the variance of this mechanism to argue that, with constant
probability, the average squared error is $O(\frac{1}{\eps^2}n^{1 -
  1/d}\log \log n)$ and the worst case squared error is
$O(\frac{1}{\eps^2}n^{1 - 1/d}\log n \log \log n)$.

Welzl~\cite{Welzl:1992:STL:647823.736532}, and Chazelle and Welzl~\cite{Chazelle:1989:QRS:82362.82366} gave an algorithm that, given a set of points $P$
in $\mathbb{R}^d$, computes a spanning path such that any hyperplane
intersects the path in at most $O(n^{1 - 1/d})$ components. Then the
intersection of any halfspace with $P$ can be represented as the union
of $O(n^{1 - 1/d})$ disjoint intervals on the spanning path. An
algorithm for privately computing interval counting queries, e.g.~the
algorithm from~\cite{chan2010private}, can be used with the spanning
path as input, giving average squared error $O(\frac{1}{\eps^2}n^{1 -
  1/d}\log n)$ and worst case squared error $O(\frac{1}{\eps^2}n^{1 -
  1/d}\log^2 n)$. Interestingly, the spanning path approach generalizes
to range spaces whose \emph{dual} shatter function is bounded by a
polynomial with exponent $d$.

There is a well-known connection between combinatorial discrepancy and
epsilon approximations (c.f.~\cite{matousek2010geometric}, Chapter
1). Let $(P, \range)$ be a range space such that the maximum
discrepancy over all restrictions of $\range$ to a size $s$ subset of
$P$ is $f(s)$ (this is the same $f(s)$ as in Section~\ref{sect:defn}). Under some reasonable assumptions on the range space,
there exists a subset $S$ of $P$ of size $s$ such that range counts on
$S$ are close to range counts on $P$ to within an additive
$\frac{n}{s}f(s)$. Using this fact, and the
discrepancy upper bound for range spaces with shatter function
exponent $d$, we can apply the median mechanism of Roth and
Roughgarden~\cite{Roth2010} with the new analysis in~\cite{Gupta2011a}
to obtain a squared error upper bound that depends on $n$ as $O(n^{2d/(2d +
  1)})$. This upper bound is suboptimal; for example, for $d = 2$, it
yields an upper bound of $n^{4/5}$ as opposed to the optimal
$n^{1/2}$. Nevertheless, this method still gives squared error bounds that
grow slower than $n$ for range system with polynomial
shatter function. It also extends to the case where the universe
is much larger than $\|\vec{x}\|_1$. Giving optimal or near optimal
error upper bounds in this large universe regime is an interesting
open problem.
\junk{
\subsection{The Large Universe Case}

The special case of the range count problem when the universe $P$ is
much larger than the ``database size'' $\|\vec{x}\|_1$ (assumed to be
public information in this case) has attracted special attention in
the privacy literature.  Algorithm~\ref{alg:rangecount} gives a tight
noise upper for ranges with bounded shatter function bound when the
universe size is equal or almost equal to the database size. In this
section, we show how to privately reduce the universe size to be
almost linear in the database size. 

Only for this section, let us denote the universe size $|P| = N$ and
the database size $\|\vec{x}\|_1 = n$. Our main result in this section
applies beyond range spaces with bounded shatter function\junk{ and in fact
allows reducing the universe size to be almost linear in the database
size}. 

\begin{algorithm}
  \caption{\textsc{ExpSample}($P, \vec{x}, \eps, \beta$)}\label{alg:rejsample}
  \begin{algorithmic}
    \STATE Set $k = n\left(1 + \frac{2}{\eps}\ln \frac{N}{\beta}\right) + 1$;
    \STATE $P' = \emptyset$;
    \WHILE{$|P'| < k$}
       \STATE Sample $p \in P$ uniformly at random;
       \IF{$p \in P'$}
          \STATE Reject and continue;
       \ELSIF{$x_p \neq 0$}
          \STATE add $p$ to $P'$ with probability $e^\eps/(1+e^\eps)$;
       \ELSE
          \STATE add $p$ to $P'$ with probability $1/(1+e^\eps)$;
       \ENDIF
    \ENDWHILE
  \end{algorithmic}
\end{algorithm}

\begin{theorem}
  \label{thm:large2small}
  Given a set $P$ such that $|P| = N$ and an input $\vec{x}$ such that
  $\|\vec{x}\|_1 \leq n$ (where both $N$ and $n$ are public
  parameters), there exists an $(\eps, 0)$-differentially private
  mechanism $\mech$ that with probability $1-\beta$ output a set $P'
  \subseteq P$ such that
  \begin{itemize}
  \item $|P'| = O(\frac{n}{\eps}\log \frac{N}{\beta})$
  \item $x_p \neq 0 \Rightarrow p \in P'$
  \end{itemize}
  Moreover, $\mech$ runs in time $\poly(n, \log N)$ when given access
  to the nonzero elements of $\vec{x}$ in a compact representation and
  an oracle that provides uniform random samples from $P$.
\end{theorem}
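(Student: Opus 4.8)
\emph{Overview.} The plan is to take $\mech$ to be \textsc{ExpSample} (Algorithm~\ref{alg:rejsample}) and to verify its output size, running time, privacy, and accuracy in turn. The size bound is immediate: the loop exits exactly when $|P'| = k = n\big(1 + \tfrac{2}{\eps}\ln\tfrac{N}{\beta}\big) + 1 = O(\tfrac{n}{\eps}\log\tfrac{N}{\beta})$, so the first bullet holds with probability $1$. For the running time, note that given the current $P'$ an iteration adds a point with probability at least $(1 - |P'|/N)\cdot\tfrac{1}{1+e^\eps} \ge \tfrac{1}{2(1+e^\eps)}$ as long as $|P'| \le k \le N/2$; a Chernoff bound then shows $O\big((1+e^\eps)(k + \log\tfrac1\beta)\big)$ iterations suffice, except with probability $\beta$, to accumulate $k$ points. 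Each iteration costs one oracle call, one membership test against $P'$ (kept as a hash table of size $\le k$), one test of $x_p \neq 0$ against the compact list of nonzeros, and one coin flip, so the total time is $\poly(n, \log N)$ (treating $1/\eps$ and $\log(1/\beta)$ as part of the input).

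\emph{Privacy.} I would recast \textsc{ExpSample} using independent exponential clocks: attach to each $q \in P$ an independent exponential $T_q$ of rate $\lambda_q := \tfrac{1}{N}\cdot\tfrac{e^\eps}{1+e^\eps}$ if $x_q \neq 0$ and $\lambda_q := \tfrac{1}{N}\cdot\tfrac{1}{1+e^\eps}$ otherwise. A memorylessness argument shows the sequence of accepted points of \textsc{ExpSample} is distributed exactly as the order statistics of the $T_q$, so $P'$ is the set of the $k$ points with smallest $T_q$, and $\Pr[P' = S] = \Lambda_{\bar S}\int_0^\infty e^{-\Lambda_{\bar S}t}\prod_{q \in S}(1 - e^{-\lambda_q t})\,dt$ with $\Lambda_{\bar S} := \sum_{q \notin S}\lambda_q$. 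For neighbors $\vec x, \vec x'$ a change among already-nonzero coordinates leaves every $\lambda_q$ fixed, so the only case is when some $p^*$ switches between zero and nonzero, which scales $\lambda_{p^*}$ by $e^{\pm\eps}$ and leaves the rest unchanged. If $p^* \in S$, only the factor $(1 - e^{-\lambda_{p^*}t})$ changes, and the elementary inequality $\tfrac{1-e^{-ct}}{1-e^{-t}} \in [\min(1,c),\max(1,c)]$ shows $\Pr[P'=S]$ changes by a factor in $[e^{-\eps},e^\eps]$. If $p^* \notin S$, only $\Lambda_{\bar S}$ changes and, since $\lambda_{p^*} \le \Lambda_{\bar S}$, its new value stays within a factor $e^{\pm\eps}$ of the old; rescaling the integration variable and using monotonicity of $m \mapsto \Pr[\max_{q \in S}T_q \le m]$ gives one direction of the bound at once, and for the other I would expand $\prod_{q\in S}(1-e^{-\lambda_q t}) = \sum_{A \subseteq S}(-1)^{|A|}e^{-\lambda_A t}$ (with $\lambda_A := \sum_{q\in A}\lambda_q$) into $\Pr[P'=S] = \sum_{A \subseteq S}(-1)^{|A|}\tfrac{\Lambda_{\bar S}}{\Lambda_{\bar S}+\lambda_A}$ and bound the ratio term by term.

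\emph{Accuracy, and the main obstacle.} It remains to show $\Pr[\exists p: x_p \neq 0,\ p \notin P'] \le \beta$, and this is the heart of the matter. I would union-bound over the $\le n$ nonzero coordinates and, for a fixed nonzero $p$, bound the chance that \textsc{ExpSample} reaches its $k$-th acceptance before accepting $p$ --- equivalently, that the clock $T_p$ (of the larger rate) fails to rank among the $k$ smallest of the $T_q$ --- by $\beta/n$; this is the one place where the exact formula for $k$ is used quantitatively, comparing the $k$-th order statistic of $N$ nearly-uniform exponential clocks with a single slightly faster one. I expect this step to be the crux: it must be tight enough to keep $|P'|$ as small as claimed, and it is precisely where accuracy is played off against the privacy slack established above (the delicate reverse direction of the $p^* \notin S$ case sits at the same trade-off), so making privacy, accuracy, and the size bound hold \emph{simultaneously} --- not just pairwise --- is the real content of the statement.
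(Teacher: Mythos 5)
Your proposal analyzes \textsc{ExpSample} directly via exponential clocks, whereas the paper casts the mechanism as an instance of the exponential mechanism with a quality function, applies the standard utility bound together with the estimate $\binom{N}{k}/\binom{N}{k-n}\le N^n$, and then asserts that \textsc{ExpSample} implements that distribution. Your direct route is a reasonable instinct (sequential weighted sampling without replacement is in fact \emph{not} the exponential-mechanism distribution $\Pr[S]\propto\prod_{p\in S}w_p$, so the paper's implementation claim deserves scrutiny), but you explicitly leave the accuracy step open, and that is precisely where the argument cannot be completed with the stated $k$.

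Push the clock comparison one step and it fails. All but at most $n$ of the $N$ clocks have rate $\tfrac{1}{N(1+e^\eps)}$, so for $k\ll N$ the $k$-th order statistic is at time $\Theta\bigl(k(1+e^\eps)\bigr)$. A nonzero $p$'s clock has rate only $e^\eps$ times larger, hence mean $\Theta\bigl(N(1+e^\eps)/e^\eps\bigr)$; for $T_p$ to land among the $k$ smallest with probability $1-\beta/n$ one needs $k=\Omega(N/e^\eps)$, not $k=O(\tfrac{n}{\eps}\log\tfrac{N}{\beta})$. Equivalently, per acceptance \textsc{ExpSample} picks a nonzero point only about an $\tfrac{ne^\eps}{N}$ fraction of the time --- the acceptance-probability ratio is only $e^\eps$ while zeros outnumber nonzeros by a factor of roughly $N/n$ --- so the expected number of nonzeros among $k$ accepted points is $\Theta(ke^\eps n/N)=o(n)$ once $N\gg ne^\eps\log(N/\beta)/\eps$. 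The ``play-off'' you anticipated between privacy slack and the size bound cannot be won: the second bullet of the statement fails for \textsc{ExpSample} with the claimed $k$. A smaller, separate issue: in your $p^*\notin S$ privacy case the inclusion--exclusion expansion has alternating signs, so bounding the ratio ``term by term'' is not a valid step; you would need a coupling of the integration variable (for instance, rescaling $t$ by $e^{\pm\eps}$ and using monotonicity of $\prod_{q\in S}(1-e^{-\lambda_q t})$) to get both directions.
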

\begin{proof}
  We will use the exponential mechanism to choose the subset $P'$ of
  $P$. Let us leave $k = |P'|$ as a parameter for now. Given a subset
  $P'$, we define the quality function of the exponential mechanism as
  \begin{equation}
    \label{eq:quality}
    q(P', \vec{x}) := |\{p \in P': x_p = 0\}|.
  \end{equation}
  At least $k - q(P', \vec{x})$ components of $x|_{P'}$ are nonzero,
  but, since $\vec{x}$ is an integer vector, there are at most $n$
  points $p$ such that $x_p \neq 0$. Therefore, if $n < k - q(P',
  \vec{x})$, then for all $p \in P$ such that $x_p \neq 0$, we have $p
  \in P'$. It remains to find a value for $k$ such that $n < k - q(P',
  \vec{x})$ with probability at least $1-\beta$.

  There are at least ${N \choose k-n}$ sets $P'$ of size $k$ that
  achieve the optimal quaility function value $q(P', \vec{x}) = 0$. 
  By Lemma~\ref{lm:exp-util}, with probability $1-\beta$, the
  exponential mechanism, when run with privacy parameter $\eps$,
  samples a set $P'$ such that 
  \begin{align}
    q(P', \vec{x}) &< \frac{2}{\eps}\left[\ln {N \choose
      k} - \ln {N \choose k-n} +  \ln\frac{1}{\beta}\right]\\ 
  &< \frac{2}{\eps}n\ln \frac{N}{\beta}.
  \end{align}
  Therefore, with probability $1 - \beta$,
  \begin{equation}
    k - q(P', \vec{x}) > k - \frac{2}{\eps}n\ln \frac{N}{\beta}.
  \end{equation}
  Taking $k > n(1 + \frac{2}{\eps}\ln \frac{N}{\beta})$ we have that
  with probability at least $1 - \beta$, $k - q(P', \vec{x}) > n$, as
  desired. 
  
  Next we argue that the exponential mechanism with quality function
  as defined in (\ref{eq:quality}) can be implemented in polynomial
  time. One implementation, based on rejection sampling is described
  in Algorithm~\ref{alg:rejsample}. Counting each sampling as a
  constant time operation,  the expected running
  time of the algorithm is $O(|P'|)$: each sampled point is rejected
  with probability $\Omega(1)$. Also, it's not hard to verify that
  Algorithm~\ref{alg:rejsample} samples from the desired distribution. 
\end{proof}

The fact that $\mech$ of Theorem~\ref{thm:large2small} runs in
polynomial time does not contradict the hardness results
of~\cite{Dwork:2009:CDP:1536414.1536467}, because of the oracle
assumption. It is easy to check that standard cryptographic
assumptions (i.e. existence of one-way functions) do not permit the
existence of a sampling oracle with running time sublinear in $N$ for
the hard-to-sanitize databases constructed in
in~\cite{Dwork:2009:CDP:1536414.1536467}. Nevertheless, we feel that
it is reasonable to assume that such an oracle exists for natural
geometric databases. For example, when $P$ is a $d$-dimensional grid,
we can sample uniformly in time $O(d\log N)$.
}

\section{Concluding Remarks}
While predicate count queries ($\vec{A}\vec{x}$)  have been studied in
differential privacy before, we make one of the first significant
progress in understanding the complexity of the problem in terms of
the combinatorial properties of $\vec{A}$, in particular for
halfspace, orthogonal and other range count queries. Our main result
is tight upper and lower bounds on approximation of $(\epsilon, \delta)$
differentially private halfspace count queries. Our approach is via a
variation of discrepancy. The main problems we leave open are to get tight
bounds for orthogonal counts with $(\epsilon, \delta)$-differential
privacy and to extend our bounds to the large universe regime.

\section*{Acknowledgements} 

We would like to thank Guy Rothblum, Kobbi Nissim, and Aaron Roth for
helpful discussions, and the anonymous reviewers for useful
suggestions and corrections.

This material is based upon work supported by the National Science
Foundation under Grant No.~0916782.

\bibliographystyle{plain}
\bibliography{privacy}
\end{document}